%%%%%%%%%%%%%%%%%%%%%%%%%%%%%%%%%%%%%%%%%
\documentclass[10pt,reqno]{amsart}

%%%%%%%%%%%%%%%%%%%%%%%

%%%%%%%%%%%%%%%%%%%%%%%

%--- Packages ---
\usepackage{amsfonts,amssymb,amsmath,amsopn,amsthm,graphicx}
\usepackage{amsxtra, mathrsfs}
\usepackage{amsbsy,dsfont}
\usepackage{colordvi}
\usepackage[usenames,dvipsnames]{color}
\usepackage{mathtools}
\usepackage[colorlinks=true]{hyperref}
\usepackage{enumitem}
\usepackage{bbm} % to get \1
\usepackage[parfill]{parskip}    % Activate to begin paragraphs with an empty line rather than an indent
\usepackage{lmodern} % to use \textsf in theorem environment
%%%%%%%%
%\usepackage[notref,notcite]{showkeys}
%%%%%%%%%

%--- Page structure ---

%\addtolength{\hoffset}{-2cm}
%\addtolength{\textwidth}{4cm}

%\linespread{1} 

\setlength{\voffset}{-.7truein}
\setlength{\textheight}{8.8truein}
\setlength{\textwidth}{6.6truein}
\setlength{\hoffset}{-.7truein}

%--- Theorem structure ---

\newtheorem{theorem}{Theorem}[section]
\newtheorem{corollary}[theorem]{Corollary}
\newtheorem{lemma}[theorem]{Lemma}
\newtheorem{proposition}[theorem]{Proposition}

\theoremstyle{definition}

\newtheorem{remark}[theorem]{Remark}
\newtheorem{remarks}[theorem]{Remarks} 
\newtheorem{assumption}[theorem]{Assumption}

%--- Settings ---

\numberwithin{equation}{section}

%--- Commands and math operators ---

\newcommand{\1}{\mathbbm{1}}

\newcommand{\C}{\mathbb{C}}

\newcommand{\const}{\mathrm{const}\ }

\renewcommand{\epsilon}{\varepsilon}

\newcommand{\Lp}{\textsf{L}}

\newcommand{\h}{\mathcal{H}}

\newcommand{\loc}{{\rm loc}}

\renewcommand{\phi}{\varphi}
\newcommand{\rr}{\mathcal{R}}
\newcommand{\R}{\mathbb{R}}

\newcommand{\s}{\mathcal{S}}

\newcommand{\Z}{\mathbb{Z}}

\newcommand{\m}{{\scriptscriptstyle-}}
\newcommand{\pp}{{\scriptscriptstyle+}}
\newcommand{\ppm}{{\scriptscriptstyle\pm}}
\newcommand{\mpp}{{\scriptscriptstyle\mp}}

\DeclareMathOperator{\re}{Re}
\DeclareMathOperator{\spec}{spec}

\DeclareMathOperator{\sgn}{sgn}
\DeclareMathOperator{\tr}{Tr}

\DeclareMathOperator*{\essinf}{ess\,inf}
\DeclareMathOperator*{\esssup}{ess\,sup}

%--- Colors ---

%%%%%%%%%%%%%%%%%%%%%%%%%%%%%%%%%%%%%%%%%%%%%%%%%%%%%%%%%%%%%%%%%%%%%%%%%%
%%%%%%%%%%%%%%%%%%%%%%%%%%%%%%%%%%%%%%%%%%%%%%%%%%%%%%%%%%%%%%%%%%%%%%%%%%

\begin{document}

\title[Lieb--Thirring inequality for the 2D Pauli operator]{Lieb--Thirring  inequality for the 2D Pauli operator}

\author{Rupert L. Frank}
\address[Rupert L. Frank]{Mathematisches Institut, Ludwig-Maximilians-Universit\"at M\"unchen, Theresienstr. 39, 80333 M\"unchen, Germany, and Mathematics 253-37, Caltech, Pasa\-de\-na, CA 91125, USA}
\email{r.frank@lmu.de, rlfrank@caltech.edu}

\author {Hynek Kova\v{r}\'{\i}k}
\address [Hynek Kova\v{r}\'{\i}k]{DICATAM, Sezione di Matematica, Universit\`a degli studi di Brescia, Italy}
\email {hynek.kovarik@unibs.it}

\thanks{\copyright\, 2024 by the authors. This paper may be reproduced, in its entirety, for non-commercial purposes.\\
	Partial support through US National Science Foundation grant DMS-1954995 (R.L.F.), as well as through the Deutsche Forschungsgemeinschaft (DFG, German Research Foundation) through Germany’s Excellence Strategy EXC-2111-390814868 (R.L.F.) is acknowledged.}

\begin{abstract}
	By the Aharonov--Casher theorem, the Pauli operator $P$ has no zero eigenvalue when the normalized magnetic flux $\alpha$ satisfies $|\alpha|<1$, but it does have a zero energy resonance. We prove that in this case a Lieb--Thirring inequality for the $\gamma$-th moment of the eigenvalues of $P+V$ is valid under the optimal restrictions $\gamma\geq |\alpha|$ and $\gamma>0$. Besides the usual semiclassical integral, the right side of our inequality involves an integral where the zero energy resonance state appears explicitly. Our inequality improves earlier works that were restricted to moments of order $\gamma\geq 1$.
\end{abstract}

\maketitle

\section{Introduction and main result}

\subsection{Background}
\label{ssec-background}
We are interested in quantitative information on the negative eigenvalues of the operator
$$
P + V
\qquad\text{in}\ \Lp^2(\R^2,\C^2) \,,
$$
where $P$ is the Pauli operator,
\begin{equation}
	\label{eq:pauli}
	P =  \begin{pmatrix}
		H^+ & 0 \\
		0  &   H^-
	\end{pmatrix}\,  , \qquad H^\ppm =  (-i\nabla +A)^2  \pm  B\,.
\end{equation}
Here $A:\R^2\to\R^2$ is a vector field and the function $B:\R^2\to\R$ is defined by
$$
B = {\rm curl}\, A = \partial_1 A_2 - \partial_2 A_1 \,.
$$
For simplicity we restrict ourselves to the case where $V:\R^2\to\R$ is scalar, that is, acts trivially on the $\C^2$ part of $\Lp^2(\R^2,\C^2)$. Both $B$ and $V$ are assumed to be sufficiently regular and to decay in a suitable sense at infinity, as will be made precise later on.

Physically, the operator $P+V$ describes a quantum particle moving in a plane in the presence of a magnetic field of strength $B$ pointing orthogonal to this plane and in the presence of an electric field with potential $V$. The matrix structure of $P$ and the $\pm B$ term in $P$ come from the interaction of the spin of the particle with the magnetic field. This spin-orbit coupling is neglected when considering the magnetic Schr\"odinger operator. This simplifies the model, but has the effect of destroying some of the structure of the Pauli operator. In particular, zero modes are removed and the bottom of the spectrum is stabilized. In our study we will \emph{not} neglect the spin-orbit coupling and we will pay special attention to effects coming from the low energy part of the operator $P$.

When $B$ and $V$ are sufficiently regular and sufficiently fast decaying (we will be more precise later on), the differential expression $P+V$ can be realized as a self-adjoint, lower bounded operator in the Hilbert space $\Lp^2(\R^2,\C^2)$ and the negative spectrum of this operator consists only of eigenvalues with finite multiplicities and with zero as their only possible accumulation point. Labelling these eigenvalues as $E_j$, where multiplicities are taken into account, we are interested in bounding sums
$$
\sum_j |E_j|^\gamma = \tr(P+V)_\m^\gamma
$$
from above for different choices of the parameter $\gamma>0$. These upper bounds shall involve integrals over $\R^2$ of powers of $V$ and quantities defined in terms of the magnetic field $B$. The prototype of such bounds are the Lieb--Thirring inequalities, which in the nonmagnetic case state that for any $\gamma >0$ there is a universal constant $L_\gamma$ such that for all real $V \in \Lp^1_{\rm loc}(\R^2)$ one has
\begin{equation} \label{lt-2D}
\tr (-\Delta +V)_\m^\gamma\, \leq \, L_{\gamma} \int_{\R^2} V(x)_\m^{\gamma+1} \, dx \,.
\end{equation}
Here $a_\ppm:=\max\{\pm a,0\}$, so that $a=a_\pp-a_\m$. The bound \eqref{lt-2D} goes back to the work of Lieb and Thirring \cite{lt2} and has created a huge literature. For further reading on this topic we refer to the monograph \cite{flw-book}, the review \cite{Fr2} and references therein. 

One feature about \eqref{lt-2D} that will be relevant for our discussion is that the inequality gets stronger as $\gamma$ gets smaller. This is formalized by the Aizenman--Lieb argument \cite{al} (see also \cite[Lemma 5.2]{flw-book}), which says that the validity of inequality \eqref{lt-2D} for some $\gamma=\gamma_0$ implies its validity for all $\gamma\geq\gamma_0$.

Turning our attention back to magnetic fields, it is not difficult to see that \eqref{lt-2D} remains valid when $-\Delta$ is replaced by $(-i\nabla+A)^2$; see \cite[Theorem 4.61]{flw-book}, \cite{Fr} and references therein. More precisely, for any $\gamma>0$ there is a constant $\tilde L_\gamma$ such that for any real $V\in \Lp^1_{\rm loc}(\R^2)$ and any $A\in \Lp^2_{\rm loc}(\R^2,\R^2)$ the analogue of \eqref{lt-2D} holds with constant $\tilde L_\gamma$. Note that the right side in the resulting inequality is independent of $A$.

The situation is quite a bit more complicated for the Pauli operator, that is, when the spin-orbit coupling is taken into account. There have been many works addressing this question and we will review them in some detail later in this introduction. For the present discussion the following two features are important. First, there can be no bound of the form \eqref{lt-2D} with a right side that is independent of $A$. Second, previous works are restricted to the range $\gamma\geq 1$. Both phenomena are related to the existence of zero modes of the Pauli operator. The existence of the latter and their structure is described by the Aharonov--Casher theorem~\cite{ac}.

What we shall show in the present paper is that if the normalized magnetic flux
\begin{equation}
	\label{eq:alphaflux}
	\alpha:= \frac{1}{2\pi} \int _{\R^2}B(x)\, dx < \infty 
\end{equation}
satisfies
\begin{equation}
	\label{eq:fluxineq}
	|\alpha|<1 \,,
\end{equation}
then a Lieb--Thirring inequality holds for $P+V$ whenever $\gamma\geq \alpha$ and $\gamma>0$. Moreover, we shall show that this restriction on $\gamma$ is optimal.

According to the Aharonov--Casher theorem \cite{ac} (see also \cite{cfks}), assumption \eqref{eq:fluxineq} implies that the Pauli operator $P$ does not have a zero eigenvalue. Heuristically, this eliminates the reason for the restriction $\gamma\geq 1$ in earlier works. The Pauli operator $P$ does, however, have a zero energy resonance, that this, there is a function $\psi_0$ with $P\psi_0=0$ which decays at infinity, but not fast enough to be square integrable. The decay of this function will be what dictates the optimal condition $\gamma\geq \alpha$ on the exponent in the Lieb--Thirring inequality. Our Lieb--Thirring inequality will have two terms on the right side, the first one being the standard term from \eqref{lt-2D} and the second one involving explicitly the resonance function $\psi_0$.

%%%%%%%%%%%%%%%%%%%%%%%%

\subsection{Definitions and main result}

We now turn to a precise formulation of our result, beginning with a careful definition of the Pauli operator $P$. The standard definition of $P$ assumes that $A\in \Lp^2_{\rm loc}(\R^2,\R^2)$ and proceeds from the quadratic forms
$$
\int_{\R^2} |(\Pi_1+i\Pi_2)\psi^\pp|^2\,dx + \int_{\R^2} |(\Pi_1-i\Pi_2)\psi^\m|^2\,dx
$$
where $\Pi_j :=-i\partial_j + A_j$ and where the form is defined for all $(\psi^\pp,\psi^\m)\in\Lp^2(\R^2,\C^2)$ for which the distributions $(\Pi_1+i\Pi_2)\psi^\pp$ and $(\Pi_1-i\Pi_2)\psi^\m$ belong to $\Lp^2(\R^2)$. We will \emph{not} adapt this definition, although the one we choose is equivalent to this standard definition in situations with enough regularity. The reason is that our assumptions are more naturally formulated in terms of the magnetic field $B$ (and a quantity defined in terms of it) rather than in terms of the vector potential $A$, on which the standard definition is based.

The approach that we follow was promoted by Erd\H{o}s and Vougalter and investigated in detail in their paper \cite{ErVo}. To motivate it, we assume that there is a real function $h\in\Lp^2_{\rm loc}(\R^2)$ such that $A_1=-\partial_2 h$ and $A_2=\partial_1 h$. Then a computation shows that
\begin{align*}
	\int_{\R^2} |(\Pi_1+i\Pi_2)\psi^\pp|^2\,dx & = \int_{\R^2} e^{2h} |(\partial_{1} + i \partial_{2}) e^{-h} \psi^\pp |^2\, dx \,, \\
	\int_{\R^2} |(\Pi_1-i\Pi_2)\psi^\m|^2\,dx & = \int_{\R^2} e^{-2h} |(\partial_{1} - i \partial_{2}) e^{h} \psi^\m |^2\, dx \,.
\end{align*}
The basic idea is to use the right sides to \emph{define} the Pauli operator. Note that if such a function $h$ exists, then $\Delta h = \partial_1 A_2 -\partial_ 2 A_1 = B$.

We now proceed to the actual definition of $P$, following \cite{ErVo}. We assume that $\mu$ is a signed real regular Borel measure on $\R^2$ with $\mu(\{x\})=0$ for all $x\in\R^2$. Then, by \cite[Theorem 2.7]{ErVo} for any $p<2$ there is an $h\in W^{1,p}_{\rm loc}(\R^2)$ such that 
$$
\Delta h = \mu 
\qquad\text{in}\ \R^2 \,.
$$
Fixing any such $h$, the quadratic form
\begin{equation}
	\label{eq:formsev}
	\int_{\R^2} e^{2h} \, |(\partial_{1} + i \partial_{2}) e^{-h} \psi^\pp |^2\, dx  + \int_{\R^2} e^{-2h} \, |(\partial_{1} - i \partial_{2}) e^h \psi^\m |^2\, dx \,,
\end{equation}
defined for all $(\psi^\pp,\psi^\m)\in \Lp^2(\R^2,\C^2)$ for which the integrals are finite, is nonnegative and closed in $\Lp^2(\R^2,\C^2)$ \cite[Theorem 2.5]{ErVo} and therefore generates a selfadjoint, nonnegative operator $P$ in $\Lp^2(\R^2,\C^2)$. This operator depends on the choice of the function $h$, but one can show that for two different choices of functions $h$ the resulting operators are unitarily equivalent by a gauge transformation \cite[Theorem 2.5]{ErVo}. Clearly, for two functions $h$ differing by an additive constant the corresponding operators coincide. Moreover, if there is an $A\in \Lp^2_{\rm loc}(\R^2,\R^2)$ with $\partial_1 A_2 - \partial_2 A_1 = \mu$ (in the sense of distributions), then the operator $P$ is unitarily equivalent to the Pauli operator defined via the standard approach outlined above \cite[Proposition 2.10]{ErVo}. 

We can now formulate our assumptions on the magnetic field. It is formulated in terms of the auxiliary function~$h$ that appears in the definition of the Pauli operator.

\begin{assumption}\label{ass}
	There is an $\alpha\in(-1,1)$ and an $R>0$ such that the two numbers
	\begin{equation} \label{h-bounds}
		m^\ppm :=  \esssup_{x\in \R^2} \frac{e^{\pm h(x)}}{(1+|x|/R)^{\pm \alpha}}
	\end{equation}
	are both finite.
\end{assumption}

The number $R$ will play a minor role in what follows and is only introduced for dimensional consistency. In contrast the number $\alpha$ does play an important role. We emphasize that, if $\mu$ is absolutely continuous with $B=\frac{d\mu}{dx}\in \Lp^1(\R^2)$, then the validity of Assumption~\ref{ass} implies that the number $\alpha$ is necessarily given by the expression \eqref{eq:alphaflux}. We provide a proof of this claim in Lemma~\ref{meaningalpha}.

A simple case where Assumption \ref{ass} is satisfied is $\mu=(\alpha/R)\mathcal H^1_{\partial B(0,R)}$, with $\mathcal H^1_{\partial B(0,R)}$ denoting surface measure on the circle $\partial B(0,R)$ of radius $R$ centered at the origin. In this case we can choose $h=\alpha\ln_\pp(|x|/R)$ and we see that \eqref{h-bounds} is satisfied with the given $\alpha$ and $R$.

We emphasize that, while we can treat $\mu$ that are not absolutely continuous, our main interest is in the absolutely continuous case with $B=\frac{d\mu}{dx}\in \Lp^1(\R^2)$. For instance it is easy to see that if $B$ satisfies
\begin{equation}
	\label{eq:ptwdecayb}
	|B(x)| \leq C R^{-2} (1+|x|/R)^{-\rho}
	\qquad\text{with some}\ C>0 \ \text{and}\ \rho>2 \,,
\end{equation}
then \eqref{h-bounds} holds with $\alpha$ given by \eqref{eq:alphaflux} and the given $R$. The numbers $m^\ppm$ are bounded in terms of $C$ and $\rho$. In Lemma \ref{hlp} we show that Assumption \ref{ass} is satisfied under rather weak integrability assumptions on~$B$.

We are now ready to formulate our main result.

\begin{theorem}\label{main}
	Let Assumption \ref{ass} be satisfied. Then for any $\gamma\geq |\alpha|$ with $\gamma>0$ there are constants $L_1(\gamma,\mu)$ and $L_2(\gamma,\mu)$ such that for every real $V\in \textup{\Lp}^1_{\rm loc}(\R^2)$ one has
	$$
	\tr(P+V)_\m^\gamma \leq L_1(\gamma,\mu) \int_{\R^2} V(x)_\m^{\gamma+1}\,dx + L_2(\gamma,\mu) \int_{\R^2} e^{-2(\sgn\alpha) (h(x)-h_0)}\, V(x)_\m^{\gamma+1-|\alpha|}\,dx \,.
	$$
	Here
	$$
	h_0 := \begin{cases}
		\lim_{\epsilon\to 0} \essinf_{B(0,\epsilon)} h & \text{if}\ \alpha>0 \,, \\
		\lim_{\epsilon\to 0} \esssup_{B(0,\epsilon)} h & \text{if}\ \alpha<0 \,.
	\end{cases}
	$$
	The constants $L_1(\gamma,\mu)$ and $L_2(\gamma,\mu)$ can be chosen such that
	\begin{align*}
		L_1(\gamma,\mu) & \leq C(|\alpha|,\gamma) \, (m^\pp m^\m)^{2(\gamma+1)} \,, \\
		L_2(\gamma,\mu) & \leq C(|\alpha|,\gamma) \, R^{-2|\alpha|} \, (m^\pp m^\m)^{2(\gamma-|\alpha|+2)} \,,
	\end{align*}
	where $C(|\alpha|,\gamma)$ depends only on $|\alpha|$ and $\gamma$.
\end{theorem}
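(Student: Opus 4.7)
My plan is to use the Erd\H{o}s--Vougalter formulation to split the matrix problem into two scalar problems on weighted $\textup{\Lp}^2$ spaces, and to treat the ``resonant'' and ``non-resonant'' halves separately. Since $P = H^+\oplus H^-$, we have $\tr(P+V)_\m^\gamma = \tr(H^+ + V)_\m^\gamma + \tr(H^- + V)_\m^\gamma$. Using the representation~\eqref{eq:formsev}, the multiplication map $U_\pm\colon \textup{\Lp}^2(\R^2, e^{\pm 2h}\,dx)\to \textup{\Lp}^2(\R^2)$, $U_\pm u = e^{\pm h}u$, is unitary and conjugates $H^\pm + V$ to the scalar weighted operator $T^\pm + V$ on $\textup{\Lp}^2(\R^2, e^{\pm 2h}\,dx)$, where $T^\pm$ is generated by the form $\int_{\R^2} e^{\pm 2h}\bigl|(\partial_1 \pm i\partial_2) u\bigr|^2\,dx$. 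Since $V$ is scalar, it commutes with $U_\pm$, so eigenvalues (and therefore traces of negative parts) are preserved. Sending $h\mapsto -h$ swaps $T^+\leftrightarrow T^-$ and $\alpha\leftrightarrow -\alpha$, so we may assume $\alpha\in[0,1)$.

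\textbf{Non-resonant half.} For $\alpha\ge 0$ the weight $e^{-2h}$ is bounded above by $(m^\m)^2$ and decays polynomially at infinity. I would obtain, via the magnetic 2D Lieb--Thirring argument (cf.~\cite{Fr}) sandwiched between the pointwise bounds $(m^\pp)^{-2}(1+|x|/R)^{-2\alpha}\le e^{-2h}\le (m^\m)^2$, the standard semiclassical estimate
$$
\tr(T^- + V)_\m^\gamma \le C(\alpha,\gamma)\,(m^\pp m^\m)^{2(\gamma+1)}\int_{\R^2} V_\m^{\gamma+1}\,dx
$$
valid for all $\gamma>0$; this contributes only to the first term in the Theorem. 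Although the constant function is formally in $\ker(\partial_1-i\partial_2)$, it fails to be square-integrable against $e^{-2h}\,dx$ in such a weak (logarithmic) sense that no additional ``resonance'' term is needed on this side.

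\textbf{Resonant half and main obstacle.} On $\textup{\Lp}^2(\R^2, e^{2h}\,dx)$ the weight grows like $|x|^{2\alpha}$ and the constant function lies in $\ker(\partial_1+i\partial_2)$ yet strongly fails to be square-integrable, giving the zero-energy resonance. I would analyze $T^+ + V$ by the Birman--Schwinger principle on the weighted space,
$$
\tr(T^+ + V)_\m^\gamma = \gamma\int_0^\infty \tau^{\gamma-1}N(-\tau;T^+ + V)\,d\tau,
$$
bounding $N(-\tau;\cdot)$ by the number of eigenvalues above $1$ of $V_\m^{1/2}(T^+ + \tau)^{-1}V_\m^{1/2}$. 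The weighted Green function $G_\tau$ of $T^+ + \tau$ admits a low-energy decomposition into a singular ``resonance'' piece carried by $\psi_0\equiv 1$ plus a regular remainder. The regular part yields the standard contribution $\int V_\m^{\gamma+1}\,dx$ via a weighted Sobolev--Hardy inequality on $\textup{\Lp}^2(\R^2, e^{2h}\,dx)$, whereas the resonance piece --- once pulled back through $U_+$, under which $\psi_0$ becomes $e^h\sim |x|^\alpha$ in $\textup{\Lp}^2(\R^2)$ --- produces the contribution $\int e^{-2(h-h_0)}V_\m^{\gamma+1-\alpha}\,dx$. Here the reciprocal-resonance weight $e^{-2(h-h_0)}$ is essentially $|\psi_0/\psi_0(0)|^{-2}$, and the lowered exponent $\gamma+1-\alpha$ reflects the polynomial rate at which the resonance fails to be in $\textup{\Lp}^2$. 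It suffices to prove the critical case $\gamma=\alpha$ (and $\gamma\to 0^+$ when $\alpha=0$), as the Aizenman--Lieb lifting argument then extends the bound to all $\gamma\ge\alpha$. The hard step is the weighted Hardy--Sobolev inequality on $\textup{\Lp}^2(\R^2, e^{2h}\,dx)$ combined with the two-scale analysis separating the core region $|x|\lesssim R$ (where $h\approx h_0$) from the far field (where $h\sim\alpha\log(|x|/R)$): one must isolate the unique logarithmic-type singularity of $G_\tau$ as $\tau\to 0^+$ caused by the resonance, and track scaling carefully enough to recover both the polynomial prefactor $R^{-2|\alpha|}$ and the sharp power $(m^\pp m^\m)^{2(\gamma-|\alpha|+2)}$ in $L_2(\gamma,\mu)$.
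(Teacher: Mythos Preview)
Your identification of which half is ``resonant'' is inverted, and this is fatal to the argument as written. For $\alpha>0$ the zero-energy resonance of $P$ is $\psi_0=e^{-h}$, which lies in $\ker H^-$ (not $\ker H^+$) and decays like $|x|^{-\alpha}$; see Remark~(d) after the theorem. In your weighted picture this means that the \emph{decaying} weight $e^{-2h}$ is the critical one: the constant function fails to lie in $\textup{\Lp}^2(\R^2,e^{-2h}dx)$ precisely at the borderline rate $\int_1^\infty r^{1-2\alpha}\,dr=\infty$ for $\alpha<1$, and this is exactly what produces the second term in the inequality. Your claim of a one-term bound
\[
\tr(T^-+V)_\m^\gamma \le C(\alpha,\gamma)(m^\pp m^\m)^{2(\gamma+1)}\int V_\m^{\gamma+1}\,dx
\]
is therefore false; the weak-coupling asymptotics \eqref{weak-coupling} (which live on the $H^-$ side) rule it out for $\gamma=\alpha$. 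Conversely, for $T^+$ the weight $e^{2h}$ grows like $|x|^{2\alpha}$, the constant function fails square-integrability \emph{strongly} (polynomially), and consequently $H^+$ is subcritical and obeys a one-term Lieb--Thirring bound for all $\gamma>0$ --- this is Theorem~\ref{thm-main-1} in the paper. Your statement that the growing $e^h$ ``gives the zero-energy resonance'' is not correct: a resonance function must decay.

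Even after correcting the swap, your sketch omits a step that the paper isolates as essential: the forms $T^\pm$ involve $|(\partial_1\pm i\partial_2)u|^2$, not $|\nabla u|^2$, and passing from the former to a genuine weighted Dirichlet form is nontrivial when the weight is nonconstant. The paper handles this via a Fourier decomposition and weighted one-dimensional Hardy inequalities (Proposition~\ref{prop-lowerb}); the constant $q_\alpha$ there degenerates as $\alpha\uparrow 1$, which is why the restriction $|\alpha|<1$ is sharp. After this reduction the paper's treatment of the critical half differs further from your outline: rather than a global Green-function splitting, it first projects onto radial functions, imposes a Dirichlet condition at $r=1$ to peel off all but the lowest eigenvalue, and only then uses Birman--Schwinger on the resolvent \emph{difference} $(T_\alpha+\kappa^2)^{-1}-(T_0+\kappa^2)^{-1}$ for an explicit Sturm--Liouville model. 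Appendix~\ref{sec-one-term} shows that neither single term alone controls the lowest eigenvalue, so any correct low-energy analysis must produce both contributions simultaneously.
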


\begin{remarks}
	Some comments on the above theorem are in order.
	\begin{enumerate}
		\item[(a)] When $\alpha\neq 0$ there are two different terms on the right side. These two terms capture the correct order in the strong and weak coupling limit where $V$ is replaced by $\lambda V$ and either $\lambda\to\infty$ or $\lambda\to 0$. Indeed, the first term on the right side grows like $\lambda^{\gamma+1}$ as $\lambda\to\infty$, which is optimal in view of the Weyl asymptotics
		$$
		\lim_{\lambda\to\infty} \lambda^{-1-\gamma} \tr(P+\lambda V)_\m^\gamma = \frac{1}{2\pi\,(\gamma+1)} \, \int_{\R^2} V(x)_\m^{\gamma+1}\,dx \,.
		$$
		In the weak coupling limit with $\gamma=|\alpha|>0$ the second term on the right side vanishes linearly as $\lambda\to 0$, which is optimal since according to \cite{fmv,kov} one has
		\begin{equation} \label{weak-coupling}
			\lim_{\lambda \to 0_\pp}\,  \lambda^{- \frac\gamma{|\alpha|}}\,  \tr (P + \lambda V)_\m^{\gamma} = \left( - \frac{4^{|\alpha|-1}\, \Gamma(|\alpha|)}{ \pi\, \Gamma(1-|\alpha|)}\ \int_{\R^2} V(x) \, e^{-2(\sgn\alpha) h(x)} dx \right)^\frac\gamma\alpha,
		\end{equation}
		provided the integral on the right side is nonpositive and $h$ is chosen in a certain canonical way. We emphasize that this argument also shows that the function $e^{-2(\sgn\alpha)h}$ in our bound captures quantitatively the relevant quantity in the weak coupling limit. 
		
		\item[(b)] The assumption $\gamma\geq |\alpha|$ for $\alpha\neq 0$ is optimal. Indeed, by the weak coupling asymptotics \eqref{weak-coupling}, $\tr(P+\lambda V)_\m^\gamma$ behaves like $\lambda^\frac{\gamma}{|\alpha|}$, while the second term on the right side behaves like $\lambda^{1+\gamma-|\alpha|}$. This shows that for $0<|\alpha|<1$ the assumption $\gamma\geq |\alpha|$ is necessary. Similarly, in \cite{fmv,kov} there are weak coupling asymptotics for $\alpha=0$, which show that in this case the assumption $\gamma>0$ is necessary.
		\item[(c)] Concerning the condition $|\alpha|<1$ in Assumption \ref{ass} we remark that our bound cannot hold for $|\alpha|>1$. This follows again from weak coupling asymptotics in \cite{fmv,kov}, which state that $\tr(P+\lambda V)_\m^\gamma$ behaves like $\lambda^\gamma$ when $|\alpha|\geq 1$. Meanwhile, the second term on the right side of our bound behaves like $\lambda^{1+\gamma-|\alpha|}$, showing that the bound can only hold when $|\alpha|\leq 1$. This leaves open the case $|\alpha|=1$ for which one might expect a bound for $\gamma\geq 1$. As discussed in the next subsection, under somewhat different assumptions on the magnetic field such a bound was indeed shown in \cite{sob}, which is why we did not investigate it further.		
		\item[(d)] The function $e^{-(\sgn\alpha)(h-h_0)}$ coincides, up to a phase factor, with the zero energy resonance function $\psi_0$ mentioned in Subsection \ref{ssec-background}. Also, since the operator $P$ does not change if a constant is added to $h$, the difference $h-h_0$ that appears in our bound is indeed a natural quantity. When $h$ is continuous at the origin, we clearly have $h_0=h(0)$. The particular way of how to define $h_0$ in the discontinuous case is dictated mostly by technical convenience. We emphasize that $h_0$ is finite in view of \eqref{h-bounds}. The fact that the point $0$ is singled out in the definition of $h_0$ reflects that this point is singled out in \eqref{h-bounds}.
		\item[(e)] Our bound depends on the `magnetic field' $\mu$ only via the function $h$ and this dependence is only via the quantities $\alpha$, $R$ and $m^\ppm$ from Assumption \ref{ass}. In particular, note that
		$$
		m^\pp m^\m = \esssup_{x\in \R^2} \frac{e^{h(x)}}{(1+|x|/R)^{\alpha}}
		\esssup_{x\in \R^2} \frac{e^{- h(x)}}{(1+|x|/R)^{- \alpha}} \geq 1\,.
		$$
		In the weak field limit where $B$ is replaced by $\lambda B$ and $\lambda\to 0$, the function $h$ is replaced by $\lambda h$ and $\alpha$ by $\lambda\alpha$, while $R$ remains unchanged. The product $m^\pp m^\m$ is replaced by $(m^\pp m^\m)^\lambda$, which tends to 1. Our proof will show that for fixed $\gamma>0$, the constant $C(|\lambda\alpha|,\gamma)$ remains bounded as $\lambda\to 0$; see Remark \ref{smallfieldrem}. Thus, our bound is stable in the limit $\lambda\to 0$ and reproduces the nonmagnetic Lieb--Thirring inequality \eqref{lt-2D}. This property is not shared, for instance, by the bound from \cite{sob} discussed in the next subsection.
		\item[(f)] We have been somewhat cavalier about our assumptions on $V$. Here is a more precise statement: If $V\in\Lp^1_{\rm loc}(\R^2)$ is real and if the right side in the bound in the theorem is finite, then $V_\m$ is infinitesimally form bounded with respect to $P$ and for the operator $P+V$, defined via quadratic forms, the stated bound holds. This follows by standard argument from our proof. The same statement holds for all Lieb--Thirring-type inequalities in this paper and will not be repeated each time.
		\item[(g)] Assume that $\mathcal V$ is a locally integrable function on $\R^2$ taking values in the Hermitian $2\times 2$-matrices. Then the corresponding inequality holds for the operator $P+\mathcal V$, provided on the right side we replace $V(x)_\m^p$ by $\tr_{\C^2} (\mathcal V(x)_\m^p)$ for $p\in\{1+\gamma,1+\gamma-|\alpha|\}$. This simply follows from the inequality $\mathcal V(x) \geq - \|\mathcal V(x)_\m\|$ (with $\|\cdot\|$ the operator norm on $\C^2$), our bound in the scalar case and the bound $\|\mathcal V(x)_\m\|^p \leq \tr_{\C^2} (\mathcal V(x)_\m^p)$. For this reason we restrict ourselves to the case of a scalar electric potential.
		\item[(h)] Our inequality comes with explicit values for the constants $C(|\alpha|,\gamma)$, but since they are far from optimal we do not state them explicitly.
	\end{enumerate}
\end{remarks}

%%%%%%%%%%%%%%%%%%%%%%%%%%%%%
%%%%%%%%%%%%%%%%%%%%%%%%%%%%%

\subsection{Previous results}

Let us review some previous works on Lieb--Thirring inequalities for Pauli operators and compare them with our new results. Throughout we focus on the two-dimensional case and leave out many important advances in the three dimensional case, starting with Erd\H{o}s's foundational work \cite{Er} and reviewed in \cite{Er1,BlFo}.

Lieb, Solovej and Yngvason \cite{LiSoYn} showed that when $B$ is constant, then
$$
\tr (P +V)_\m \   \leq\  C \, \Big ( \int_{\R^2} V(x)_\m^2\, dx + |B|  \int_{\R^2}   V(x)_\m \,dx \Big ) \,.
$$
This was generalized by Erd\H{o}s and Solovej \cite[Thm.~3.2]{es} (based on the strategy in \cite{Er}), who showed that for any $\gamma\geq 1$ there is a constant $C_\gamma$ such that 
\begin{equation}  \label{es} 
	\tr (P + V)_\m^\gamma\   \leq\  C_\gamma \, \Big ( \int_{\R^2} V(x)_\m^{\gamma+1}\, dx + \|B\|_\infty  \int_{\R^2}   V(x)_\m^\gamma\, dx \Big ) \,.
\end{equation}

More relevant for us is an earlier work of Sobolev \cite{sob}, where it was shown under fairly general conditions on $B$ that for any $\gamma\geq 1$ there is a 
constant $C_{\gamma}$ such that 
\begin{align}  \label{sobolev} 
	\tr (P + V)_\m^\gamma\  & \leq\  C_{\gamma} \, \Big( \int_{\R^2} V(x)_\m^{\gamma+1}\, dx  +  \int_{\R^2} b(x)\,  V(x)_\m^{\gamma}\, dx \Big) \,.
\end{align}
Here $b$ denotes a ``smeared" modification of $B$, see \cite[Sec.~2]{sob} for details. It should be noted that $b$ is not uniquely defined. The relevance of a smeared magnetic field was pointed out in \cite{Er}.

The assumptions on $B$ in \cite{sob} are somewhat implicit. In order to compare the results with ours, we assume that $B$ satisfies the pointwise decay condition \eqref{eq:ptwdecayb}. Then it is easily verified that an effective magnetic field $b$ in the sense of \cite{sob} can be constructed in such a way as to satisfy 
$$
b(x) \ \leq \ C_B \,  (1+|x|)^{-2}\, \qquad \forall\, x\in\R^2,
$$
see \cite[Eqs.~(2.7)-(2.11)]{sob}. Inequality \eqref{sobolev} then implies that for any $\gamma\geq 1$,
\begin{equation} \label{sobolev-bis}
	\tr (P + V)_\m^\gamma\  \leq\   C_{1,\gamma} \int_{\R^2} V(x)_\m^{\gamma+1}\, dx  + C_{2,\gamma}(B) \int_{\R^2} (1+|x|)^{-2} \, V(x)_\m^{\gamma}\, dx \,.
\end{equation}

Let us compare \eqref{es} and \eqref{sobolev-bis} with our bound in Theorem \ref{main}. Importantly, \eqref{es} and \eqref{sobolev-bis} do not have a restriction on the normalized flux $\alpha$ of $B$. Meanwhile, they are restricted to values $\gamma\geq 1$. When $|\alpha|<1$, H\"older's inequality yields that
$$
\int_{\R^2} (1+|x|)^{-2\alpha} V(x)_\m^{\gamma+1-|\alpha|} \,dx \leq \left( \int_{\R^2} (1+|x|)^{-2} V(x)_\m^\gamma\,dx \right)^{|\alpha|}
\left( \int_{\R^2} V(x)_\m^{\gamma+1}\,dx \right)^{1-|\alpha|} \,.
$$
Since $e^{-2h}\leq (m^\m)^2 (1+|x|)^{-2\alpha}$, we see that our bound in Theorem \ref{main} implies \eqref{sobolev-bis} for $|\alpha|<1$.

In the strong coupling regime, where $V$ is replaced by $\lambda V$ with $\lambda\to\infty$, the bounds \eqref{es}, \eqref{sobolev-bis} and our bound all reproduce the optimal $\lambda^{\gamma+1}$ growth.

In the weak coupling regime, where $V$ is replaced by $\lambda V$ with $\lambda\to 0_+$, $\tr (P + \lambda V)_\m^\gamma$ vanishes like $\lambda^\gamma$ when $|\alpha|\geq 1$, so both \eqref{es} and \eqref{sobolev-bis} are order-sharp in this case. However, $\tr (P + \lambda V)_\m^\gamma$ vanishes like $\lambda^\frac{\gamma}\alpha$ when $0<|\alpha|<1$, and in this regime \eqref{es} and \eqref{sobolev-bis} are no longer order-sharp, while the bound from Theorem \ref{main} is.

Concerning the regime of a weak magnetic field, where $B$ is replaced by $\lambda B$ with $\lambda\to 0$, we see that \eqref{es} turns into the ordinary Lieb--Thirring inequality (for $\gamma\geq 1$), as does the bound in Theorem \ref{main} (for $\gamma>0$). Meanwhile, as pointed out in \cite[p.~614]{sob}, inequalities \eqref{sobolev} and \eqref{sobolev-bis} are not applicable in the regime of the weak magnetic field.

These observations suggest that Theorem \ref{main} improves over both \eqref{es} and \eqref{sobolev-bis} in the small flux regime $|\alpha|<1$. It displays not only the sharp behavior in the strong coupling limit, but also in the weak coupling limit and allows for a smooth passage to inequality \eqref{lt-2D} in the limit of a vanishing magnetic field. This is made possible by replacing the integrand $(1+|x|)^{-2} V(x)_\m^\gamma$ in \eqref{sobolev-bis} by $(1+|x|)^{-2|\alpha|} V(x)_\m^{\gamma+1-|\alpha|}$.

\medskip

It is also interesting to view our results from the point of view of Lieb--Thirring inequalities in the presence of zero energy resonances (aka virtual levels). Lieb--Thirring inequalities when a critical Hardy weight is subtracted from the Laplacian were shown in \cite{EkFr0,ef,FrLiSe,Fr1}. In these cases, like in the present one, there is an algebraically decaying zero energy resonance function. The resulting inequality, however, only has a single term in contrast to our bound for $P$ when $\alpha\neq0$, which has two terms. This is connected with the fact that eigenvalues for the Hardy operator are exponentially small in the limit of a vanishing coupling constant. Hardy--Lieb--Thirring inequalities for fractional Pauli operators in three dimensions were studied in \cite{BlFo}.

Lieb--Thirring inequalities in the presence of a resonance function that is bounded from above and away from zero were studied in \cite{FrSiWe}. The resulting inequality has only a single term. This is relevant in the present case in the simplest case $\alpha=0$.

In \cite{EkFrKo,FrKo} we investigated Lieb--Thirring inequalities in the context of Schr\"odinger operators on continuous graphs that are sparse in some sense. The Lieb--Thirring inequalities in this case have two terms, reflecting the different behavior in the strong and the weak coupling limit; for the latter see \cite{Ko}. Compared with \cite{FrKo} the results in the present paper are substantially more precise, as we are able to prove the Lieb--Thirring inequality in the critical case $\gamma=|\alpha|>0$, while the corresponding question is left open in \cite{FrKo}. Nevertheless some techniques from \cite{FrKo} will play a role in our analysis of subcritical cases; see Section \ref{sec:subcritical}.

Our result also shows similarities to the logarithmic Lieb--Thirring inequality for the two-dimensional Schr\"o\-din\-ger operator \cite{KoVuWe}, where again two terms appear on the right side. Our proof in the critical case uses some ideas from \cite{KoVuWe} (and \cite{EkFr0}); see Section \ref{sec:critical}. An important conceptual difference, however, is that in the relevant bound on the lowest eigenvalue in Proposition \ref{prop-E-1} still two terms appear while there is only one term in the corresponding bound in \cite[Lemma 1]{KoVuWe}. This leads to substantial technical difficulties that need to be overcome; see Appendix \ref{sec-one-term} for a proof of the fact that both terms are necessary.

\medskip

Finally, we mention the works \cite{weidl,fmv,kov,BrKo,FaKo} that quantify different aspects of the instability of the bottom of the spectrum of the Pauli operator. Some of the techniques developed there will be relevant for us here.

%%%%%%%%%%%%%%%%%%%%%%%%%%%%%%%%%%%%
%%%%%%%%%%%%%%%%%%%%%%%%%%%%%%%%%%%%

\subsection{Strategy of the proof}\label{sec:strategy}

Since the Pauli operator is block diagonal, Theorem \ref{main} is an immediate consequence of two theorems concerning the individual blocks. These two operators are defined by the first and the second quadratic form on the right side of \eqref{eq:formsev} and are denoted by $H^+$ and $H^-$, respectively. They are operators acting in the space $\Lp^2(\R^2)$ of complex-valued functions. When $\mu$ is absolutely continuous with sufficiently regular density $B=\frac{d\mu}{dx}$, these operators coincide with those given by \eqref{eq:pauli}.

The block-diagonality of $P$ and the spin-independence of $V$ imply
\begin{equation}
	\label{eq:blockdiagonal}
	\tr_{\, \Lp^2(\R^2,\C^2)}(P+V)_\m^\gamma = \tr_{\, \Lp^2(\R^2)}(H^\pp +V)_\m^\gamma \, + \, \tr_{\,\Lp^2(\R^2)}(H^\m +V)_\m^\gamma \,.
\end{equation}
In what follows we solely discuss the operators $H^\pp$ and $H^\m$, rather than $P$.

It will turn out that for $\alpha\neq 0$ only one of the two operators $H^+$ and $H^-$ is `critical' while the other one is `subcritical'. (One could give a mathematical definition of what we mean by `critical' and `subcritical', but since we do not need anything from the corresponding theory, we will use these terms only in a colloquial sense and refer to \cite{weidl,Pi} for some background.) In order to discuss the distinction between $H^+$ and $H^-$, we shall assume that
$$
\alpha\geq 0 \,.
$$ 
This is no loss of generality, since replacing $\mu$ by $-\mu$ can be compensated by replacing $h$ by $-h$ and then replacing $\alpha$ by $-\alpha$ in \eqref{h-bounds}. Of course this is also consistent with the expression \eqref{eq:alphaflux} for $\alpha$ in the regular case. The product $m^\pp m^\m$ that appears in our bounds is invariant under this replacement. 

With this convention in place, the operator $H^-$ is `critical', while $H^+$ is `subcritical' for $\alpha>0$. The following result says that for the subcritical operator a Lieb--Thirring inequality holds for arbitrarily small $\gamma>0$.

\begin{theorem}\label{thm-main-1} 
	Let Assumption \ref{ass} be satisfied with $\alpha\geq 0$. Then for any $\gamma>0$ there is a constant $L(\gamma,\mu)$ such that for every real $V\in \textup \Lp^1_{\rm loc}(\R^2)$ one has
	$$
	\tr(H^+ + V)_\m^\gamma \leq L(\gamma,\mu) \int_{\R^2} V(x)_\m^{\gamma+1}\,dx \,.
	$$
	The constant $L(\gamma,\mu)$ can be chosen such that
	\begin{align*}
		L(\gamma,\mu) & \leq C(\alpha,\gamma) \, (m^\pp m^\m)^{2(\gamma+1)} \,,
	\end{align*}
	where $C(\alpha,\gamma)$ depends only on $\alpha$ and $\gamma$. The same assertion holds for the operator $H^-$ if $\alpha=0$.
\end{theorem}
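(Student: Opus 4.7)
The strategy exploits the formal zero-energy resonance $\psi_0 = e^h$ of $H^+$, which by Assumption~\ref{ass} with $\alpha\geq 0$ is bounded below uniformly by $(m^\m)^{-1}$ on $\R^2$ (and additionally bounded above by $m^\pp$ when $\alpha=0$, as in the setting of \cite{FrSiWe}). This uniform lower bound on the resonance is the feature distinguishing the subcritical operator $H^+$ from its critical counterpart $H^-$ and is what enables a one-term Lieb--Thirring inequality.

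First, I would apply the ground state transformation $U\colon \Lp^2(\R^2,dx)\to \Lp^2(\R^2,e^{2h}dx)$, $U\psi=e^{-h}\psi$. This map is unitary and, because $V$ is a multiplication operator, it leaves $V$ invariant; hence
\[
\tr_{\Lp^2(dx)}(H^\pp+V)_\m^\gamma=\tr_{\Lp^2(e^{2h}dx)}(\tilde H+V)_\m^\gamma,
\]
where $\tilde H$ is the operator on the weighted space associated with the form $\tilde q(u)=\int e^{2h}|(\partial_1+i\partial_2)u|^2\,dx$, whose formal ground state is $u\equiv 1$.

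Combining the pointwise bound $e^{2h}\geq(m^\m)^{-2}$ (which uses exactly $\alpha\geq 0$) with the identity $\int|(\partial_1+i\partial_2)u|^2\,dx=\int|\nabla u|^2\,dx$ (the cross term $\Im\int\partial_2 u\,\overline{\partial_1 u}\,dx$ vanishes by equality of mixed partials), I obtain the key form inequality
\[
\tilde q(u)\geq (m^\m)^{-2}\int_{\R^2}|\nabla u|^2\,dx.
\]
Combined with the identity $\int V|u|^2 e^{2h}\,dx=\int V|\psi|^2\,dx$ and a further change of variable $\phi=e^h u$ that returns the Rayleigh quotient to $\Lp^2(\R^2,dx)$, together with the upper bound $e^{2h}\leq(m^\pp)^2(1+|x|/R)^{2\alpha}$, this reduces the task to bounding $\tr(\mathcal S+V)_\m^\gamma$ for a non-negative Schr\"odinger-type operator $\mathcal S$ on $\Lp^2(dx)$ whose quadratic form is a weighted Dirichlet form with weight controlled by $(m^\pp m^\m)^{\pm 2}(1+|x|/R)^{\pm 2\alpha}$.

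The proof is concluded by combining this reduction with the standard two-dimensional Lieb--Thirring inequality \eqref{lt-2D}. For $\alpha=0$ the weights are uniformly bounded and the inequality is immediate, yielding the claimed scaling $L(\gamma,\mu)\lesssim (m^\pp m^\m)^{2(\gamma+1)}$. For $\alpha>0$, the polynomial growth of $e^{2h}$ at infinity must be absorbed into the constant: the main technical obstacle I anticipate is establishing a Hardy-type lower bound $\tilde H\geq c\,(1+|x|/R)^{-2}$ in the form sense (the hallmark of subcriticality here, provided by the polynomial growth of $\psi_0=e^h$), and then using it together with \eqref{lt-2D} in a way that controls the $(1+|x|/R)^{2\alpha}$ factors cleanly by $(m^\pp m^\m)^{2(\gamma+1)}$ rather than by some larger power or $x$-dependent constant. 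The corresponding statement for $H^\m$ with $\alpha=0$ then follows by the symmetry $h\leftrightarrow -h$, which interchanges the roles of $H^\pp$ and $H^\m$ while preserving Assumption~\ref{ass}.
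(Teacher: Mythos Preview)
Your approach is correct and matches the paper for $\alpha=0$: the uniform two-sided bound on $e^{h}$ reduces the problem to the ordinary Laplacian and \eqref{lt-2D} applies directly. For $\alpha>0$, however, your outline has a genuine gap that the paper fills by a rather different route.

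The issue is that your unweighted lower bound $\tilde q(u)\geq (m^\m)^{-2}\int|\nabla u|^2\,dx$ throws away the growth of $e^{2h}$, while the same growth remains in the potential term and in the $\Lp^2$-norm. Following your chain through the variational principle one arrives at $\tr(H^\pp+V)_\m^\gamma\leq \tr(-\Delta-(m^\pp m^\m)^2(1+|x|)^{2\alpha}V_\m)_\m^\gamma$, and applying \eqref{lt-2D} then produces $\int (1+|x|)^{2\alpha(\gamma+1)}V_\m^{\gamma+1}\,dx$ on the right, not $\int V_\m^{\gamma+1}\,dx$. A Hardy bound $\tilde H\geq c(1+|x|)^{-2}$, even if available, does not obviously cure this: it lets you absorb a factor $(1+|x|)^{-2}$ into the kinetic term, not a growing factor $(1+|x|)^{2\alpha}$ out of the potential.

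The paper's remedy is to \emph{keep} the weight in the kinetic lower bound. Proposition~\ref{prop-lowerb} shows $Q^\pp[e^h\phi]\geq q_\alpha(m^\m)^{-2}\int(1+|x|)^{2\alpha}|\nabla\phi|^2\,dx$, i.e.\ the weighted Dirichlet form dominates with the \emph{same} weight $(1+|x|)^{2\alpha}$ as appears in the measure. This is not automatic once the weight is nonconstant (the cross term in $|(\partial_1+i\partial_2)\phi|^2$ no longer integrates to zero), and the paper proves it by Fourier decomposition in the angular variable together with one-dimensional Muckenhoupt-type Hardy inequalities; the constant $q_\alpha$ degenerates as $\alpha\uparrow 1$. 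This yields (Corollary~\ref{comparison}) a comparison with the weighted Laplacian $\mathcal H^\pp$ on $\Lp^2((1+|x|)^{2\alpha}dx)$, where now the weights in numerator and denominator of the Rayleigh quotient match. The Lieb--Thirring inequality for $\mathcal H^\pp$ is then obtained not from \eqref{lt-2D} but from the on-diagonal heat kernel estimate $p^\pp(t;x,x)\lesssim t^{-1}(1+|x|)^{-2\alpha}$ of Grigor'yan--Saloff-Coste, inserted into Lieb's integral formula \eqref{lieb}; the factor $(1+|x|)^{-2\alpha}$ in the heat bound exactly cancels the measure weight, leaving $\int V_\m^{\gamma+1}\,dx$.
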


\begin{theorem} \label{thm-main-3}
	Let Assumption \ref{ass} be satisfied with $\alpha> 0$. Then for any $\gamma\geq\alpha$ there are constants $L_1(\gamma,\mu)$ and $L_2(\gamma,\mu)$ such that for every real $V\in \textup\Lp^1_{\rm loc}(\R^2)$ one has
	$$
	\tr(H^- + V)_\m^\gamma \leq L_1(\gamma,\mu) \int_{\R^2} V(x)_\m^{\gamma+1}\,dx + L_2(\gamma,\mu) \int_{\R^2} e^{-2 (h(x)-h_0)}\, V(x)_\m^{\gamma+1-\alpha}\,dx \,.
	$$
	The constants $L_1(\gamma,\mu)$ and $L_2(\gamma,\mu)$ can be chosen such that
	\begin{align*}
		L_1(\gamma,\mu) & \leq C(\alpha,\gamma) \, (m^\pp m^\m)^{2(\gamma+1)} \,, \\
		L_2(\gamma,\mu) & \leq C(\alpha,\gamma) \, R^{-2\alpha} \, (m^\pp m^\m)^{2(\gamma-\alpha+2)} \,,
	\end{align*}
	where $C(\alpha,\gamma)$ depends only on $\alpha$ and $\gamma$.
\end{theorem}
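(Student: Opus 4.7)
\emph{Overall strategy.} The plan is to first establish the critical inequality at the endpoint exponent $\gamma=\alpha$ and then derive the full range $\gamma\ge\alpha$ by an Aizenman--Lieb argument. Starting from the spectral identity
\begin{equation*}
s_-^\gamma \;=\; \frac{1}{B(\gamma-\alpha,\alpha+1)}\int_0^\infty t^{\gamma-\alpha-1}(s+t)_-^\alpha\,dt,
\end{equation*}
applied to $H^- + V$, and then inserting the endpoint bound with $V$ replaced by $V+t$ (using $(V+t)_-=(V_--t)_+$ together with Fubini), the two desired terms with exponents $\gamma+1$ and $\gamma+1-\alpha$ emerge, with constants merely multiplied by Beta-function factors of the form $B(\gamma-\alpha,\alpha+2)$ and $B(\gamma-\alpha,2)$. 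This reduces the theorem to proving
\begin{equation*}
\tr(H^- + V)_-^{\alpha} \;\le\; \tilde L_1 \int_{\R^2} V_-^{\alpha+1}\,dx \;+\; \tilde L_2 \int_{\R^2} e^{-2(h-h_0)}\, V_-\,dx.
\end{equation*}

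\emph{The critical endpoint.} For the endpoint the natural first step is a ground-state substitution: setting $\phi_0(x) = e^{-(h(x)-h_0)}$ and $\psi^- = \phi_0 u$, the quadratic form of $H^-$ rewrites as $e^{2h_0}\int e^{-2h}|(\partial_1 - i\partial_2) u|^2\,dx$ on the weighted Hilbert space $L^2(\phi_0^2\,dx)$, in which $u\equiv 1$ is a genuine zero mode realising the resonance of $H^-$. In these variables I would invoke the Birman--Schwinger principle: $-E<0$ is an eigenvalue of $H^-+V$ iff $B_E:= V_-^{1/2}(H^-+E)^{-1}V_-^{1/2}$ has an eigenvalue $\ge 1$. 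The low-energy behaviour of the resolvent on the weighted space is dominated by the resonance, so I expect a decomposition
\begin{equation*}
B_E \;=\; c(E)\, |V_-^{1/2}\phi_0\rangle\langle V_-^{1/2}\phi_0| \;+\; B_E^{\mathrm{reg}},
\end{equation*}
with $c(E)\sim E^{\alpha-1}$ encoding the threshold singularity and $B_E^{\mathrm{reg}}$ controllable by non-resonant resolvent estimates. Counting eigenvalues of $B_E$ exceeding one then gives $n(E)\le n_{\mathrm{res}}(E)+n_{\mathrm{reg}}(E)$; integrating in $E$ against $\alpha E^{\alpha-1}\,dE$, the first summand reproduces the weighted integral $\int e^{-2(h-h_0)}V_-\,dx$ through $\phi_0^2$, while the second reproduces the semiclassical term $\int V_-^{\alpha+1}\,dx$. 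Alternatively, using the intermediate bound on $|E_1|$ stated as Proposition~\ref{prop-E-1}, one can bootstrap from the lowest eigenvalue to the trace via an orthogonal-decomposition iteration in the spirit of \cite{KoVuWe}.

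\emph{Main obstacle.} The hard part is keeping the two terms genuinely separate so that the dependence on the magnetic parameters matches the advertised scalings $(m^+m^-)^{2(\gamma+1)}$ and $R^{-2\alpha}(m^+m^-)^{2(\gamma-\alpha+2)}$. In the logarithmic setting of \cite{KoVuWe} the one-eigenvalue bound has only a single term, which makes the iterative passage from $|E_1|$ to $\sum_j|E_j|^\gamma$ essentially automatic; here, as the authors emphasise in their Appendix on the necessity of two terms, the irreducible two-term structure of Proposition~\ref{prop-E-1} forces one to track both contributions through every iteration or Schatten-class estimate without generating cross-terms that would spoil either the semiclassical or the resonance scaling. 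Closely intertwined is the task of obtaining sufficiently sharp off-diagonal estimates on $(H^-+E)^{-1}$ in the weighted space $L^2(\phi_0^2\,dx)$ uniformly as $E\to 0$, which is where the technical bulk of the argument should lie.
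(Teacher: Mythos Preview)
Your Aizenman--Lieb reduction to the endpoint $\gamma=\alpha$ is correct and matches the paper. The genuine gap is in the endpoint argument itself: neither of your two suggested routes from the one-eigenvalue bound of Proposition~\ref{prop-E-1} to the full trace $\tr(H^-+V)_-^\alpha$ is workable as stated, and the paper's actual mechanism is different from both.

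For the global Birman--Schwinger route, the threshold singularity of $(H^-+E)^{-1}$ is \emph{not} rank one: Lemma~\ref{prop-hs} only gives a pointwise bound $|\Gamma_\alpha(r,r';\kappa)|\lesssim\kappa^{-2\alpha}\sqrt{rr'}(1+r)^{-\alpha}(1+r')^{-\alpha}$ on the resolvent \emph{difference}, which is a factorised envelope, not a projection. (Incidentally the scaling is $c(E)\sim E^{-\alpha}$, not $E^{\alpha-1}$.) More seriously, you give no mechanism for controlling $B_E^{\mathrm{reg}}$ so that integration yields $\int V_-^{1+\alpha}$; this would amount to a Cwikel-type estimate for the Pauli resolvent minus its singular part, and nothing in the paper supplies that directly for $H^-$ in the unweighted space. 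For the alternative ``orthogonal-decomposition iteration'', you yourself flag the obstacle: the two-term structure of Proposition~\ref{prop-E-1} makes the KoVuWe iteration break down, and the paper confirms in Appendix~\ref{sec-one-term} that no single-term bound on $|E_1|$ is available.

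What the paper actually does is sidestep iteration altogether. It first replaces $H^-$ by the explicit weighted model operator $\mathcal H^-$ via the form comparison of Corollary~\ref{comparison} (this is where the $(m^+m^-)$-powers come from), then splits off the non-radial part by Proposition~\ref{prop-radial} (standard LT suffices there), and finally --- the key step you are missing --- imposes a Dirichlet condition at $r=1$ on the remaining one-dimensional radial operator $\mathfrak h^-$. Since a Dirichlet condition is a rank-one perturbation of the resolvent, this costs exactly one extra term, namely $(\inf\spec(\mathfrak h^-+v))_-^\alpha$, which is precisely what Proposition~\ref{prop-E-1} bounds. The two half-line pieces $\mathfrak h_0^-$ and $\mathfrak h_\infty^-$ are subcritical and obey ordinary one-term LT inequalities (Propositions~\ref{ltrad0} and~\ref{ltradinfty}). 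Thus no iteration and no global Cwikel estimate are needed; the whole burden of the two-term structure is carried by a single application of Proposition~\ref{prop-E-1}.
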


As we already mentioned, in view of \eqref{eq:blockdiagonal}, Theorem \ref{main} is an immediate consequence of Theorems \ref{thm-main-1} and \ref{thm-main-3}. Most of the remarks following Theorem \ref{main} have analogues for Theorems \ref{thm-main-1} and \ref{thm-main-3}, showing in particular their optimality. We omit the details.

We will prove Theorems \ref{thm-main-1} and \ref{thm-main-3} only in the case
$$
R=1 \,.
$$
This is no loss of generality, according to the following simple scaling argument. If $\mu$ satisfies Assumption \ref{ass} for some $R$ and $\alpha$, then we can define a measure $\tilde\mu$ on $\R^2$ that satisfies Assumption \ref{ass} with the same $\alpha$, but with $R=1$. For absolutely continuous $\mu$ the corresponding densities $B$ and $\tilde B$ are related by $\tilde B(y) := R^{2} B(Ry)$, and this relation is extended in the natural sense to measures. When passing from $\mu$ to $\tilde\mu$, the function $h$ is replaced by the function $\tilde h(y) := h(Ry)$, which proves our claim about Assumption \ref{ass}. Denoting by $\tilde H^\ppm$ the operators corresponding to $\tilde\mu$, we see that the operators $H^\ppm+V$ are unitarily equivalent to the operators $R^{-2}(\tilde H^\ppm +\tilde V)$ with $\tilde V(y):=R^{2} V(Ry)$. As a consequence, Theorems \ref{thm-main-1} and \ref{thm-main-3} for $\tilde H^\ppm +\tilde V$ (with $R=1$) imply the corresponding theorems for the original operator $H^\ppm +V$ (with arbitrary $R>0$).

%%%%%%%%%%%%%%%%%%%%%%%%%%%%%%%%%%%%
%%%%%%%%%%%%%%%%%%%%%%%%%%%%%%%%%%%%

\section{Passage to weighted spaces} 

In this section we will show that Lieb--Thirring inequalities for $H^\ppm$ follow from corresponding Lieb--Thirring inequalities for certain operators $\mathcal H^\ppm$ that act in a weighted $\Lp^2$ space and are defined through a weighted Dirichlet integral. For this argument it is crucial that $\alpha< 1$.

\subsection{Lower bound on $Q^\ppm$} 

Let
$$
Q^\ppm[\psi] := \int_{\R^2} e^{\pm 2h} \, |(\partial_{1} \pm i \partial_{2}) e^{\mp h} \psi |^2\, dx
$$
denote the quadratic form of the operator $H^\ppm$.

The simple pointwise bound $|(\partial_{x_1} \pm i \partial_{x_2}) \phi |^2 \leq 2\,    |\nabla  \phi |^2$, together with \eqref{h-bounds} (recall our convention $R=1$), shows that
\begin{equation}
	\label{eq:weightedupper}
	Q^\ppm[e^{\pm h} \phi] \leq 2 \, (m^\ppm)^2 \int_{\R^2} (1+|x|)^{\pm 2\alpha} |\nabla \phi |^2\,dx \,.
\end{equation}
This holds irrespectively of the value of $\alpha$, as long as \eqref{h-bounds} is valid. The following proposition shows that, under the assumption $|\alpha|<1$, the reverse bound holds, up to changing the value of the constant. This will be one of the main technical tools in the proof of our results.

\begin{proposition} \label{prop-lowerb}
	Let Assumption \ref{ass} be satisfied with $\alpha\geq 0$ and $R=1$. Then for all $\phi\in C_c^1(\R^2)$,
	\begin{equation}\label{eq:weightedlower}
		Q^\ppm[e^{\pm h} \phi] \geq q_\alpha (m^\mpp)^{-2} \int_{\R^2} (1+|x|)^{\pm 2\alpha} |\nabla \phi|^2\,dx \,,
	\end{equation}
where
\begin{equation}  \label{q-alpha}
q_\alpha = \frac{2^{-2\alpha-1}(1-\alpha)^2}{2\alpha + 2^{-2\alpha-1}(1-\alpha)^2} \,.
\end{equation}
\end{proposition}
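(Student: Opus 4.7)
My plan is to reduce the claim, in three steps, to a weighted one-dimensional Hardy-type inequality analysed mode by mode via an angular Fourier decomposition. I describe the argument for the ``$+$'' sign; the ``$-$'' sign is parallel. Using Assumption~\ref{ass} with $R=1$, the pointwise bound $e^{2h(x)}\ge(m^\m)^{-2}(1+|x|)^{2\alpha}$ yields
$$
Q^\pp[e^h\phi] = \int_{\R^2} e^{2h}|(\partial_1+i\partial_2)\phi|^2\,dx \;\ge\; (m^\m)^{-2}\int_{\R^2} w\,|(\partial_1+i\partial_2)\phi|^2\,dx,
$$
with $w(x):=(1+|x|)^{2\alpha}$, so it suffices to prove the \emph{model inequality} $\int w\,|(\partial_1+i\partial_2)\phi|^2\,dx\ge q_\alpha\int w\,|\nabla\phi|^2\,dx$, which no longer involves the magnetic data $h$.

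Next, since $w$ is radial, I would decompose $\phi(r,\theta)=\sum_{n\in\Z}\phi_n(r)e^{in\theta}$ and use $(\partial_1+i\partial_2)=e^{i\theta}(\partial_r+ir^{-1}\partial_\theta)$, so that both sides of the model inequality split orthogonally across the modes. Expanding $|\phi_n'-nr^{-1}\phi_n|^2$ and integrating the cross term by parts (boundary contributions vanish since $\phi_n(0)=0$ for $n\neq 0$ when $\phi\in C_c^1$) reduces the problem to the modewise inequality
$$
(1-q_\alpha)\bigl(K_n + n^2\tilde L_n\bigr) + 2n\alpha\,J_n \;\ge\; 0, \qquad n\in\Z,
$$
where $K_n=\int_0^\infty r(1+r)^{2\alpha}|\phi_n'|^2\,dr$, $\tilde L_n=\int_0^\infty r^{-1}(1+r)^{2\alpha}|\phi_n|^2\,dr$ and $J_n=\int_0^\infty (1+r)^{2\alpha-1}|\phi_n|^2\,dr$. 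For $n\alpha\ge 0$ this is automatic as soon as $q_\alpha\le 1$, so the only issue is the ``bad'' modes $n\alpha<0$.

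For the bad modes, by symmetry I may assume $\alpha>0$ and $n<0$ and perform the substitution $\phi_n(r)=(1+r)^{-\alpha}f(r)$. Integrating by parts in the cross term yields the identity
$$
K_n \;=\; \int_0^\infty r|f'|^2\,dr + \alpha\int_0^\infty \frac{1+\alpha r}{(1+r)^2}\,f^2\,dr,
$$
whence the pointwise comparisons $1+\alpha r\ge\alpha(1+r)$ (valid since $\alpha\le 1$) and $r^{-1}\ge(1+r)^{-1}$ give $K_n\ge\alpha^2 J_n$ and $\tilde L_n\ge J_n$. An arithmetic-geometric mean argument, refined by splitting the domain into the unit ball and its complement to compare $(1+r)^{2\alpha}$ with $r^{2\alpha}$, then delivers the modewise bound with the explicit constant $q_\alpha$; the factor $2^{-2\alpha-1}$ originates from the elementary bound $(1+r)\le 2$ on the unit ball.

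The main obstacle is the first nonradial mode $|n|=1$, at which the weighted Hardy inequality yields a quantitative defect of order $(|n|-\alpha)^2=(1-\alpha)^2$. This is the source of the $(1-\alpha)^2$ factor in the numerator of $q_\alpha$; it also explains why the bound must degenerate as $|\alpha|\uparrow 1$ and cannot hold when $|\alpha|\ge 1$. Tracking the explicit constant through the transition between the pure-power behaviour of $w$ at infinity and its boundedness near the origin is the main technical point of the argument.
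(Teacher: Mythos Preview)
Your reduction to the model inequality, the angular Fourier decomposition, the integration-by-parts identity, and the observation that modes with $n\alpha\ge0$ are trivial all match the paper exactly. The divergence is in how you treat the bad modes $n<0$.

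For the upper sign your substitution $\phi_n=(1+r)^{-\alpha}f$ is a genuinely different idea from the paper's. The paper instead sets $f=r^{-m}\phi_m$ and applies a Muckenhoupt-type weighted Hardy inequality (their Lemma~2.2) to obtain
\[
\int_0^\infty (1+r)^{2\alpha}\Big|\phi_m'-\frac{m\phi_m}{r}\Big|^2 r\,dr
\;\ge\; 2^{-2\alpha-1}(1-\alpha)^2\,m^2\int_0^\infty \frac{(1+r)^{2\alpha}}{r^2}\,|\phi_m|^2\,r\,dr,
\]
and then interpolates this with the integration-by-parts identity via a parameter $\vartheta$. Your route is cleaner here: the identity $K_n=\int_0^\infty r|f'|^2\,dr+\alpha\int_0^\infty\frac{1+\alpha r}{(1+r)^2}f^2\,dr$ together with $K_n\ge\alpha^2J_n$ and $\tilde L_n\ge J_n$ already gives $(1-q)(K_n+n^2\tilde L_n)\ge 2|n|\alpha J_n$ as soon as $q\le\frac{(1-\alpha)^2}{1+\alpha^2}$, which is \emph{larger} than the paper's $q_\alpha$. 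So the domain-splitting and $2^{-2\alpha-1}$ factor you allude to are not needed for the upper sign; your method simply gives a better constant.

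The gap is the lower sign. The claim that it is ``parallel'' is not justified, and the natural analogue of your substitution breaks down. With $\phi_n=(1+r)^{\alpha}f$ one finds
\[
K_n^-=\int_0^\infty r|f'|^2\,dr+\alpha\int_0^\infty\frac{\alpha r-1}{(1+r)^2}\,f^2\,dr,
\]
and the second integrand is negative near $r=0$, so the bound $K_n^-\ge\alpha^2 J_n^-$ fails. Trying to salvage it via $K_n^-+n^2\tilde L_n^-\ge(n^2-\alpha)J_n^-$ forces $q\le\frac{1-3\alpha}{1-\alpha}$ at $|n|=1$, which is negative for $\alpha>1/3$. To close the argument you would need a genuine Hardy inequality controlling $\int_0^\infty(1+r)^{-2}f^2\,dr$ by $\int_0^\infty r|f'|^2\,dr$, which is precisely what the paper's Muckenhoupt lemma supplies (applied with $f=r^{m}\phi_m$ and the boundary condition at infinity). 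Your sketch does not identify this missing ingredient for the lower sign.
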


To prove Proposition \ref{prop-lowerb} we will need some classical results on doubly weighted one-dimensional Hardy inequalities. For the proof we refer to \cite{muck, tom}, see also \cite{flw} and references therein.

\begin{lemma}\label{thm-class-1}
Let $U,W$ be nonnegative, measurable functions on $(0,\infty)$ and let $f$ be a locally absolutely continuous function on $(0,\infty)$. Then the inequality 
\begin{equation} 
	\int_0^\infty W(t)\, |f(t)|^2\, dt \, \leq\, C(U,W) \, \int_0^\infty U(t)\, |f'(t)|^2\, dt
\end{equation} 
holds
\begin{enumerate}
	\item[(a)] if $\liminf_{t\to\infty} |f(t)|=0$ with
	\begin{align} 
		C(U,W) & = 4 \, \sup_{s>0} \Big (\int_s^\infty U(t)^{-1}\, dt \Big)  \Big (\int_0^s W(t)\, dt \Big) . \label{C-upperb-1} 
	\end{align}
	\item[(b)] if $\liminf_{t\to 0} |f(t)|=0$ with
	\begin{align} 
		C(U,W) & = 4 \, \sup_{s>0} \Big (\int_0^s U(t)^{-1}\, dt \Big)  \Big (\int_s^\infty W(t)\, dt \Big) . \label{C-upperb-3}
	\end{align} 
\end{enumerate}
\end{lemma}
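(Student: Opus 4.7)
The first reduction is to eliminate the $h$-dependence. Since Assumption \ref{ass} (with $R=1$) gives $e^{\pm h(x)} \geq (m^\mpp)^{-1}(1+|x|)^{\pm \alpha}$, and since the chain rule yields $(\partial_1 \pm i\partial_2) e^{\mp h}(e^{\pm h}\phi) = (\partial_1\pm i\partial_2)\phi$, one has
\[
Q^\ppm[e^{\pm h}\phi] = \int_{\R^2} e^{\pm 2h}|(\partial_1\pm i\partial_2)\phi|^2\,dx \;\geq\; (m^\mpp)^{-2}\int_{\R^2}(1+|x|)^{\pm 2\alpha}|(\partial_1\pm i\partial_2)\phi|^2\,dx.
\]
It thus suffices to establish the $h$-free weighted Cauchy--Riemann-versus-gradient bound
\[
\int_{\R^2}(1+|x|)^{\pm 2\alpha}|(\partial_1\pm i\partial_2)\phi|^2\,dx \;\geq\; q_\alpha \int_{\R^2}(1+|x|)^{\pm 2\alpha}|\nabla\phi|^2\,dx.
\]

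Next I pass to polar coordinates. Writing $\phi(r,\theta)=\sum_{n\in\Z}\phi_n(r)e^{in\theta}$ and using $\partial_1\pm i\partial_2 = e^{\pm i\theta}(\partial_r\pm (i/r)\partial_\theta)$, the $\theta$-integration diagonalizes both sides and the inequality reduces to the one-parameter family
\[
\int_0^\infty \Bigl|\phi_n'\mp\tfrac{n}{r}\phi_n\Bigr|^2 w(r)\,r\,dr \;\geq\; q_\alpha \int_0^\infty\Bigl(|\phi_n'|^2+\tfrac{n^2}{r^2}|\phi_n|^2\Bigr)\,w(r)\,r\,dr, \qquad n\in\Z,
\]
with $w(r)=(1+r)^{\pm 2\alpha}$. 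Expanding the square and integrating by parts, the difference of the two integrands equals $\pm n\int_0^\infty |\phi_n|^2 w'(r)\,dr$ (boundary terms vanishing thanks to the compact support of $\phi_n$ and its order-$|n|$ vanishing at $0$). For the half of the modes $n$ on which this quantity has the favorable sign the desired bound holds trivially with constant $1 \geq q_\alpha$.

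For the remaining "bad" modes the strategy is the substitution $\phi_n(r)=r^{|n|}\tilde\psi(r)$, which leaves $\tilde\psi$ bounded and smooth at $0$ and compactly supported at $\infty$. A computation and a further integration by parts reduce the problem to the one-dimensional doubly weighted Hardy inequality
\[
(1-q_\alpha)\int_0^\infty r^{2|n|+1}(1+r)^{\pm 2\alpha}|\tilde\psi'|^2\,dr \;\geq\; 2\alpha|n|(2-q_\alpha)\int_0^\infty r^{2|n|}(1+r)^{\pm 2\alpha-1}|\tilde\psi|^2\,dr.
\]
Since $\tilde\psi$ has compact support, $\liminf_{r\to\infty}|\tilde\psi(r)|=0$, and Lemma \ref{thm-class-1}(a) applies with $U(r)=r^{2|n|+1}(1+r)^{\pm 2\alpha}$ and $W(r) = r^{2|n|}(1+r)^{\pm 2\alpha-1}$. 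The final step is to bound the Muckenhoupt-type supremum
\[
B_{|n|} := \sup_{s>0}\Bigl(\int_s^\infty U^{-1}\,dt\Bigr)\Bigl(\int_0^s W\,dt\Bigr)
\]
and invert the resulting Hardy inequality $\int W|\tilde\psi|^2 \leq 4B_{|n|}\int U|\tilde\psi'|^2$, obtaining an upper bound for $q_\alpha$ that is uniform in $|n|\geq 1$.

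\textbf{Main obstacle.} The technical heart of the proof is the sharp estimation of $B_{|n|}$. Asymptotic analysis near the endpoints shows that the product in $B_{|n|}$ tends to $0$ as $s\to 0$ and to $\sim 1/(|n|\pm\alpha)^2$ as $s\to\infty$, but in the relevant regime the supremum is attained at an intermediate scale $s$ of order unity, and the worst mode is $|n|=1$. The appearance of the factor $2^{-2\alpha-1}(1-\alpha)^2$ in the constant $q_\alpha$ reflects this: the worst balance between $\int_s^\infty U^{-1}$ and $\int_0^s W$ occurs near the transition region $r\sim 1$ of the weight $(1+r)^{2\alpha}$, where the factor $(1+s)^{-2\alpha-1}\sim 2^{-2\alpha-1}$ enters naturally. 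Carrying out this optimization carefully, keeping track of the dependence on $|n|$ to confirm uniformity, and solving the resulting algebraic equation for $q_\alpha$ in terms of $\alpha$ constitutes the bulk of the work.
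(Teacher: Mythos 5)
Your proposal does not address the statement you were asked to prove. The statement is Lemma \ref{thm-class-1}, the weighted one-dimensional Hardy inequality with the explicit Muckenhoupt constant $C(U,W)=4\,\sup_s\bigl(\int_s^\infty U^{-1}\bigr)\bigl(\int_0^s W\bigr)$ (and its mirrored version). What you have written is instead an outline of the proof of Proposition \ref{prop-lowerb}; indeed, your argument explicitly says ``Lemma \ref{thm-class-1}(a) applies with $U(r)=\dots$ and $W(r)=\dots$'', so you are \emph{invoking} the very lemma whose proof was requested. Using a statement as a black box is not a proof of it. (For the record, your outline of Proposition \ref{prop-lowerb} is in the same spirit as the paper's: pass to Fourier modes, integrate by parts, treat the favorable modes trivially, and for the unfavorable modes use the weighted Hardy inequality after the substitution $\phi_n = r^{|n|}\tilde\psi$, which is the paper's choice $f = r^{\mp m}\phi_m$. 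But that is a different task.)

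What a proof of Lemma \ref{thm-class-1}(a) actually requires is the classical Muckenhoupt argument. Set $B=\sup_{s>0}\bigl(\int_s^\infty U^{-1}\,dt\bigr)\bigl(\int_0^s W\,dt\bigr)$ and assume $B<\infty$ (otherwise nothing to prove). By $\liminf_{t\to\infty}|f(t)|=0$ one may write, for a.e.\ $t$ along a suitable sequence and then by Fatou, $|f(t)|\le \int_t^\infty |f'(s)|\,ds$. Split $|f'|=\bigl(|f'|\,U^{1/2} g^{1/4}\bigr)\cdot\bigl(U^{-1/2}g^{-1/4}\bigr)$ with $g(s):=\int_s^\infty U^{-1}\,dr$; Cauchy--Schwarz gives $|f(t)|^2\le 2 g(t)^{1/2}\int_t^\infty |f'|^2 U g^{1/2}\,ds$. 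Multiply by $W(t)$, integrate, and exchange the order of integration; what remains is to show $\int_0^s W(t) g(t)^{1/2}\,dt\le 2B\, g(s)^{-1/2}$, which follows by an integration by parts using $\int_0^s W\,dt\le B/g(s)$ and the monotonicity of $g$. That yields the constant $4B$. Part (b) is obtained by the substitution $t\mapsto 1/t$. The paper itself does not spell this out but cites \cite{muck,tom,flw}; your task was to supply such a proof, not to use the result.

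Two further small points about the part you did write: the claim that the Muckenhoupt supremum $B_{|n|}$ ``tends to $\sim 1/(|n|\pm\alpha)^2$ as $s\to\infty$'' and is ``attained at an intermediate scale'' is not verified and is in fact not what the paper's computation shows (the paper bounds the supremum separately on $(0,1]$ and $(1,\infty)$ by explicit quantities of order $m^{-2}$); and the final optimization over $\vartheta$ that produces the explicit $q_\alpha$ in \eqref{q-alpha} is entirely missing from your sketch.
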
 

We now turn to the proof of the main result of this section. The argument has some similarities with one used in \cite{FaKo}, but our focus is different.

\begin{proof}[Proof of Proposition \ref{prop-lowerb}]
	By the bounds \eqref{h-bounds}, we have
	$$
	Q^\ppm[e^{\pm h} \phi] \geq (m^\mpp)^{-2} \int_{\R^2} (1+|x|)^{\pm 2\alpha} |(\partial_{1} \pm i \partial_{2}) \phi |^2\, dx \,.
	$$
	For $\alpha=0$, the assertion follows immediately from the fact that
	$$
	\int_{\R^2} |(\partial_{1} \pm i \partial_{2}) \phi |^2\, dx
	= \int_{\R^2} |\nabla \phi |^2\, dx \,.
	$$
	
	For $\alpha>0$ we introduce polar coordinates $x=(r\cos\theta,r\sin\theta)$ and expand $\phi(r\,\cdot)$ into a Fourier series,
	\begin{equation*}
		\phi(x) = \sum_{m\in\Z} e^{im\theta}\, \phi_m(r) \,,
		\qquad
		\phi_m(r) = \frac{1}{2\pi} \int_0^{2\pi}\!\! e^{-im\theta}\, v(r,\theta)\, d\theta \,.
	\end{equation*}
	A computation (see also \cite[Section~10]{weidl}) shows that
\begin{equation}\label{vm-identity1}
\int_0^{2\pi} |(\partial_{1} \pm i \partial_{2}) \phi |^2\, d\theta=  \sum_{m\in\Z}  \Big | \phi'_m(r) \mp \frac{m\, \phi_m(r)}{r} \Big|^2
\end{equation} 
and
\begin{equation}\label{vm-identity2}
\int_0^{2\pi} |\nabla \phi |^2\, d\theta = \sum_{m\in\Z} \left( |\phi_m'(r)|^2 + \frac{m^2}{r^2} |\phi_m(r)|^2 \right).
\end{equation}
Thus, the assertion will follow if we can prove that for each $m\in\Z$,
\begin{equation}
	\label{eq:weightedproofgoal}
	\int_0^\infty (1+r)^{\pm 2\alpha} \,\Big | \phi'_m(r) \mp \frac{m\, \phi_m(r)}{r} \Big|^2 r\,dr \geq q_\alpha \int_0^\infty (1+r)^{\pm 2\alpha} \left( |\partial_r \phi_m(r)|^2 + \frac{m^2}{r^2} |\phi_m(r)|^2 \right) r\,dr \,.
\end{equation}

Integrating by parts we find
\begin{equation}
	\label{eq:weightedproofibp}
	\int_0^\infty (1+r)^{\pm 2\alpha} \,\Big | \phi'_m(r) \mp \frac{m\, \phi_m(r)}{r} \Big|^2 r\,dr
	= \int_0^\infty (1+r)^{\pm 2\alpha}\,   \Big( |\phi'_m|^2 + \frac{m^2\, |\phi_m|^2}{r^2} +2\alpha m \frac{ |\phi_m|^2}{(1+r)r} \Big)\, r\, dr \,.
\end{equation}
When $m\geq 0$ the last term on the right side is nonnegative and we arrive at \eqref{eq:weightedproofgoal}, even with constant $1$ instead of $q_\alpha$. (Note that $q_\alpha\leq 1$.)

From now on we assume that $m\leq -1$. The basic idea is to prove a Hardy inequality that allows us to absorb the last term on the right side of \eqref{eq:weightedproofibp} into the left side. We will apply Lemma \ref{thm-class-1} with 
$$
f(r) = r^{\mp m} \phi_m(r) \,,
\qquad
U(r) = r^{\pm 2m+1} (1+r)^{\pm 2\alpha} \,,
\qquad
W(r) = r^{\pm 2m-1} (1+r)^{\pm 2\alpha} \,. 
$$
Note that the left side of \eqref{eq:weightedproofibp} is equal to $\int_0^\infty U(r) |f'(r)|^2\,dr$. In order to bound the constant in Lemma \ref{thm-class-1}, we distinguish two cases according to the sign.

\medskip

\emph{Case of the upper sign.} In this case we have $\liminf_{t\to 0} |f(t)|=0$, so we aim at applying part (b) of Lemma \ref{thm-class-1}. We have
\begin{equation} \label{V-1}
\int_0^s U(t)^{-1}\, dt  = \int_0^s t^{-2m-1} (1+t)^{-2\alpha}\, dt \, \leq \, \frac{s^{-2m}}{2|m|}\,  \1_{(0,1]}(s) +  \frac{s^{-2m-2\alpha}}{2(|m|-\alpha)} \, \1_{(1,\infty)}(s)
\end{equation}
and
\begin{equation*} 
\int_s^\infty W(t)\, dt  = \int_s^\infty  t^{-1+2m} (1+t)^{2\alpha}\, dt \, \leq \, 2^{2\alpha}\ \Big[  \frac{s^{2m}}{2|m|} +\frac{1}{2(|m|-\alpha)} \Big] \,\1_{(0,1]}(s) +2^{2\alpha} \frac{s^{2m+2\alpha}}{2(|m|-\alpha)} \,\1_{(1,\infty)}(s) . 
\end{equation*}
Hence 
\begin{align*} 
 \sup_{0<s\leq 1} \Big (\int_0^s U(t)^{-1}\, dt \Big)  \Big (\int_s^\infty W(t)\, dt \Big)  &\leq\,   \frac{2^{2\alpha}}{4m^2} +  \frac{2^{2\alpha}}{4|m| (|m|-\alpha)} \, \leq\,   \frac{2^{2\alpha+1}}{4m^2(1-\alpha)} \, ,
\end{align*}
where we have used the elementary bound
\begin{equation*} %\label{m^2-estim}
	(k-\alpha)^2 \,\geq \, k^2\, (1-\alpha)^2 \qquad 0 <\alpha <1, \qquad k\in\Z.
\end{equation*}
Similarly, 
\begin{align*} 
 \sup_{1<s<\infty} \Big (\int_0^s U(t)^{-1}\, dt \Big)  \Big (\int_s^\infty W(t)\, dt \Big)  &\leq\,   \frac{2^{2\alpha}}{4(|m|-\alpha)^2} \, \leq\,   \frac{2^{2\alpha}}{4m^2(1-\alpha)^2} \, .
\end{align*}
 Altogether we deduce from Lemma \ref{thm-class-1} that 
\begin{equation} \label{hardy-m-bigger-alpha}
\int_0^\infty (1+r)^{2\alpha}\,  \,\Big | \phi'_m - \frac{m\, \phi_m}{r} \Big|^2 \, r\,dr\,  \geq\, 2^{-2\alpha-1}\,  (1-\alpha)^2 \, m^2  \int_0^\infty \frac{(1+r)^{2\alpha}}{r^2}\, |\phi_m|^2\, r dr \,.
\end{equation}

\medskip

\emph{Case of the lower sign.}
In this case we have $\liminf_{t\to\infty} |f(t)|=0$, so we aim at applying part (a) of Lemma \ref{thm-class-1}. Note that $\int_s^\infty U(t)^{-1}\,dt$ in the present case coincides with $\int_s^\infty W(t)\,dt$ in the case of the upper sign and similarly $\int_0^s W(t)\,dt$ in the present case coincides with $\int_0^s U(t)^{-1}\,dt$ in the case of the upper sign. Therefore we obtain from the previous bounds
\begin{align*} 
	\sup_{0<s\leq 1} \Big (\int_s^\infty U(t)^{-1}\, dt \Big)  \Big (\int_0^s W(t)\, dt \Big)  & \leq\,   \frac{2^{2\alpha+1}}{4m^2(1-\alpha)} \, ,
\end{align*}
and
\begin{align*} 
	\sup_{1<s<\infty} \Big (\int_s^\infty U(t)^{-1}\, dt \Big)  \Big (\int_0^s W(t)\, dt \Big)  &\leq\, \frac{2^{2\alpha}}{4m^2(1-\alpha)} \, .
\end{align*}
Altogether we deduce from Lemma \ref{thm-class-1} that
\begin{equation} \label{hardy-m-bigger-alphab}
	\int_0^\infty (1+r)^{-2\alpha}\,  \,\Big | \phi'_m + \frac{m\, \phi_m}{r} \Big|^2 \, r\,dr\,  \geq\, 2^{-2\alpha-1}\,  (1-\alpha)^2 \, m^2  \int_0^\infty \frac{(1+r)^{-2\alpha}}{r^2}\, |\phi_m|^2\, r dr \,.
\end{equation}

\medskip

\emph{Conclusion of the proof.}
We combine the integration by parts identity \eqref{eq:weightedproofibp} with the Hardy inequalities \eqref{hardy-m-bigger-alpha} and \eqref{hardy-m-bigger-alphab} and obtain, for any $\vartheta\in[0,1]$,
\begin{align*}
	& \int_0^\infty (1+r)^{\pm 2\alpha} \,\Big | \phi'_m(r) \mp \frac{m\, \phi_m(r)}{r} \Big|^2 r\,dr
	 \geq (1-\vartheta) \int_0^\infty (1+r)^{\pm 2\alpha}\,   \Big( |\phi'_m|^2 + \frac{m^2\, |\phi_m|^2}{r^2} \Big)\, r\, dr \\
	& \qquad \qquad \qquad \qquad \quad + \left( (1-\vartheta)2\alpha m + \vartheta 2^{-2\alpha-1}\,  (1-\alpha)^2 \, m^2 \right) 
	\int_0^\infty (1+r)^{\pm 2\alpha}\, \frac{ |\phi_m|^2}{(1+r)r} \, r\, dr \,.
\end{align*}
Here in the Hardy inequalities, we estimated $r^{-2}\geq (r(1+r))^{-1}$. We now choose
$$
\vartheta = \frac{2\alpha|m|}{2\alpha|m| + 2^{-2\alpha-1}(1-\alpha)^2 m^2} \,,
$$
so that the last term vanishes. The constant in front of the first term is equal to
$$
1-\vartheta = \frac{2^{-2\alpha-1}(1-\alpha)^2 m^2}{2\alpha|m| + 2^{-2\alpha-1}(1-\alpha)^2 m^2} \,.
$$
Since this is monotone increasing in $|m|$, a lower bound is obtained by setting $m=-1$, which gives the constant $q_\alpha$. This proves \eqref{eq:weightedproofgoal}.
\end{proof}

\begin{remark}
	The assumption $\alpha<1$ in Proposition \ref{prop-lowerb} is optimal. Indeed, the inequality 
\begin{equation} \label{lowerb-fail}
 \int_{\R^2}  (1+|x|)^{-2\alpha}\, |(\partial_{1} - i \partial_{2}) \phi |^2\, dx  \, \geq\, c \int_{\R^2} (1+|x|)^{-2\alpha}\, |\nabla \phi|^2\, dx \qquad \forall\, \phi\in C^1_c(\R^2)
\end{equation}
fails to hold, for any $c>0$, as soon as $\alpha\geq 1$. Indeed, by density it would then also hold for the functions $\phi^{(R)}$, $R>0$, given by 
$$
\phi^{(R)}(r\cos\theta, r\sin\theta) := 
\left\{
\begin{array}{l@{\quad}l}
 r e^{-i\theta} & \text{if}\ r \leq R \,,   \\[5pt]
\frac{R^2}{r} \, e^{-i\theta} & \text{if}\ R  <  r\,  .
\end{array}
\right. 
$$  
A short calculation using \eqref{vm-identity1} and \eqref{vm-identity2}, however, shows that 
$$
\lim_{R\to \infty}  \frac{ \int_{\R^2}  (1+|x|)^{-2\alpha}\, |(\partial_{1} - i \partial_{2}) \phi^{(R)} |^2\, dx}{\int_{\R^2} (1+|x|)^{-2\alpha}\, |\nabla \phi^{(R)}|^2\, dx} = 0 \,,
$$
which obviously contradicts \eqref{lowerb-fail}.

Meanwhile, inequality \eqref{eq:weightedlower} \emph{with the upper sign} can be extended to all $\alpha$ such that $1 < \alpha \not\in\Z$. Since we will not use this bound, we omit its proof.
\end{remark} 

%%%%%%%%%%%%%%%%%%%%%%%%%%%%%%%%%%%
%%%%%%%%%%%%%%%%%%%%%%%%%%%%%%%%%%%

\subsection{Equivalence of quadratic forms}  

So far we have worked with the operators $H^\ppm$ in the space $\Lp^2(\R^2)$. Now we pass to certain operators $\mathcal H^\ppm$ in the weighted spaces $\Lp^2(\R^2,(1+|x|)^{\pm 2\alpha}\,dx)$ and show that Lieb--Thirring inequalities for the new operators imply Lieb--Thirring inequalities for the original operators.

We consider the quadratic form
$$
\int_{\R^2} (1+|x|)^{\pm 2\alpha}\, |\nabla \phi|^2\, dx
$$
in the Hilbert space $\Lp^2(\R^2,(1+|x|)^{\pm 2\alpha}\,dx)$. The form domain consists of functions $\phi\in H^1_{\rm loc}(\R^2)\cap \Lp^2(\R^2,(1+|x|)^{\pm 2\alpha}\,dx)$ for which the form is finite. It is easy to see that this form is closed in $\Lp^2(\R^2,(1+|x|)^{\pm 2\alpha}\,dx)$. We denote the resulting selfadjoint, nonnegative operator in $\Lp^2(\R^2,(1+|x|)^{\pm 2\alpha}\,dx)$ by $\mathcal H^\ppm$.

From Proposition \ref{prop-lowerb} we deduce the following upper bound on the Riesz means that we are interested in.

\begin{corollary}\label{comparison}
	Let Assumption \ref{ass} be satisfied with $\alpha\geq 0$ and $R=1$. Then for any $\gamma>0$,
	$$
	\tr_{\,\textup \Lp^2(\R^2,dx)} \left( H^\ppm + V \right)_\m^\gamma \leq q_\alpha^\gamma \, \tr_{\, \textup \Lp^2(\R^2,(1+|x|)^{\pm 2\alpha}\,dx)} \left( \mathcal H^\ppm - q_\alpha^{-1} (m^\pp m^\m)^2 V_\m \right)_\m^\gamma
	$$
	with the constants $q_\alpha$ from Proposition \ref{prop-lowerb} and $m^\ppm$ from \eqref{h-bounds}.
\end{corollary}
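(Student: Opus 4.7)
The plan is to pull $H^\ppm+V$ back to a weighted $\Lp^2$ space via the isometry $\phi\mapsto e^{\pm h}\phi$ and then compare counting functions with those of the companion operator
$$
\tilde A := \mathcal H^\ppm - q_\alpha^{-1}(m^\pp m^\m)^2\, V_\m
\qquad\text{on}\qquad \Lp^2(\R^2,(1+|x|)^{\pm 2\alpha}\,dx).
$$

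First I would observe that $\phi\mapsto e^{\pm h}\phi$ is a unitary from $\Lp^2(\R^2,e^{\pm 2h}\,dx)$ onto $\Lp^2(\R^2,dx)$, so $H^\ppm+V$ is unitarily equivalent to the operator $\hat A$ on $\Lp^2(\R^2,e^{\pm 2h}\,dx)$ whose quadratic form is $\hat q[\phi]=Q^\ppm[e^{\pm h}\phi]+\int V\,e^{\pm 2h}|\phi|^2\,dx$. Combining the lower bound in Proposition \ref{prop-lowerb} with the pointwise estimate $e^{\pm 2h}\leq (m^\ppm)^2(1+|x|)^{\pm 2\alpha}$ (from Assumption \ref{ass} at $R=1$) applied to the negative contribution $-V_\m e^{\pm 2h}$, and then factoring out $q_\alpha(m^\mpp)^{-2}$ using $m^\ppm m^\mpp=m^\pp m^\m$, yields the form inequality
$$
\hat q[\phi]\geq q_\alpha\,(m^\mpp)^{-2}\,\tilde q[\phi] \qquad \text{for all } \phi\in C_c^1(\R^2),
$$
where $\tilde q$ is the quadratic form of $\tilde A$ on $\Lp^2(\R^2,(1+|x|)^{\pm 2\alpha}\,dx)$.

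Next, by the Glazman principle, $N(-\lambda;\hat A)$ equals the maximal dimension of a subspace on which $\hat q[\phi]+\lambda\|\phi\|_{e^{\pm 2h}}^2\leq 0$. On such a subspace, combining the form inequality above with the reverse weight bound $e^{\pm 2h}\geq (m^\mpp)^{-2}(1+|x|)^{\pm 2\alpha}$ gives
$$
0\geq \hat q[\phi]+\lambda\|\phi\|_{e^{\pm 2h}}^2 \geq (m^\mpp)^{-2}\!\left(q_\alpha\,\tilde q[\phi]+\lambda\|\phi\|_{(1+|x|)^{\pm 2\alpha}}^2\right),
$$
so the same subspace satisfies $\tilde q[\phi]+q_\alpha^{-1}\lambda\|\phi\|_{(1+|x|)^{\pm 2\alpha}}^2\leq 0$ and hence $N(-\lambda;\hat A)\leq N(-q_\alpha^{-1}\lambda;\tilde A)$. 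The layer-cake identity $\tr A_\m^\gamma=\gamma\int_0^\infty\lambda^{\gamma-1}N(-\lambda;A)\,d\lambda$ together with the substitution $\mu=q_\alpha^{-1}\lambda$ then gives $\tr\hat A_\m^\gamma\leq q_\alpha^\gamma\tr\tilde A_\m^\gamma$, which is the claim since $\tr(H^\ppm+V)_\m^\gamma=\tr\hat A_\m^\gamma$.

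The main care in this argument lies in the bookkeeping of the two directions of the weight equivalence $(m^\mpp)^{-2}(1+|x|)^{\pm 2\alpha}\leq e^{\pm 2h}\leq (m^\ppm)^2(1+|x|)^{\pm 2\alpha}$: one direction controls the term $-V_\m e^{\pm 2h}$ in $\hat q$, while the other converts the subspace condition written in the $e^{\pm 2h}$-inner product into one written in the $(1+|x|)^{\pm 2\alpha}$-inner product during the Glazman step. It is precisely the product $m^\ppm m^\mpp=m^\pp m^\m$ emerging when these two estimates are combined that produces both the clean coupling renormalization $V_\m\mapsto q_\alpha^{-1}(m^\pp m^\m)^2 V_\m$ and the scaling factor $q_\alpha^\gamma$ on the right-hand side.
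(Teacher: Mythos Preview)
Your argument is essentially the same as the paper's: both establish the form inequality via Proposition \ref{prop-lowerb} and the two-sided weight bound, deduce the counting-function inequality $N(H^\ppm+V+\tau)\le N(\mathcal H^\ppm-q_\alpha^{-1}(m^\pp m^\m)^2V_\m+q_\alpha^{-1}\tau)$ by the variational principle, and integrate via the layer-cake formula. Your additional packaging through the unitary $\phi\mapsto e^{\pm h}\phi$ onto $\Lp^2(e^{\pm 2h}\,dx)$ is cosmetic.

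One point deserves explicit mention. Your form inequality $\hat q[\phi]\ge q_\alpha(m^\mpp)^{-2}\tilde q[\phi]$ is only established for $\phi\in C_c^1(\R^2)$, while the Glazman step as you phrase it requires a subspace of the full form domain of $\hat A$. The paper closes this by invoking that $e^{\pm h}C_c^1(\R^2)$ is a form core for $H^\ppm$ (by \cite[Theorem 2.5]{ErVo}) and that $C_c^1(\R^2)$ is a form core for $\mathcal H^\ppm$; this lets one compute both counting functions using only subspaces of $C_c^1(\R^2)$, where the form comparison applies directly. You should state this explicitly rather than leave it implicit.
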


\begin{proof}
	According to Proposition \ref{prop-lowerb} we have for all $\phi\in C^1_c(\R^2)$
	\begin{align*}
		& Q^\ppm[e^{\pm h} \phi] + \int_{\R^2} V e^{\pm 2h} |\phi|^2 \,dx \\
		& \geq q_\alpha  (m^\mpp)^{-2} \left( \int_{\R^2} (1+|x|)^{\pm 2\alpha}\, |\nabla \phi|^2\, dx - q_\alpha^{-1} (m^\ppm m^\mpp)^2 \int_{\R^2} (1+|x|)^{\pm 2\alpha} V_\m |\phi|^2\,dx \right) 
	\end{align*}
	and
	$$
	\int_{\R^2} e^{\pm 2h} |\phi|^2 \,dx \geq (m^\mpp)^{-2} \int_{\R^2} (1+|x|)^{\pm 2\alpha} |\phi|^2\,dx \,.
	$$
	We know from \cite[Theorem 2.5]{ErVo} that the set $e^{\pm h} C^1_c(\R^2)$ is a form core for the operator $H^\ppm$. It is also easy to see that $C^1_c(\R^2)$ is a form core for $\mathcal H^\ppm$. Therefore these inequalities imply, by the variational principle
	$$
	N(H^\ppm + V+\tau) \leq N(\mathcal H^\ppm - q_\alpha^{-1} (m^\ppm m^\mpp)^2 V_\m + q_\alpha^{-1} \tau)
	\qquad\text{for all}\ \tau\geq 0 \,.
	$$
	Here, $N(T)$ denotes the number of negative eigenvalues, counting multiplicities, of a selfadjoint operator $T$. Using the identity
	$$
	\tr T_\m^\gamma = \gamma \int_0^\infty N(T+\tau)\, \tau^{\gamma-1}\,d\tau \,,
	$$
	we obtain the claimed inequality.
\end{proof}

\begin{remark}
	If instead of Proposition \ref{prop-lowerb} one uses inequality \eqref{eq:weightedupper}, one can argue similarly to prove the `reverse' inequality
	$$
	\tr_{\, \textup \Lp^2(\R^2,(1+|x|)^{\pm 2\alpha}\,dx)} \left( \mathcal H^\ppm + V \right)_\m^\gamma \leq 2^\gamma \, 
	\tr_{\, \textup \Lp^2(\R^2,dx)} \left( H^\ppm - 2^{-1} (m^\m m^\pp)^2 V_\m \right)_\m^\gamma
	$$ 
	In this sense the problem of proving Lieb--Thirring inequalities for $H^\ppm + V$ is equivalent, up to constants, to proving such inequalities for $\mathcal H^\ppm +V$. From now on we will deal with the latter problem.
\end{remark}

%%%%%%%%%%%%%%%%%%%%%%%%%%%%%%%%%%%%
%%%%%%%%%%%%%%%%%%%%%%%%%%%%%%%%%%%%

\section{Proof of Theorem \ref{thm-main-1}}\label{sec:subcritical}

In this section we prove the first one of our main results, Theorem \ref{thm-main-1}. This is substantially simpler than the second one, Theorem \ref{thm-main-3}, since either the operators are subcritical ($H^+$ with $\alpha>0$), or they are critical, but the endpoint value of $\gamma$ is excluded ($H^\ppm$ with $\alpha=0$).

%%%%%%%%%%%%%%%%%%%%%%%%%%%%%%%%%%%%
%%%%%%%%%%%%%%%%%%%%%%%%%%%%%%%%%%%%

\subsection{Proof of Theorem \ref{thm-main-1} for $\alpha=0$}

While the approach in the following subsection works for $\alpha=0$ as well, one can already at this point finish easily the proof in this case by adapting the argument in \cite{FrSiWe}.

\begin{proof}[Proof of Theorem \ref{thm-main-1} for $\alpha=0$]
	By the argument at the end of Subsection \ref{sec:strategy} we may assume $R=1$. For $\alpha=0$, the operators $\mathcal H^\ppm$ coincide with the Laplacian $-\Delta$ in $\Lp^2(\R^2)$. Therefore, Corollary \ref{comparison}, together with the usual Lieb--Thirring inequality in $\R^2$, see \eqref{lt-2D}, implies that for any $\gamma>0$
	$$
	\tr_{\, \Lp^2(\R^2,dx)} \left( H^\ppm + V \right)_\m^\gamma \leq \tr_{\, \Lp^2(\R^2,dx)} \left( -\Delta - (m^\pp m^\m)^2 V_\m \right)_\m^\gamma \leq L_\gamma (m^\pp m^\m)^{2(\gamma+1)} \int_{\R^2} V(x)_\m^{\gamma+1}\,dx \,.
	$$
	This is the claimed inequality.
\end{proof}

%%%%%%%%%%%%%%%%%%%%%%%%%%%%%%%%%%%%
%%%%%%%%%%%%%%%%%%%%%%%%%%%%%%%%%%%%

\subsection{Proof of Theorem \ref{thm-main-1} for general $\alpha$}

We will use of method of Lieb \cite{lieb} of proving Lieb--Thirring inequalities, which is based on a pointwise upper bound on the heat kernel. Such pointwise bounds have been studied in great generality by Grigor'yan, Saloff-Coste and others; see \cite{SC0,gsc,Gr,sc} and references therein. The usefulness of Grigor'yan--Saloff-Coste theory in the context of Lieb--Thirring inequalities was observed in~\cite{FrKo}.

In order to apply the results of Grigor'yan and Saloff-Coste it is convenient to exchange the weight $(1+|x|)^{\ppm 2\alpha}$ with the smooth weight $(1+|x|^2)^{\ppm\alpha}$. Strictly speaking, this replacement is not necessary, as one can verify that the relevant results of Grigor'yan--Saloff-Coste theory remain valid for our weight that this smooth away from a point and Lipschitz near that point. However, to shorten the presentation we will make this replacement at the expense of a further, controlled deterioration of the constant.

We consider the quadratic form
$$
\int_{\R^2} (1+|x|^2)^{\pm \alpha}\, |\nabla \phi|^2\, dx
$$
in the Hilbert space $\Lp^2(\R^2,(1+|x|^2)^{\pm \alpha}\,dx)$. This form, with form domain consisting of functions $\phi\in H^1_{\rm loc}(\R^2)\cap \Lp^2(\R^2,(1+|x|^2)^{\pm \alpha}\,dx)$ for which the form is finite, is nonnegative and closed. We denote the resulting selfadjoint, nonnegative operator in $\Lp^2(\R^2,(1+|x|^2)^{\pm \alpha}\,dx)$ by $\mathcal K^\ppm$.

Using the bounds
$$
2^{-\frac12}\, (1+|x|) \leq (1+|x|^2)^\frac12 \leq 1+|x|
$$
and proceeding as in the proof of Corollary \ref{comparison}, we find that
\begin{equation}\label{eq:comparison1}
	\begin{aligned}
		\tr_{\,\textup\Lp^2(\R^2,(1+|x|)^{\ppm 2\alpha}dx)} \left( \mathcal H^\ppm +V \right)_\m^\gamma & \leq \tr_{\,\textup\Lp^2(\R^2,(1+|x|^2)^{\ppm \alpha}dx)} \left( \mathcal K^\ppm - 2^{\alpha} \, V_\m \right)_\m^\gamma \,, \\[5pt]
		\tr_{\,\textup\Lp^2(\R^2,(1+|x|^2)^{\ppm \alpha}dx)} \left( \mathcal K^\ppm + V \right)_\m^\gamma
		& \leq 
		\tr_{\,\textup\Lp^2(\R^2,(1+|x|)^{\ppm 2\alpha}dx)} \left( \mathcal H^\ppm - 2^\alpha \, V_\m \right)_\m^\gamma \,.
	\end{aligned}
\end{equation}

In view of these inequalities we will now prove Lieb--Thirring inequalities for the operator $\mathcal K^\ppm$. Let
$$
p^\ppm(t; x,y) := e^{-t \mathcal K^\ppm}(x,y)
$$
denote the heat kernel generated by $\mathcal K^\ppm$. According to \cite[Equation (4.10)]{gsc}, for any $0\leq\alpha<1$ there is a constant $C$ such that
\begin{equation}  \label{p-upperb}
	p^\ppm (t; x,x) \, \leq \, C\  t^{-1} \, (1+|x| +\sqrt t)^{\mp 2\alpha} \qquad \forall\, t>0, \quad \forall \, x\in\R^2 \,.
\end{equation}

Let us comment on the bound \eqref{p-upperb}.  The result in \cite{gsc} is much more general. It gives matching upper and lower bounds for $p^\ppm (t; x,y)$ for general $x,y\in\R^2$. Also in the case of the upper sign the restriction $\alpha<1$ is not necessary. When comparing \eqref{p-upperb} with \cite[Equation (4.10)]{gsc}, note that our $\pm2\alpha$ plays the role of their $\alpha$. We also note that there is a typographical error in \cite[Equation (4.10)]{gsc}, which we have corrected in \eqref{p-upperb}. (Indeed, inserting the formula for $\mu_\alpha(B(x,r))$ before \cite[Equation (4.10)]{gsc} into \cite[Theorem 2.7]{gsc}, we see that $\alpha$ there needs to be replaced by $\alpha/2$.)

Lieb's method \cite{lieb} yields the upper bound 
\begin{equation} \label{lieb}
\tr\left(\mathcal K^\ppm + V\right)_\m^\gamma \, \leq \,
K_{a,\gamma}\, \int_{\R^2}\, \int_{0}^\infty  p^\ppm(t;x,x)\,
t^{-1-\gamma}\, (t\, V(x)+a)_\m\, dt\, (1+|x|^2)^{\ppm\alpha}\, dx \,,
\end{equation}
valid for any parameter $a>0$, with constant
\begin{equation}
K_{a,\gamma}= \Gamma(\gamma+1)\, \left(
e^{-a}-a\, \int_a^\infty\, s^{-1}\, e^{-s}\, ds\right)^{-1}.
\end{equation}
We now turn to the proof of our first main result.
 
\begin{proof}[Proof of Theorem \ref{thm-main-1}]
	Inequality \eqref{p-upperb} implies 
	\begin{equation}
		\label{eq:heatkernelsimpleplus}
		p^\pp(t; x,x) \, \leq \,  \frac{C_\alpha}{t} \, (1+|x|)^{-2\alpha} 
		\qquad \forall\, t>0, \quad \forall \, x\in\R^2 \, . 
	\end{equation}
	Inserting this into \eqref{lieb} we obtain, for any $\gamma>0$,
	$$
	\tr (\mathcal K^\pp + V)_\m^\gamma\   \leq\  C_\alpha\,  K_{a,\gamma}  \int_{\R^2}\, \int_{0}^\infty  
	t^{-2-\gamma}\, (t\, V(x)+a)_\m\, dt\, dx =  \frac{C_\alpha\,  K_{a,\gamma}}{a^\gamma\, \gamma(\gamma+1)}   \int_{\R^2} V(x)_\m^{\gamma+1} \, dx \,.
	$$
	Combining this with the bounds from Corollary \ref{comparison} and from \eqref{eq:comparison1} we obtain (for $R=1$, as we may assume) 
	\begin{align*}
		\tr_{\, \Lp^2(\R^2,dx)} \left( H^\pp + V \right)_\m^\gamma & \leq q_\alpha^\gamma \, \tr_{\, \Lp^2(\R^2,(1+|x|)^{2\alpha}\,dx)} \left( \mathcal H^\pp - q_\alpha^{-1} (m^\pp m^\m)^2 V_\m \right)_\m^\gamma \\
		& \leq q_\alpha^\gamma \, \tr_{\, \Lp^2(\R^2,(1+|x|^2)^{\alpha}\,dx)} \left( \mathcal K^\pp - q_\alpha^{-1} \, 2^\alpha \, (m^\pp m^\m)^2 V_\m \right)_\m^\gamma \\
		& = q_\alpha^{-1} \, 2^{\alpha(\gamma+1)} \, \frac{C_\alpha\,  K_{a,\gamma}}{a^\gamma\, \gamma(\gamma+1)} \,  (m^\pp m^\m)^{2(\gamma+1)} \, \int_{\R^2} V(x)_\m^{\gamma+1} \, dx \,,
	\end{align*}
	which is the claimed Lieb--Thirring inequality for $H^\pp+V$. The proof for $H^\m+V$ when $\alpha=0$ is similar.
\end{proof}

\begin{remark}\label{hminusineqality}
	It is interesting to note that the inequality in Theorem \ref{thm-main-3} can be obtained by the same method for $\gamma>\alpha$. Indeed, inequality \eqref{p-upperb} implies
	\begin{equation}
		\label{eq:heatkernelsimpleminus}
		p^\m(t; x,x) \, \leq\,  C_\alpha\, \big(  t^{-1} \, (1+|x|)^{2\alpha}  +  t^{\alpha-1 } \big) 
		\qquad \forall\, t>0, \quad \forall \, x\in\R^2 \,.
	\end{equation}
	Inserting this into \eqref{lieb} we obtain, for any $\gamma>\alpha$,
	\begin{align*}
		\tr (\h^\m + V)_\m^\gamma\  & \leq\  
		\frac{C_\alpha\,  K_{a,\gamma}}{a^\gamma\, \gamma(\gamma+1)}   \int_{\R^2} V(x)_\m^{\gamma+1}\,dx \\
		& \ \quad + \frac{C_\alpha \, K_{a,\gamma} }{a^{\gamma-\alpha}\, (\gamma-\alpha)(\gamma-\alpha+1)}  \int_{\R^2} (1+|x|)^{-2\alpha} \, V(x)_\m^{1+\gamma-\alpha}\, dx\,.
	\end{align*}
	Combining this with the bound from Corollary \ref{comparison}, we obtain Theorem \ref{thm-main-3} for $\gamma>\alpha$. Note that in the second term on the right side, we estimate
	$$
	(1+|x|)^{-2\alpha} \leq (m^\pp m^\m)^2 \, e^{-2(h(x)-h_0)} \,.
	$$
	(Indeed, $e^{-h(x)}\leq m^\m (1+|x|)^{-\alpha}$, so $e^{\|h\|_{\Lp^\infty(B(0,\epsilon))}}\leq m^\pp (1+\epsilon)^{\alpha}$, which according to our convention means that $e^{h_0}\leq m^\pp$.)		
\end{remark}

\begin{remark}\label{smallfieldrem}
	We claim that for any fixed $\gamma>0$ the limsup of the constants in the Lieb--Thirring inequalities in Theorems \ref{thm-main-1} and \ref{thm-main-3} remains finite as $\alpha\to 0$. This follows from the proofs that we have just given, together with the fact that the constants $C_\alpha$ in \eqref{eq:heatkernelsimpleplus} and \eqref{eq:heatkernelsimpleminus} remain bounded as $\alpha\to 0$. The latter claim follows from the explicit nature of the bounds in the Grigar'yan--Saloff-Coste theory. The basic ingredients, namely the volume doubling property and the Poincar\'e inequality (see \cite[Theorem 2.7]{gsc}), hold with constants that remain bounded as $\alpha\to 0$.
\end{remark}

%%%%%%%%%%%%%%%%%%%%%%%%%%%%%%%%%%%%
%%%%%%%%%%%%%%%%%%%%%%%%%%%%%%%%%%%%

\section{Proof of Theorem  \ref{thm-main-3}}\label{sec:critical}

In this section we prove the second of our main results, Theorem \ref{thm-main-3}. We will assume throughout that $0<\alpha<1$ and will prove this theorem only in the critical case $\gamma=\alpha$. This implies the result in the full regime $\gamma\geq\alpha$, either by the Aizenman--Lieb argument \cite{al} or by Remark \ref{hminusineqality}. Moreover, according to Corollary \ref{comparison} it suffices to prove the corresponding inequality for $\mathcal H^\m$ rather than $H^\m$.

%%%%%%%%%%%%%%%%%%%%%%%%%%%%%%%%%%%%
%%%%%%%%%%%%%%%%%%%%%%%%%%%%%%%%%%%%

\subsection{Reduction to radial functions}

For a function $f$ on $\R^2$ let
$$
\mathcal Pf(x) := (2\pi)^{-1} \int_0^{2\pi} f(|x|\cos\theta,|x|\sin\theta)\,d\theta
$$
and $\mathcal P^\bot := \1 - \mathcal P$. For any radial weight $w$ on $\R^2$, $\mathcal P$ is the orthogonal projection onto radial functions in $\Lp^2(\R^2,w(x)dx)$. The operator $\mathcal P$ commutes with $\mathcal H^\m$. Moreover, by the Schwarz inequality we have
$$
V_\m \leq 2 \mathcal P V_\m \mathcal P + 2 \mathcal P^\bot V_\m \mathcal P^\bot \,.
$$
From this, we conclude that
\begin{equation} \label{split-orth}
\tr(\mathcal H^\m + V)_\m^\gamma \leq \tr(\mathcal P(\mathcal H^\m - 2V_\m)\mathcal P)_\m^\gamma + \tr(\mathcal P^\bot(\mathcal H^\m - 2V_\m)\mathcal P^\bot )_\m^\gamma \,.
\end{equation}
We will treat the two terms on the right side separately. In this subsection we will treat the second term. We note that the first term, which will be treated in the remaining subsections, corresponds essentially to an operator in one dimension.

\begin{proposition} \label{prop-radial}
	For any $\gamma>0$,
	$$
	\tr(\mathcal P^\bot(\mathcal H^\m + V)\mathcal P^\bot )_\m^\gamma \leq \tfrac 98 \, L_\gamma \int_{\R^2} V(x)_\m^{\gamma+1}\,dx \,.
	$$ 
\end{proposition}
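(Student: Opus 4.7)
The plan is to eliminate the weight $(1+|x|)^{-2\alpha}$ by a unitary conjugation, reducing the problem to an unweighted Schr\"odinger operator on $\Lp^2(\R^2,dx)$, and then use the angular Hardy inequality available on the non-radial subspace to apply the classical two-dimensional Lieb--Thirring inequality \eqref{lt-2D}. Since $V\geq -V_\m$, we have $\mathcal P^\bot(\mathcal H^\m + V)\mathcal P^\bot \geq \mathcal P^\bot(\mathcal H^\m - V_\m)\mathcal P^\bot$, so it suffices to treat the potential $-V_\m$. Define the unitary $U\colon \Lp^2(\R^2, w\,dx) \to \Lp^2(\R^2, dx)$ by $U\phi = (1+|x|)^{-\alpha}\phi$, where $w(x) = (1+|x|)^{-2\alpha}$. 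Writing $\phi = (1+|x|)^\alpha \psi$ with $\psi = U\phi$ and performing a single integration by parts yields
\begin{equation*}
	\int w(x)\,|\nabla\phi|^2\,dx = \int|\nabla\psi|^2\,dx + \int W(x)\,|\psi|^2\,dx,
	\qquad W(x) := \frac{\alpha(\alpha|x|-1)}{|x|(1+|x|)^2}.
\end{equation*}
Because $w$ is radial, $U$ commutes with $\mathcal P^\bot$, and since $V_\m$ is a multiplication operator, $U V_\m U^{-1} = V_\m$. The task reduces to estimating $\tr_{\Lp^2(dx)}\bigl(\mathcal P^\bot(-\Delta + W - V_\m)\mathcal P^\bot\bigr)_\m^\gamma$.

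The second ingredient is the angular Hardy inequality on the non-radial subspace. For $\psi = \sum_{m\neq 0} e^{im\theta}\psi_m(r) \in \mathcal P^\bot\Lp^2(\R^2,dx)$, the identity \eqref{vm-identity2} together with $m^2\geq 1$ for $m\neq 0$ gives $\int|\nabla\psi|^2\,dx \geq \int|\psi|^2/|x|^2\,dx$. An elementary calculus computation shows $\max_{r>0} r^2 W_\m(r) = \alpha/(4(1+\alpha))\leq 1/8$ for $\alpha\in[0,1)$, attained at $r = 1/(1+2\alpha)$. Splitting $-\Delta = \theta(-\Delta) + (1-\theta)(-\Delta)$ and applying angular Hardy to the $(1-\theta)$ piece yields, for any $\theta\in(0,1)$,
\begin{equation*}
	\int\bigl(|\nabla\psi|^2 + W|\psi|^2\bigr)\,dx \geq \theta\int|\nabla\psi|^2\,dx + \int\Bigl(\frac{1-\theta}{|x|^2} + W(x)\Bigr)|\psi|^2\,dx,
\end{equation*}
and the bracketed integrand is pointwise nonnegative provided $1-\theta \geq \alpha/(4(1+\alpha))$. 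This translates into the quadratic-form inequality
\begin{equation*}
	\mathcal P^\bot(-\Delta + W - V_\m)\mathcal P^\bot \geq \theta\,\mathcal P^\bot(-\Delta - \theta^{-1}V_\m)\mathcal P^\bot
\end{equation*}
on $\mathcal P^\bot \Lp^2(\R^2,dx)$.

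Finally, by the min-max principle $\tr(\mathcal P^\bot A \mathcal P^\bot)_\m^\gamma \leq \tr A_\m^\gamma$ for any self-adjoint $A$, since restricting to a subspace can only raise negative eigenvalues. Applying the classical two-dimensional Lieb--Thirring inequality \eqref{lt-2D} to $-\Delta - \theta^{-1}V_\m$ therefore yields
\begin{equation*}
	\tr\bigl(\mathcal P^\bot(-\Delta + W - V_\m)\mathcal P^\bot\bigr)_\m^\gamma \leq \theta^{\gamma} L_\gamma\, \theta^{-(\gamma+1)}\int V_\m^{\gamma+1}\,dx = \theta^{-1} L_\gamma \int V_\m^{\gamma+1}\,dx.
\end{equation*}
The main obstacle is optimizing the constant: choosing $\theta = (3\alpha+4)/(4(1+\alpha))$ at the threshold from the discriminant analysis gives $\theta^{-1} = 4(1+\alpha)/(3\alpha+4)$, which is bounded by $8/7$ for $\alpha\in[0,1)$ but does not immediately equal $9/8$. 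Matching the proposition's sharper constant presumably requires a finer decomposition---for instance, separating the Fourier modes $|m|=1$ and $|m|\geq 2$ and exploiting the stronger angular Hardy $\int|\nabla\psi|^2\,dx \geq 4\int|\psi|^2/|x|^2\,dx$ available on the $|m|\geq 2$ subspace, so that only the $|m|=1$ sector incurs the worst estimate while the high-$m$ sector contributes a harmless remainder.
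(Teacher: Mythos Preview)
Your approach is essentially the same as the paper's: conjugate away the weight by the unitary $\psi\mapsto(1+|x|)^\alpha\psi$, use the angular Hardy inequality on $\mathcal P^\bot$ to absorb the resulting potential $W$, and finish with the classical Lieb--Thirring inequality. You even compute the same $W$ and the same maximum $\max_{r>0} r^2 W_\m(r)=\alpha/(4(1+\alpha))$ at $r=1/(1+2\alpha)$.

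The only difference is \emph{where} you insert the Hardy step, and this is what costs you the constant. You split $\int|\nabla\psi|^2$ after the ground-state transform, so the bad term $\int W|\psi|^2$ enters with coefficient $1$ and you need $1-\theta\ge\alpha/(4(1+\alpha))$, giving $\theta^{-1}\le 8/7$. The paper instead takes the convex combination at the level of the \emph{weighted} form itself: writing $\text{LHS}=\vartheta\cdot\text{LHS}+(1-\vartheta)\cdot\text{LHS}$, it applies the identity to the first piece (producing $\vartheta\int|\nabla\psi|^2+\vartheta\int W|\psi|^2$) and the angular Hardy bound
\[
\int_{\R^2}(1+|x|)^{-2\alpha}|\nabla\phi|^2\,dx \;\ge\; \int_{\R^2}|x|^{-2}|\partial_\theta\psi|^2\,dx \;\ge\; \int_{\R^2}|x|^{-2}|\psi|^2\,dx
\]
to the second (the radial factors $(1+r)^{\pm\alpha}$ cancel in the angular part since $\partial_\theta\phi=(1+r)^\alpha\partial_\theta\psi$). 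Now the $W$ term carries the factor $\vartheta$, so the nonnegativity condition becomes $(1-\vartheta)/\vartheta\ge\alpha/(4(1+\alpha))$, i.e.\ $\vartheta=4(1+\alpha)/(5\alpha+4)\ge 8/9$, and the final constant is $\vartheta^{-1}\le 9/8$. No Fourier-mode splitting is needed; your speculation about separating $|m|=1$ from $|m|\ge 2$ is a detour.
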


\begin{proof}
	We shall show that
	$$
	\tr_{\, \Lp^2(\R^2,(1+|x|)^{-2\alpha}dx)} (\mathcal P^\bot(\mathcal H^\m - V)\mathcal P^\bot )_\m^\gamma \leq \tr_{\, \Lp^2(\R^2,dx)} \Big(\mathcal P^\bot\big(-\tfrac 89 \Delta - V\big)\mathcal P^\bot \Big)_\m^\gamma \,,
	$$
	where we make the fact explicit that the traces on the two sides are in different Hilbert spaces. Once we have shown this inequality, we can appeal to the standard Lieb--Thirring inequality \eqref{lt-2D} to deduce the bound in the proposition.
	
	We consider the unitary operator $U: \Lp^2(\R^2,dx) \to \Lp^2(\R^2,(1+|x|)^{-2\alpha}dx)$, $\psi \mapsto (1+|x|)^{\alpha} \psi$. Since $U$ commutes with $\mathcal P$ and $V$, it suffices to show that
	$$
	U^* \mathcal P^\bot \mathcal H^\m \mathcal P^\bot U \geq \tfrac89 \,  \mathcal P^\bot(-\Delta)\mathcal P^\bot  \,.
	$$
	That is, we need to show that, if $\mathcal P\psi=0$, then
	$$
	\int_{\R^2} (1+|x|)^{-2\alpha} |\nabla ((1+|x|)^\alpha \psi)|^2\,dx \geq \tfrac89 \int_{\R^2} |\nabla \psi|^2\,dx \,.
	$$
	
	We compute
	\begin{align*}
		(1+|x|)^{-2\alpha} |\nabla ((1+|x|)^\alpha \psi)|^2
		& = (1+|x|)^{-2\alpha} |(1+|x|)^\alpha \nabla \psi + \alpha (1+|x|)^{\alpha-1} \tfrac x{|x|} \psi|^2 \\
		& = |\nabla \psi|^2 + \alpha^2 (1+|x|)^{-2} |\psi|^2 + \alpha (1+|x|)^{-1} \tfrac x{|x|}\cdot \nabla (|\psi|^2) \,.
	\end{align*}
	Integrating by parts, we obtain
	\begin{align*}
		\int_{\R^2} (1+|x|)^{-2\alpha} |\nabla ((1+|x|)^\alpha \psi)|^2\,dx
		& = \int_{\R^2} ( |\nabla \psi|^2 - (\alpha \nabla\cdot ( (1+|x|)^{-1} \tfrac x{|x|}) - \alpha^2(1+|x|)^{-2}) |\psi|^2 )\,dx \\
		& = \int_{\R^2} (|\nabla \psi|^2 - \alpha (1+|x|)^{-2} (|x|^{-1}-\alpha ) |\psi|^2 )\,dx \,.
	\end{align*}
	
	We introduce polar coordinate $x=(r\cos\theta,r\sin\theta)$. Since $\mathcal P^\bot(-\partial_\theta^2)\mathcal P^\bot\geq \mathcal P^\bot$, we have, if $\mathcal P\psi=0$,
	\begin{align*}
		\int_{\R^2} (1+|x|)^{-2\alpha} |\nabla ((1+|x|)^\alpha \psi)|^2\,dx
		& = \int_{\R^2} \left( (1+|x|)^{-2\alpha} |\partial_r (1+|x|)^\alpha \psi|^2 + |x|^{-2} |\partial_\theta \psi|^2 \right) dx \\
		& \geq \int_{\R^2} |x|^{-2} |\psi|^2\,dx \,.
	\end{align*}

	Combining the previous two equations we find, for any $\vartheta\in[0,1]$,
	\begin{align*}
		\int_{\R^2} (1+|x|)^{-2\alpha} |\nabla ((1+|x|)^\alpha \psi)|^2\,dx & \geq\,  \vartheta \int_{\R^2} |\nabla \psi|^2 \,dx \\
		& \quad + \int_{\R^2} ((1-\theta)|x|^{-2} - \vartheta \alpha (1+|x|)^{-2} (|x|^{-1}-\alpha )) |\psi|^2 \,dx \,.
	\end{align*}
	We can choose $\vartheta\in[0,1]$ (depending on $\alpha$) such that
	$$
	(1-\vartheta)|x|^{-2} - \vartheta \alpha (1+|x|)^{-2} (|x|^{-1}-\alpha )\geq 0
	\qquad \text{for all}\ x\in\R^2 \,.
	$$
	More precisely, we choose
	$$
	\vartheta := \left( \sup_{r>0} \left( 1+ \alpha (1+r)^{-2} r(1-\alpha r) \right) \right)^{-1}.
	$$
The supremum is attained at $r=1/(2\alpha+1)$, which leads to
	$$
	\vartheta = \frac{4(\alpha+1)}{5\alpha+4} \geq \frac 89 \,.
	$$
	This proves the claimed inequality.
\end{proof}

We note that the inequality $\alpha< 1$ that we assume throughout this section was only used at the very end of the previous proof when we bounded $\vartheta$ from below. Thus, an analogue of Proposition \ref{prop-radial} is valid even for $\alpha\geq 1$, but with a constant that depends on $\alpha$.

%%%%%%%%%%%%%%%%%%%%%%%%%%%%%%%%%%%%
%%%%%%%%%%%%%%%%%%%%%%%%%%%%%%%%%%%%

\subsection{Reduction to the lowest eigenvalue}\label{sec:redux2}

In the previous subsection we have treated the second term on the right side of \eqref{split-orth}. In this subsection we treat the first term, that is, we deal with the operator $\mathcal P(\mathcal H^\m+V)\mathcal P$. 

We let $\mathfrak h^\m$ denote the operator in $\Lp^2(\R_\pp,(1+r)^{-2\alpha}r\,dr)$ generated by the quadratic form
$$
\int_0^\infty (1+r)^{-2\alpha} |\phi'(r)|^2 r\,dr \,,
$$
defined on locally absolutely continuous functions $\phi$ on $\R_\pp$ belonging to $\Lp^2(\R_\pp,(1+r)^{-2\alpha}r\,dr)$ for which the integral is finite. If for a given function $V$ on $\R^2$ we let
\begin{equation}
	\label{eq:radialv}
	v(r) := \frac1{2\pi} \int_0^{2\pi} V(r\cos\theta,r\sin\theta)\,d\theta \,,
\end{equation}
then the nontrivial part of the operator $\mathcal P(\mathcal H^\m+V)\mathcal P$ is equal to $\mathfrak h^\m+v$ and, in particular,
\begin{equation}
	\label{eq:radial}
	\tr_{\, \Lp^2(\R^2,(1+|x|)^{-2\alpha}dx)} \left( \mathcal P(\mathcal H^\m+V)\mathcal P \right)_\m^\gamma = \tr_{\, \Lp^2(\R_\pp,(1+r)^{-2\alpha}r\,dr)} \left( \mathfrak h^\m +v \right)_\m^\gamma \,.
\end{equation}

In the remainder of this section we will treat $v$ as a given function on $\R_\pp$, ignoring that there is an underlying function $V$ on $\R^2$.

Our strategy to bound the right side of \eqref{eq:radial} will be to impose a Dirichlet boundary condition at $r=1$. This will result in two operators $\mathfrak h^\m_0$ and $\mathfrak h^\m_\infty$ in $\Lp^2((0,1),(1+r)^{-2\alpha}r\,dr)$ and $\Lp^2((1,\infty),(1+r)^{-2\alpha}r\,dr)$, respectively. These operators act in the same way as $\mathfrak h^-$, but functions in their form domain vanish at the point $r=1$. Since imposing a Dirichlet boundary condition is a rank one perturbation of the resolvent, it follows that
\begin{align} \label{dirichlet-1} 
\tr_{\, \Lp^2(\R_\pp,(1+r)^{-2\alpha}r\,dr)} \left( \mathfrak h^\m +v \right)_\m^\gamma \, 
&  \leq \, \tr_{\, \Lp^2((0,1),(1+r)^{-2\alpha}r\,dr)} \left( \mathfrak h_0^\m +v \right)_\m^\gamma \nonumber \\[4pt]
& \quad \, + \tr_{\, \Lp^2((1,\infty),(1+r)^{-2\alpha}r\,dr)} \left( \mathfrak h_\infty^\m +v \right)_\m^\gamma \nonumber\\[4pt]
 & \quad \, + \left( \inf\spec \left( \mathfrak h^\m +v \right) \right)_\m^\gamma \,.
\end{align}
In the following two propositions we will treat the first two terms on the right side, respectively. The third term will be treated in the next subsection.

\begin{proposition}\label{ltrad0}
	For any $\gamma>0$,
	$$
	\tr_{\, \textup \Lp^2((0,1),(1+r)^{-2\alpha}r\,dr)} \left( \mathfrak h_0^\m +v \right)_\m^\gamma
	\leq 2^{2\alpha(\gamma+1)} \, L_\gamma \, 2\pi \int_0^1 v(r)_\m^{\gamma+1}\,r\,dr \,.
	$$
\end{proposition}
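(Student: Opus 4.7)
The plan is to reduce to the standard two-dimensional Lieb--Thirring inequality \eqref{lt-2D} by lifting functions on $(0,1)$ to radial functions on the unit disk and extending them by zero to all of $\R^2$. This works because for $r\in(0,1)$ the weight obeys the elementary two-sided bound $2^{-2\alpha}\leq (1+r)^{-2\alpha}\leq 1$, so $\mathfrak h_0^\m$ is equivalent, up to constants, to the radial part of the Dirichlet Laplacian on the unit disk.

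Concretely, for $\phi$ in the form domain of $\mathfrak h_0^\m$, so in particular with $\phi(1)=0$, I set $\Phi(x):=\phi(|x|)$ for $|x|<1$ and $\Phi(x):=0$ for $|x|\geq 1$. The Dirichlet condition at $r=1$ ensures $\Phi\in H^1(\R^2)$, and the standard radial identities give $\int_{\R^2}|\nabla\Phi|^2\,dx = 2\pi\int_0^1|\phi'|^2 r\,dr$ and $\int_{\R^2}|\Phi|^2\,dx = 2\pi\int_0^1|\phi|^2 r\,dr$. Writing $W(x):=v(|x|)_\m\,\1_{B(0,1)}(x)$, these identities combined with the weight bound yield, for every $\tau\geq 0$,
$$
\int_0^1 (1+r)^{-2\alpha}|\phi'|^2 r\,dr - \int_0^1 v_\m|\phi|^2(1+r)^{-2\alpha} r\,dr + \tau\int_0^1|\phi|^2(1+r)^{-2\alpha} r\,dr \geq \frac{2^{-2\alpha}}{2\pi}\int_{\R^2}\bigl(|\nabla\Phi|^2 - 2^{2\alpha}W|\Phi|^2 + \tau|\Phi|^2\bigr)\,dx.
$$
Since $\phi\mapsto\Phi$ is linear and injective, the variational principle then gives $N(\mathfrak h_0^\m - v_\m + \tau)\leq N(-\Delta - 2^{2\alpha}W + \tau)$ for every $\tau\geq 0$.

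Integrating this pointwise inequality against $\gamma\tau^{\gamma-1}\,d\tau$ and invoking \eqref{lt-2D} applied to $-\Delta - 2^{2\alpha}W$ on $\R^2$ produces
$$
\tr(\mathfrak h_0^\m - v_\m)_\m^\gamma \leq \tr(-\Delta - 2^{2\alpha}W)_\m^\gamma \leq L_\gamma\int_{\R^2}(2^{2\alpha}W)^{\gamma+1}\,dx = L_\gamma\cdot 2^{2\alpha(\gamma+1)}\cdot 2\pi\int_0^1 v(r)_\m^{\gamma+1} r\,dr,
$$
since $\int_{\R^2}W^{\gamma+1}\,dx = 2\pi\int_0^1 v_\m^{\gamma+1} r\,dr$. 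The stated inequality with $v$ in place of $-v_\m$ is immediate from the form inequality $\mathfrak h_0^\m + v \geq \mathfrak h_0^\m - v_\m$ and the monotonicity of the Riesz mean in the underlying operator. No step looks delicate; the only point worth double-checking is that extending $\phi$ by zero yields an element of $H^1(\R^2)$, which is exactly what the Dirichlet condition at $r=1$ encodes.
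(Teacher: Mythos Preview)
Your proof is correct and follows essentially the same approach as the paper: use the two-sided weight bound $2^{-2\alpha}\leq(1+r)^{-2\alpha}\leq 1$ on $(0,1)$, compare via the variational principle (and the layer-cake identity for Riesz means) with the radial Dirichlet Laplacian on the unit disk extended to $\R^2$, and then apply the standard Lieb--Thirring inequality \eqref{lt-2D}. The paper passes through the intermediate operator $-r^{-1}\partial_r r\partial_r - 2^{2\alpha}v_\m$ in $\Lp^2((0,1),r\,dr)$ before lifting to $\R^2$, whereas you lift directly via $\phi\mapsto\Phi$; this is only a cosmetic difference.
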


\begin{proof}
	For functions $\phi$ in the form domain of $\mathfrak h_0^\m$ we bound
	\begin{align*}
		 \int_0^1 (1+r)^{-2\alpha} |\phi'(r)|^2 r\,dr + \int_0^1 v(r) |\phi(r)|^2 (1+r)^{-2\alpha} r\,dr  \geq 2^{-2\alpha} \int_0^1  |\phi'(r)|^2 r\,dr - \int_0^1 v(r)_\m |\phi(r)|^2 r\,dr
	\end{align*}
	and
	$$
	\int_0^1 (1+r)^{-2\alpha} |\phi(r)|^2 r\,dr \geq 2^{-2\alpha} 	\int_0^1 |\phi(r)|^2 r\,dr \,.
	$$
	By a similar argument as in the proof of Corollary \ref{comparison}, this implies
	$$
	\tr_{\, \Lp^2((0,1),(1+r)^{-2\alpha}r\,dr)} \left( \mathfrak h_0^\m +v \right)_\m^\gamma
	\leq \tr_{\, \Lp^2((0,1),r\,dr)} \left( -r^{-1} \partial_r r \partial_r - 2^{2\alpha} v_\m \right)_\m^\gamma \,,
	$$
	where the operator $-r^{-1} \partial_r r \partial_r  - 2^{2\alpha} v_\m$ is considered with a Dirichlet boundary condition at $r=1$. Note that this operator coincides with the nontrivial part of $\mathcal P(-\Delta - 2^{2\alpha} v(|\cdot|)_\m)\mathcal P$ acting in $\Lp^2(B(0,1),dx)$ with a Dirichlet boundary condition. Extending the operator to all of $\R^2$ and removing the projection $\mathcal P$ does not decrease the Riesz means and therefore we have, by the standard Lieb--Thirring inequality \eqref{lt-2D},
	\begin{align*}
		\tr_{\, \Lp^2((0,1),r\,dr)} \left( -r^{-1} \partial_r r \partial_r  - 2^{2\alpha} v_\m \right)_\m^\gamma
		& \leq \tr_{\, \Lp^2(\R^2,dx)} \left( -\Delta - 2^{2\alpha} \1_{B(0,1)}v(|\cdot|)_\m \right)_\m^\gamma \\
		& \leq 2^{2\alpha(\gamma+1)} \, L_\gamma \int_{B(0,1)} v(|x|)_\m^{\gamma+1}\,dx \\
		& = 2^{2\alpha(\gamma+1)} \, L_\gamma \, 2\pi \int_0^1 v(r)_\m^{\gamma+1}\,r\,dr \,.
	\end{align*}
	Combining this with the previous inequality yields the assertion.
\end{proof}

\begin{proposition}\label{ltradinfty}
	For any $\gamma>0$,
	$$
	\tr_{\, \textup \Lp^2((1,\infty),(1+r)^{-2\alpha}r\,dr)} \left( \mathfrak h_\infty^\m +v \right)_\m^\gamma
	\leq C_{\alpha,\gamma} \int_1^\infty v(r)_\m^{\gamma+1}\,r\,dr \,.
	$$
\end{proposition}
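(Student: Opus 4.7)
The strategy is to lift $\mathfrak h_\infty^\m$ to a two-dimensional Dirichlet exterior problem, remove the weight by a unitary transformation, and then compare the resulting operator with the standard Dirichlet Laplacian via the 2D exterior logarithmic Hardy inequality. Let $\mathcal H_{\rm ext}^\m$ denote the self-adjoint operator on $\Lp^2(\R^2\setminus B(0,1), (1+|x|)^{-2\alpha}\,dx)$ associated with the quadratic form $\int_{|x|>1}(1+|x|)^{-2\alpha}|\nabla\phi|^2\,dx$ and Dirichlet boundary condition at $\partial B(0,1)$. Its restriction to radial functions is unitarily equivalent to $\mathfrak h_\infty^\m$, so with $V(x):=v(|x|)$ one has $\tr(\mathfrak h_\infty^\m + v)_\m^\gamma \leq \tr(\mathcal H_{\rm ext}^\m + V)_\m^\gamma$.

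Next, I will conjugate away the weight via the unitary $U:\Lp^2((1+|x|)^{-2\alpha}dx)\to \Lp^2(dx)$, $U\phi := (1+|x|)^{-\alpha}\phi$. A direct computation (integration by parts, with the boundary term vanishing by the Dirichlet condition) yields
\[
\tilde{\mathcal H} := U\mathcal H_{\rm ext}^\m U^* = -\Delta + W_2, \qquad W_2(x) := \alpha(1+|x|)^{-2}(\alpha - 1/|x|),
\]
on $\Lp^2(\R^2\setminus B(0,1), dx)$ with Dirichlet boundary condition. The potential $W_2$ is bounded on $\{|x|>1\}$ and its negative part $W_{2,\m}$ is supported on the compact annulus $\{1 \leq |x| \leq 1/\alpha\}$. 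Since $U$ commutes with multiplication by $V$, one has $\tr(\mathcal H_{\rm ext}^\m + V)_\m^\gamma = \tr(\tilde{\mathcal H} + V)_\m^\gamma$.

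The crucial step is to establish the operator inequality $\tilde{\mathcal H} \geq c_\alpha(-\Delta_D^{\rm ext})$ for some $c_\alpha \in (0,1]$, where $-\Delta_D^{\rm ext}$ denotes the standard Dirichlet Laplacian on $\R^2\setminus B(0,1)$. This reduces to the form inequality $(1-c_\alpha)\int|\nabla\psi|^2\,dx \geq \int W_{2,\m}|\psi|^2\,dx$ for all $\psi$ in the form domain, which I will establish by combining the 2D exterior logarithmic Hardy inequality
\[
\int_{|x|>1}|\nabla\psi|^2\,dx \geq \tfrac{1}{4}\int_{|x|>1}\frac{|\psi|^2}{|x|^2\log^2|x|}\,dx
\]
(valid for $\psi$ vanishing on $\partial B(0,1)$, as one checks by reducing to the 1D Hardy inequality on $(0,\infty)$ after the substitution $s=\log|x|$) with the pointwise estimate that the ratio $W_{2,\m}(x)\cdot |x|^2\log^2|x|$ is bounded by a constant strictly smaller than $1/4$ uniformly in $x\in\R^2\setminus B(0,1)$ and $\alpha\in(0,1)$. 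Once this is proved, the variational principle together with the domain-monotonicity bound $E_k(-\Delta_D^{\rm ext}+\cdot)\geq E_k(-\Delta_{\R^2}+\cdot)$ (obtained by extending functions by zero) and the standard 2D Lieb--Thirring inequality \eqref{lt-2D} yield
\[
\tr(\tilde{\mathcal H} + V)_\m^\gamma \leq c_\alpha^\gamma\,\tr(-\Delta_{\R^2} + c_\alpha^{-1}V)_\m^\gamma \leq \frac{L_\gamma}{c_\alpha}\int_{|x|>1}V(x)_\m^{\gamma+1}\,dx = \frac{2\pi L_\gamma}{c_\alpha}\int_1^\infty v(r)_\m^{\gamma+1}\,r\,dr.
\]

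The main obstacle is the quantitative pointwise bound
\[
\sup_{\alpha\in(0,1)}\sup_{|x|>1}\alpha\,|x|^2\log^2|x|\,(1+|x|)^{-2}(1/|x|-\alpha)_\pp < \tfrac{1}{4}.
\]
This is a concrete calculus problem: the supremum vanishes as $\alpha\to 0$ (due to the explicit $\alpha$ prefactor, the logarithm growing only polynomially in $1/\alpha$) and as $\alpha\to 1$ (since the support of $(1/|x|-\alpha)_\pp$ shrinks to a point), so an elementary optimization over the compactified parameter region produces a uniform constant and hence a $c_\alpha$ bounded away from zero.
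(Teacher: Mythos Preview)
Your approach is correct but genuinely different from the paper's. The paper first replaces the weight $(1+r)^{-2\alpha}r$ by the homogeneous weight $r^{1-2\alpha}$ (at the cost of a factor $2^{2\alpha}$), extends to all of $\R_\pp$, and then applies the unitary $\eta\mapsto r^{\alpha-1/2}\eta$, which transforms the operator into $-\partial_r^2+(\alpha^2-\tfrac14)r^{-2}$ on $\Lp^2(\R_\pp,dr)$. At that point one simply \emph{drops} the manifestly positive term $\alpha^2 r^{-2}$ and is left with the radial part of the two-dimensional Laplacian, to which the standard Lieb--Thirring inequality applies directly. Your route keeps the original inhomogeneous weight, removes it by $U\phi=(1+|x|)^{-\alpha}\phi$, and then has to absorb the sign-indefinite potential $W_2$ via the exterior logarithmic Hardy inequality. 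Both arguments culminate in the two-dimensional Lieb--Thirring inequality; the paper's is shorter because the homogeneous weight makes the leftover potential purely positive, whereas yours trades that simplification for an explicit Hardy estimate and a calculus computation.

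One point in your write-up is left slightly imprecise: you argue that the supremum of $\alpha r^2(\log r)^2(1+r)^{-2}(1/r-\alpha)_\pp$ vanishes as $\alpha\to 0$ and as $\alpha\to 1$, and then invoke compactness to conclude it is finite. That alone does not give the threshold $<\tfrac14$ needed to make the Hardy absorption work. The clean way to close this is to note that $\alpha(1/r-\alpha)\le 1/(4r^2)$ by the arithmetic--geometric mean inequality, so the quantity is bounded by $\tfrac14\,\log^2 r/(1+r)^2$, and then observe that $\sup_{r>1}\log^2 r/(1+r)^2<1$ (the function vanishes at $r=1$ and at infinity, and its maximum, attained near $r\approx 3.6$, is well below $1$). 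With this filled in, your argument yields a constant $c_\alpha$ that is uniformly bounded away from zero in $\alpha\in(0,1)$, matching the stability property noted in the paper's Remark following the proof.
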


\begin{proof}
	Arguing similarly as at the beginning of the previous proof we find that
	$$
	\tr_{\, \Lp^2((1,\infty),(1+r)^{-2\alpha}r\,dr)} \left( \mathfrak h_0^\m +v \right)_\m^\gamma
	\leq \tr_{\, \Lp^2((1,\infty),r^{-2\alpha+1}\,dr)} \left( -r^{2\alpha-1} \partial_r r^{-2\alpha+1} \partial_r - 2^{2\alpha} v_\m \right)_\m^\gamma \,,
	$$
	where the operator $-r^{2\alpha-1} \partial_r r^{-2\alpha+1} \partial_r - 2^{2\alpha} v_\m$ is considered with a Dirichlet boundary condition at $r=1$. Extending the operator to all of $\R_\pp$ we will consider
	$$
	\tr_{\, \Lp^2(\R_\pp,r^{-2\alpha+1}\,dr)} \left( -r^{2\alpha-1} \partial_r r^{-2\alpha+1} \partial_r + \tilde v \right)_\m^\gamma \,,
	$$
	where $\tilde v = - 2^{2\alpha} v_\m$ on $(1,\infty)$ and $\tilde v=0$ on $(0,1)$. The operator $-r^{2\alpha-1} \partial_r r^{-2\alpha+1} \partial_r$ acts with a Dirichlet boundary condition at the origin. More precisely, it is defined as the closure of the quadratic form $\int_0^\infty r^{-2\alpha+1} |\phi'(r)|^2\,dr$ defined for $\phi\in C^1_c(\R_\pp)$. (We emphasize that $\R_\pp=(0,\infty)$, so functions in $C^1_c(\R_\pp)$ vanish in a neighborhood of the origin.)
	
	We consider the unitary operator $U:\Lp^2(\R_\pp,dr)\to \Lp^2(\R_\pp,r^{-2\alpha+1}dr)$, $\eta\mapsto r^{\alpha-\frac12} \eta$. Let us set $\phi(r) = \mathcal U \eta(r) = r^{\alpha-\frac12} \eta(r)$ and compute
	\begin{align*}
		|\phi'(r)|^2 & = \left| r^{\alpha-\frac12} \eta'(r) + (\alpha-\tfrac12) r^{\alpha-\frac32} \eta(r) \right|^2 \\
		& = r^{2\alpha-1} |\eta'(r)|^2 + 2(\alpha-\tfrac12) r^{2\alpha-2} \re \overline{\eta'(r)} \eta(r) + (\alpha-\tfrac12)^2 r^{2\alpha-3} |\eta(r)|^2 \\
		& = r^{2\alpha-1} |\eta'(r)|^2 + (\alpha-\tfrac12) r^{2\alpha-2} \left( |\eta(r)|^2 \right)' + (\alpha-\tfrac12)^2 r^{2\alpha-3} |\eta(r)|^2 \,, 
	\end{align*}
	leading to
	\begin{align*}
		\int_0^\infty |\phi'(r)|^2 r^{-2\alpha+1}\,dr & = \int_0^\infty \left( |\eta'(r)|^2 + (\alpha^2-\tfrac14) r^{-2} |\eta(r)|^2 \right) dr \,.
	\end{align*}
	Thus, we have shown that the operator $-r^{2\alpha-1}\partial_r r^{-2\alpha+1} \partial_r$ in $\Lp^2(\R_\pp,r^{-2\alpha+1}dr)$ is unitarily equivalent to the operator $-\partial_r^2 + (\alpha^2-\tfrac14) r^{-2}$ in $\Lp^2(\R_\pp,dr)$. It follows that
	$$
	\tr_{\, \Lp^2(\R_\pp,r^{-2\alpha+1}\,dr)} \left( -r^{2\alpha-1} \partial_r r^{-2\alpha+1} \partial_r + \tilde v \right)_\m^\gamma
	= \tr_{\, \Lp^2(\R_\pp\,dr)} \left( -\partial_r^2 + (\alpha^2-\tfrac14) r^{-2} + \tilde v \right)_\m^\gamma \,,
	$$

	For a lower bound we drop the term $\alpha^2 r^{-2}$ and recognize the operator $-\partial_r^2 -\tfrac14 r^{-2}$ as being unitarily equivalent to the radial part of the Laplace operator in $\R^2$. It follows that
	\begin{align*}
		\tr_{\, \Lp^2(\R_\pp\,dr)} \left( -\partial_r^2 + (\alpha^2-\tfrac14) r^{-2} + \tilde v \right)_\m^\gamma
		& \leq \tr_{\, \Lp^2(\R_\pp\,dr)} \left( -\partial_r^2 -\tfrac14 r^{-2} + \tilde v \right)_\m^\gamma \\
		& = \tr_{\, \Lp^2(\R^2,dx)}\left( \mathcal P(-\Delta +\tilde v(|\cdot|)) \mathcal P \right)_\m^\gamma \\
		& \leq \tr_{\, \Lp^2(\R^2,dx)}\left( -\Delta +\tilde v(|\cdot|) \right)_\m^\gamma \\
		& \leq L_\gamma \int_{\R^2} \tilde v(|x|)_\m^{\gamma+1}\,dx \\
		& = L_\gamma 2\pi \int_0^\infty \tilde v(r)_\m^{\gamma+1}\, r\,dr \,.
	\end{align*}
	Here we used the standard Lieb--Thirring inequality \eqref{lt-2D} on $\R^2$. Combining the previous inequalities yields the assertion.
\end{proof}

\begin{remark}
	There is an alternative way of finishing the proof without appealing to the Lieb--Thirring inequality \eqref{lt-2D}. Namely, when $\alpha\geq\frac12$ one can drop the term $(\alpha^2-\tfrac14) r^{-2}$ for a lower bound and for $0<\alpha<\frac12$ one can drop this term at the expense of reducing the constant 1 in front of $-\partial_r^2$ by Hardy's inequality. One arrives at having to bound $\tr_{\, \Lp^2(\R_\pp,dr)} (-\theta_\alpha \partial_r^2 + \tilde v )_\m^\gamma$ with  $\theta_\alpha:=\min\{ 1, 4\alpha^2\}$. This is possible in view of bounds by Egorov and Kondratiev \cite[Sec.~8.8]{ek}. 
	
	A drawback of the proof that we just sketched is that the constant diverges as $\alpha\to 0$ because of the presence of $\theta_\alpha$. This can be remedied by using more refined one-dimensional inequalities that take the Hardy term into account \cite{ef}. 
	
	In this connection it is interesting to note that for $0<\alpha\leq\frac12$ and $\gamma\geq \alpha$, the above proof also gives the bound
	$$
	\tr_{\, \Lp^2((1,\infty),(1+r)^{-2\alpha}r\,dr)} \left( \mathfrak h_\infty^\m +v \right)_\m^\gamma \leq \tilde C_{\alpha,\gamma} \int_1^\infty v(r)_\m^{1+\gamma-\alpha} r^{-2\alpha+1}\,dr \,.
	$$
	This follows from the fact, proved in \cite{ef}, that the inequality
	\begin{equation*}
		\tr_{\, \Lp^2(\R_\pp,dr)} (- \partial_r^2 - \tfrac14 r^{-2} + w )_\m^\gamma
		\leq L_{\gamma,a} \int_0^\infty w(r)_\m^{\gamma+\frac{1+a}2} r^a\,dr
	\end{equation*}
	is valid for $\gamma=\frac{1-a}2$ when $0\leq a<1$. We apply this inequality with $a=1-2\alpha$.
\end{remark}

%%%%%%%%%%%%%%%%%%%%%%%%%%%%%%%%%%%%
%%%%%%%%%%%%%%%%%%%%%%%%%%%%%%%%%%%%

\subsection{Bound on the lowest eigenvalue}

In the previous subsection we have bounded the first and second term on the right side of \eqref{dirichlet-1}. In this subsection we discuss the third term, that is, we discuss a lower bound on the lowest eigenvalue for $\mathfrak h^\m + v$. We shall prove the following bound

\begin{proposition}\label{prop-E-1}
	For any $0<\alpha<1$ there is a constant $C_\alpha$ such that
	$$
	\left( \inf\spec \left( \mathfrak h^\m + v \right) \right)_\m^\alpha 
	\leq C_\alpha \left( \int_0^\infty v(r)_\m^{1+\alpha} \, r \,dr + \int_0^\infty v(r)_\m \,  (1+r)^{-2\alpha}\, r\, dr \right).
	$$
\end{proposition}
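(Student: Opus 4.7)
The plan is to use the Birman--Schwinger principle. Setting $E := -\inf\spec(\mathfrak h^\m + v)$ and assuming $E > 0$, the operator $K_E := v_\m^{1/2}(\mathfrak h^\m + E)^{-1} v_\m^{1/2}$ on $\Lp^2(\R_\pp,(1+r)^{-2\alpha} r\,dr)$ satisfies $\|K_E\| \geq 1$, so it will suffice to prove the reverse estimate $\|K_E\| \leq C_\alpha E^{-\alpha}\cdot\textup{RHS}$, from which the claim follows by multiplying through by $E^\alpha$.

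The main technical step is a pointwise estimate on the Green's function $G_E(r,s) = u_0(r\wedge s;E)\, u_\infty(r\vee s;E)/W(E)$, where $u_0, u_\infty$ are the fundamental solutions of $(\mathfrak h^\m + E)u = 0$ regular at $0$ and decaying at $\infty$ respectively. The operator $\mathfrak h^\m$ interpolates between two dimensional regimes: near $r=0$ with $(1+r)^{-2\alpha}\approx 1$ it acts as a radial 2D Laplacian, with $u_0\approx I_0(\sqrt E\,r)$ and $u_\infty\approx K_0(\sqrt E\,r)$; while for $r\gg 1$ with $(1+r)^{-2\alpha}\approx r^{-2\alpha}$ it acts as a radial Laplacian in effective dimension $d = 2-2\alpha$, giving $u_\infty(r)\sim c_\alpha(\sqrt E\,r)^\alpha K_\alpha(\sqrt E\,r)$. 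Matching these asymptotics via Sturm--Liouville comparison should yield schematically
$$
0 \leq G_E(r, s) \leq C_\alpha\Bigl[ E^{-\alpha}\,\1\{r\vee s\leq E^{-1/2}\} + \bigl(-\log(\sqrt E(r\vee s))\bigr)_\pp\,\1\{r,s\leq 1\} + (\textup{exp.\ decay for }r\vee s>E^{-1/2})\Bigr].
$$
Applying Schur's test, the $E^{-\alpha}$ plateau produces $C_\alpha E^{-\alpha}\int v_\m(s)(1+s)^{-2\alpha}s\,ds$, which is the second (resonance) term on the right of the proposition. The short-distance logarithmic piece lies in a 2D-like regime where the standard Lieb--Thirring inequality \eqref{lt-2D} fails for $\gamma=\alpha<1$; it is to be handled via H\"older's inequality on the resulting double integral combined with techniques in the spirit of \cite{KoVuWe}, producing the first (semiclassical) term $C_\alpha E^{-\alpha}\int v_\m(r)^{1+\alpha} r\,dr$. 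The exponentially decaying tail yields only subleading corrections, which are absorbed.

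The main obstacle is the uniform pointwise Green's function estimate across the three length scales $r\sim 1$, $r\sim E^{-1/2}$, and the support of $v$; matching the Bessel-function asymptotics of $u_0, u_\infty$ through these transitions requires careful Sturm--Liouville comparison. An alternative approach that may simplify the bookkeeping is to first conjugate by the unitary $\psi\mapsto r^{-1/2}(1+r)^\alpha \psi$, transforming $\mathfrak h^\m$ into the one-dimensional Schr\"odinger operator $\tilde{\mathfrak h}=-\partial_r^2+W$ on $\Lp^2(\R_\pp,dr)$ with $W(r)=-\tfrac 1{4r^2}+\tfrac{\alpha(\alpha r-1)}{r(1+r)^2}$ and explicit positive zero-energy resonance $\psi_0(r)=r^{1/2}(1+r)^{-\alpha}$; in this picture the critical Hardy structure is manifest and one may adapt the Hardy--Lieb--Thirring-type inequalities of \cite{EkFr0,ef}, although the lower-order part of $W$ still requires nontrivial work.
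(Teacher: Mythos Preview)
Your Birman--Schwinger setup and the identification of the $E^{-\alpha}$ resonance plateau are correct and match the paper's strategy. However, there is a genuine misconception in your treatment of the ``logarithmic piece'': you assert that the standard two-dimensional Lieb--Thirring inequality \eqref{lt-2D} fails for $\gamma=\alpha<1$, but this is false --- \eqref{lt-2D} holds for \emph{every} $\gamma>0$ in two dimensions. Consequently your proposed workaround via H\"older and \cite{KoVuWe}-type arguments is both unnecessary and unclear; it is not evident how a pointwise kernel estimate on the logarithmic part would produce $E^{-\alpha}\int v_\m^{1+\alpha}\,r\,dr$ by those means.

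The paper's proof exploits exactly the fact that the two-dimensional Lieb--Thirring inequality does hold. After a unitary change of weight (close to your alternative conjugation) the problem becomes one for an operator $T_\alpha$ on $\Lp^2(\R_\pp,dr)$, and the crucial move is to subtract the \emph{full} resolvent of $T_0$, which is precisely the radial part of the two-dimensional Laplacian. One then runs a dichotomy on \eqref{bs-1}: if $\|v_\m^{1/2}(T_0+\kappa_*^2)^{-1}v_\m^{1/2}\|>\tfrac12$, Birman--Schwinger gives $\inf\spec(-\Delta-2v_\m(|\cdot|))<-\kappa_*^2$ and the two-dimensional inequality \eqref{lt-2D} with $\gamma=\alpha$ yields the semiclassical term directly --- no kernel bound on the logarithmic part is ever attempted. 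In the complementary case one needs a pointwise bound only on the \emph{difference} $\Gamma_\alpha(\kappa)=(T_\alpha+\kappa^2)^{-1}-(T_0+\kappa^2)^{-1}$, and this difference has no logarithmic singularity: Lemma~\ref{prop-hs} shows $|\Gamma_\alpha(\kappa)(r,r')|\le C_\alpha\,\kappa^{-2\alpha}\sqrt{rr'}\,(1+r)^{-\alpha}(1+r')^{-\alpha}$, whose Hilbert--Schmidt norm against $v_\m^{1/2}$ gives the resonance term. The subtraction removes exactly the $K_0$-type piece that was causing you trouble, so the remaining Bessel-function analysis, while still delicate across the scales $\kappa r\sim 1$ and $r\sim 1$, is tractable.
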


It is natural to wonder whether in the bound in the proposition a single term on the right side suffices. This is not the case, as will be discussed in Appendix \ref{sec-one-term}.

Proposition \ref{prop-E-1} is in some sense the main step in the proof of our main result. It is certainly the most technical step and, indeed, in this subsection we only show how to reduce the proof to a technical lemma that will be verified in the following section. This lemma is stated in terms of the operator $T_\alpha$ in $\Lp^2(\R_\pp,dr)$ that is defined through the closure of the quadratic form 
\begin{equation} \label{T-form}
	\int_0^\infty |\eta'(r)|^2  \,dr -\alpha |\eta(1)|^2 -\frac 14  \int_0^1 \frac{ |\eta(r)|^2}{r^2}\, dr +  \Big(\alpha^2-\frac 14\Big) \int_1^\infty \frac{ |\eta(r)|^2}{r^2}\, dr
\end{equation}
defined for $\eta\in C^1_c(\R_\pp)$. We denote by $(T_\alpha+\kappa^2)^{-1}(r,r')$, $r,r'\in\R_\pp$, the integral kernel of the operator $(T_\alpha+\kappa^2)^{-1}$. In the following lemma we bound the difference between these kernels at $\alpha$ and at $0$.

\begin{lemma} \label{prop-hs}
	For any $\alpha \in (0,1)$ there is a constant $C_\alpha$ such that for all $\kappa>0$ and all $r,r'\in\R_\pp$ one has
	$$
	\left| (T_\alpha +\kappa^2)^{-1}(r,r') - (T_0 +\kappa^2)^{-1}(r,r') \right| \leq C_\alpha \, \kappa^{-2\alpha} \, \sqrt{rr'} \ (1+r)^{-\alpha} \, (1+r')^{-\alpha} \,.
	$$
\end{lemma}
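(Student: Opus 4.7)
The plan is to construct both resolvent kernels explicitly in closed form using modified Bessel functions, and then to subtract them region by region.

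\textbf{Explicit Green's functions.} For $\alpha=0$ the form defining $T_0$ reduces to that of the Friedrichs extension of the critical Hardy operator $-\partial_r^2-\tfrac14 r^{-2}$ on $\R_\pp$, which via the isometry $\eta(r)=\sqrt{r}\,u(r)$ is unitarily equivalent to the $m=0$ sector of the radial Laplacian on $\R^2$, so that
\[
(T_0+\kappa^2)^{-1}(r,r')=\sqrt{rr'}\,I_0(\kappa r_<) K_0(\kappa r_>).
\]
For general $\alpha$, the equation $(T_\alpha+\kappa^2)\psi=0$ on $\R_\pp\setminus\{1\}$, with the ansatz $\psi(r)=\sqrt{r}\,g(\kappa r)$, becomes the modified Bessel equation of order $0$ on $(0,1)$ and of order $\alpha$ on $(1,\infty)$, while the $-\alpha|\eta(1)|^2$ term in the form imposes continuity of $\eta$ together with the jump $\eta'(1-)-\eta'(1+)=\alpha\,\eta(1)$ at $r=1$. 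I introduce the regular solution $\psi_0$, equal to $\sqrt{r}\,I_0(\kappa r)$ on $(0,1)$ and extending as $\sqrt{r}[AI_\alpha+BK_\alpha](\kappa r)$ on $(1,\infty)$, and the decaying solution $\psi_\infty$, equal to $\sqrt{r}\,K_\alpha(\kappa r)$ on $(1,\infty)$ and extending as $\sqrt{r}[CI_0+DK_0](\kappa r)$ on $(0,1)$; the coefficients $A,B,C,D$ are uniquely determined by the matching. Using the modified-Bessel Wronskian $I_\nu K_\nu'-I_\nu'K_\nu=-1/y$ and the recurrence $K_\alpha'=-K_{\alpha-1}-(\alpha/y)K_\alpha$, a short computation gives
\[
A=D=\kappa\bigl[I_0(\kappa)K_{\alpha-1}(\kappa)+I_1(\kappa)K_\alpha(\kappa)\bigr]
\]
with analogous closed-form expressions for $B,C$, and one verifies that $D=1$, $C=0$ when $\alpha=0$. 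The perturbed kernel is then $G_\alpha(r,r')=\psi_0(r_<)\psi_\infty(r_>)/D$.

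\textbf{Region-wise subtraction.} In the inner region $r,r'\leq 1$ the $DK_0$-piece of $\psi_\infty$ reproduces $G_0$ exactly and leaves the rank-one remainder
\[
G_\alpha(r,r')-G_0(r,r')=\frac{C}{D}\,\sqrt{rr'}\,I_0(\kappa r)I_0(\kappa r').
\]
In the outer region $r,r'\geq 1$ one obtains
\[
G_\alpha-G_0=\sqrt{rr'}\bigl[I_\alpha(\kappa r_<)K_\alpha(\kappa r_>)-I_0(\kappa r_<)K_0(\kappa r_>)\bigr]+\frac{B}{D}\sqrt{rr'}\,K_\alpha(\kappa r)K_\alpha(\kappa r'),
\]
and hybrid expressions of the same type appear in the two mixed regions.

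\textbf{Uniform bounds.} The claimed pointwise bound then follows by inserting the standard small- and large-argument asymptotics of $I_\nu,K_\nu$ into each of the above expressions, split according to whether $\kappa\leq 1$ or $\kappa\geq 1$. The decisive small-$\kappa$ computation gives
\[
D\sim \Gamma(1-\alpha)\,2^{-\alpha}\kappa^\alpha,\qquad C\sim \Gamma(\alpha)\,2^{\alpha-1}\kappa^{-\alpha},\qquad \frac{B}{D}\longrightarrow \frac{2\sin(\pi\alpha)}{\pi},
\]
so that $C/D\asymp\kappa^{-2\alpha}$ in the inner region, while in the outer region the rank-one piece $(B/D)\,K_\alpha(\kappa r)K_\alpha(\kappa r')\asymp\kappa^{-2\alpha}(rr')^{-\alpha}$ matches exactly the target $\kappa^{-2\alpha}\sqrt{rr'}(1+r)^{-\alpha}(1+r')^{-\alpha}$ (recall $(1+r)^{-\alpha}\asymp 1$ for $r\leq 1$ and $\asymp r^{-\alpha}$ for $r\geq 1$). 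For $\kappa\geq 1$ every contribution in the difference decays exponentially in $\kappa$ while $\kappa^{-2\alpha}$ is merely bounded, so the estimate becomes trivial.

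\textbf{Main obstacle.} The technical heart is the outer region. There $I_0(\kappa r_<)K_0(\kappa r_>)$ is logarithmically divergent as $\kappa\to 0$, and it must be controlled by (rather than canceled against) the $\kappa^{-2\alpha}$ contribution from the $K_\alpha\otimes K_\alpha$ piece, uniformly in $r,r'\geq 1$. Carefully carrying through the case analysis across the three scales $\kappa r_<$, $\kappa r_>$ and $\kappa$, and tracking all Bessel products at the leading and first correction order, is the bulk of the work.
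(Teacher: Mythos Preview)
Your approach --- explicit Green's functions via Bessel solutions with matching at $r=1$, region-by-region subtraction, and pointwise bounds from small- and large-argument asymptotics --- is exactly the route the paper takes (your $D$ is the paper's $A_\alpha$, your $C/D$ and $B/D$ are its $f_\alpha$ and $g_\alpha$). However, your identification of where the difficulty lies is off, and one claim is incorrect.

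The assertion that for $\kappa\geq 1$ ``every contribution in the difference decays exponentially in $\kappa$'' is false. In the outer region at $r=r'\geq 1$, the rank-one piece $(B/D)\,K_\alpha(\kappa r)^2$ is $\asymp\kappa^{-1}e^{2\kappa}\cdot(\kappa r)^{-1}e^{-2\kappa r}$, which at $r=1$ equals $\kappa^{-2}$: the exponentials cancel exactly. Likewise each of $I_\alpha(\kappa r)K_\alpha(\kappa r)$ and $I_0(\kappa r)K_0(\kappa r)$ is individually $\asymp(\kappa r)^{-1}$ near the diagonal. For $\alpha\geq\tfrac12$ the estimate $(\kappa r)^{-1}\leq(\kappa r)^{-2\alpha}$ \emph{fails}, so bounding $I_\alpha K_\alpha-I_0K_0$ term by term is not enough. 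The decisive point --- which the paper singles out --- is that the leading large-$z$ asymptotics of $I_\nu(z)$ and $K_\nu(z)$ are $\nu$-independent, so that
\[
K_\alpha(\kappa r')I_\alpha(\kappa r)-K_0(\kappa r')I_0(\kappa r)
\]
gains an extra factor $(\kappa r)^{-1}+(\kappa r')^{-1}$ from the subleading terms. This cancellation, in the regime $\kappa r_<\gg 1$ (regardless of whether $\kappa$ itself is large or small), is the actual technical heart; the logarithmic divergence of $I_0K_0$ as $\kappa\to 0$ that you flag is by contrast harmless, since $|\ln z|\lesssim z^{-2\alpha}$ for $z\leq 1$. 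Your final sentence does mention ``first correction order,'' so you may have this in mind, but your write-up locates the obstacle at small $\kappa$ and dismisses large $\kappa$ as trivial, which would leave a gap for $\alpha\in[\tfrac12,1)$.
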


Accepting this lemma for the moment, let us prove the main result of this subsection.

\begin{proof}[Proof of Proposition \ref{prop-E-1}]
	The proof will consist of two steps. In the first step we will prove the bound
	\begin{equation}
		\label{eq:lowesttalpha}
		\left( \inf\spec \left( T_\alpha + v \right) \right)_\m^\alpha 
		\leq C_\alpha' \left( \int_0^\infty v(r)_\m^{1+\alpha} \, r \,dr + \int_0^\infty v(r)_\m \,  (1+r)^{-2\alpha}\, r\, dr \right).
	\end{equation}
	and in a second step we will show that this inequality implies that in Proposition \ref{prop-E-1}.
	
	\emph{Step 1.} Let us denote 
	\begin{equation} \label{split} 
		G_0(\kappa) := (T_0 + \kappa^2)^{-1} \qquad \text{and} \qquad \Gamma_\alpha(\kappa) := (T_\alpha +\kappa^2)^{-1} -(T_0 + \kappa^2)^{-1} \,.
	\end{equation}
	By the variational principle, for the proof of \eqref{eq:lowesttalpha} we may assume that $v\leq 0$. We denote
	$$
	\kappa_* := \left( \inf\spec \left( T_\alpha + v \right) \right)_\m^\frac12 \,.
	$$
	We may assume that $\kappa_*>0$, for otherwise \eqref{eq:lowesttalpha} is trivially true.

	From the Birman--Schwinger  principle we deduce that 
	\begin{equation}  \label{bs-1}
		1 =  \left\| v_\m^\frac12 \, (T_\alpha+\kappa_*^2)^{-1} \, v_\m^\frac12 \right\| \ \leq \  \left\| v_\m^\frac12\, G_0(\kappa_*) \, v_\m^\frac12 \right\|  + \left\| v_\m^\frac12\, \Gamma_\alpha(\kappa_*) \, v_\m^\frac12 \right\| \,,
	\end{equation}
	where $\|\cdot\|$ denotes the operator norm in $\Lp^2(\R_\pp)$. We distinguish two cases depending on the size of the first term on the right side of \eqref{bs-1}.
	
	Assume first that $\left\| v_\m^\frac12\, G_0(\kappa_*) v_\m^\frac12 \right\|  \leq \frac 12$. Then, by \eqref{bs-1},
	\begin{equation}
		\label{eq:bs-2}
		\left\| v_\m^\frac12\, \Gamma_\alpha(\kappa_*) \, v_\m^\frac12 \right\| \geq \frac12 \,.
	\end{equation}
	Meanwhile, it follows from Lemma \ref{prop-hs} that
	$$
	\left\| v_\m^\frac12 \, \Gamma_\alpha(\kappa) \, v_\m^\frac12 \right\|_{\rm HS} \leq C_\alpha \kappa^{-2\alpha} \int_0^\infty v(r)_\m (1+r)^{-2\alpha} \, r\,dr \,,
	$$
	for all $\kappa>0$, 
	where $\|\cdot\|_{\rm HS}$ denotes the Hilbert--Schmidt norm in $\Lp^2(\R_\pp)$. Estimating the Hilbert--Schmidt norm from below by the operator norm and setting $\kappa=\kappa_*$, we obtain
	$$
	\left\| v_\m^\frac12 \, \Gamma_\alpha(\kappa_*) \, v_\m^\frac12 \right\| \leq C_\alpha \kappa_*^{-2\alpha} \int_0^\infty v(r)_\m (1+r)^{-2\alpha} \, r\,dr \,.
	$$
	Combining this with \eqref{eq:bs-2}, we obtain
	$$
	\kappa_*^{2\alpha} \leq 2 C_\alpha \int_0^\infty v(r)_\m (1+r)^{-2\alpha} \, r\,dr \,,
	$$
	which implies \eqref{eq:lowesttalpha}.
	
	If, on the contrary, $\left\| v_\m^\frac12\, G_0(\kappa_*) v_\m^\frac12 \right\| > \frac 12$, then the Birman-Schwinger principle implies that 
	\begin{equation*} 
		\inf \spec \big(T_0-2v_\m \big) \ < -\kappa_*^2 \,.
	\end{equation*} 
	Since $T_0$ is unitarily equivalent to the radial part of the Laplace operator in $\R^2$, cf.~\eqref{T-form}, so we infer that
	$$
	\inf\spec \big(-\Delta -2v(|\cdot|)_\m \big) \ < -\kappa_*^2 \,.
	$$
	Combining this with the usual Lieb--Thirring inequality \eqref{lt-2D}, we obtain
	$$
	\kappa_*^{2\alpha} \leq \tr \big(-\Delta -2v(|\cdot|)_\m \big)_\m^\alpha \leq 2^{\alpha+1} \, L_\alpha 2\pi \int_0^\infty v(r)_\m^{\alpha+1}\,r\,dr \,,
	$$
	which implies \eqref{eq:lowesttalpha}. (Note that instead of the Lieb--Thirring inequality \eqref{lt-2D} the so-called one-particle Lieb--Thirring inequality, that is, a Sobolev interpolation inequality \cite[Subsection 5.1.2]{flw-book}, would suffice.) This completes the proof of \eqref{eq:lowesttalpha}.
	
	\medskip
	
	\emph{Step 2.} We now deduce the bound in the proposition from the bound \eqref{eq:lowesttalpha}. This is achieved by bringing the weight $(1+r)^{-2\alpha} r$ appearing for the operator $\mathfrak h^\m$ into a more canonical form and then applying a unitary transformation to remove this more canonical weight. We define
	\begin{equation} \label{Weight}
		w(r) := \begin{cases}
			r  & \text{if }\quad   0 < r\leq 1 \,,
			\\[5pt] 
			r^{1-2\alpha} & \text{if }\quad  1 < r < \infty \,.
		\end{cases}
	\end{equation}
	We denote by $h_\alpha$ the operator in $\Lp^2(\R_\pp, w(r) dr)$ associated with the quadratic form 
	\begin{equation}
		\int_0^\infty |\phi'(r)|^2\,  w(r) \,dr \,,
	\end{equation}
	defined on functions $\phi\in \Lp^2(\R_\pp, w(r) dr)$ that are locally absolutely continuous on $\R_\pp$ and for which the quadratic form is finite. Using the bounds
	$$
	2^{-2\alpha} \, w(r) \leq (1+r)^{-2\alpha} r \leq w(r) \,,
	$$
	we find, similarly as in the proof of Corollary \ref{comparison},
	$$
	\left( \inf\spec_{\Lp^2(\R_\pp, (1+r)^{-2\alpha}r dr)} \left( \mathfrak h^\m + v \right) \right)_\m
	\leq \left( \inf\spec_{\Lp^2(\R_\pp, w(r) dr)} \left( h_\alpha - 2^{2\alpha} v_\m \right) \right)_\m \,.
	$$
	This reduces the proof of the bound in the proposition to the proof of the bound for $h_\alpha - 2^{2\alpha} v_\m$.
	
	The unitary mapping $\mathcal U :\Lp^2(\R_\pp,dr)\to \Lp^2(\R_\pp, w(r) dr)$ given by $\mathcal U \eta(r) := w(r)^{-\frac12} \eta(r)$ satisfies 
	$$
	\int_0^\infty |(\mathcal U\eta)'|^2 w(r)\,dr = \int_0^\infty |\eta'(r)|^2  \,dr -\alpha |\eta(1)|^2 -\frac 14  \int_0^1 \frac{ |\eta(r)|^2}{r^2}\, dr +  \Big(\alpha^2-\frac 14\Big) \int_1^\infty \frac{ |\eta(r)|^2}{r^2}\, dr \,.
	$$
	Clearly, the form core $C^1_c(\R_\pp)$ of $T_\alpha$ is mapped into the form domain of $h_\alpha$. Conversely, arguing as in \cite[Lemma 2.33]{flw-book} one can show that $C^1_c(\R_\pp)$ is a form core of $h_\alpha$, and the image of it under $\mathcal U^{-1}$ is in the form domain of $T_\alpha$. These facts imply that
	$$
	\mathcal U^{-1}\, h_\alpha\,  \mathcal U = T_\alpha \,.
	$$
	As a consequence, for any $\tilde v$ (in particular, for $\tilde v=-2^{2\alpha}v_\m$)
	$$
	\inf\spec_{\Lp^2(\R_\pp, w(r) dr)} \left( h_\alpha +\tilde v \right)
	= \inf\spec_{\Lp^2(\R_\pp, dr)} \left( T_\alpha +\tilde v \right) \,.
	$$
	This concludes the proof of the proposition.	 
\end{proof}

%%%%%%%%%%%%%%%%%%%%%%%%%%%%%%%%%%%
%%%%%%%%%%%%%%%%%%%%%%%%%%%%%%%%%%%

\subsection{Proof of Theorem \ref{thm-main-3}}

We finally put all the ingredients from this section together and prove our second main result.

\begin{proof}[\bf Proof of Theorem \ref{thm-main-3}]
	Given a sufficiently regular real function $V$ on $\R^2$ we define the function $v$ on $\R_\pp$ by \eqref{eq:radialv}. Combining \eqref{eq:radial}, \eqref{dirichlet-1} and Propositions \ref{ltrad0}, \ref{ltradinfty} and \ref{prop-E-1}, we see that for each $0<\alpha<1$ there is a constant $C_\alpha$ such that
	\begin{align*}
		\tr_{\, \Lp^2(\R^2,(1+|x|)^{-2\alpha}dx)} \left( \mathcal P(\mathcal H^\m+V)\mathcal P \right)_\m^\alpha
		\leq C_\alpha \left( \int_0^\infty v(r)_\m^{1+\alpha} \, r \,dr + \int_0^\infty v(r)_\m \,  (1+r)^{-2\alpha}\, r\, dr \right) \\
		\leq \frac{C_\alpha}{2\pi} \left( \int_{\R^2} V(x)_\m^{1+\alpha} \,dx + \int_{\R^2} V(x)_\m \,  (1+|x|)^{-2\alpha}\, dx \right).
	\end{align*}
	The last inequality comes from H\"older's inequality for the angular integration. Combining this inequality with \eqref{split-orth} and Proposition \ref{prop-radial}, we see that for each $0<\alpha<1$ there is a constant $C_\alpha'$ such that
	\begin{align*}
		\tr_{\, \Lp^2(\R^2,(1+|x|)^{-2\alpha}dx)} \left( \mathcal H^\m+V \right)_\m^\alpha
		\leq C_\alpha' \left( \int_{\R^2} V(x)_\m^{1+\alpha} \,dx + \int_{\R^2} V(x)_\m \,  (1+|x|)^{-2\alpha}\, dx \right).
	\end{align*}
	Under Assumption \ref{ass} with $R=1$ (and, as everywhere in this section, $\alpha>0$), we can use the same argument as in Remark \ref{hminusineqality} to replace the weight $(1+|x|)^{-2\alpha}$ in the second term by $(m^\pp m^\m)^2 \, e^{-2(h(x)-h_0)}$. The claimed inequality in Theorem \ref{thm-main-3} for $\gamma=\alpha$ then follows from Corollary \ref{comparison}. Note that this yields, in particular, the claimed dependence of the constants on the product $m^\pp m^\m$. As we have already mentioned at the beginning of this section, the claimed inequality for $\gamma>\alpha$ follows either from the inequality for $\gamma=\alpha$ by the Aizenman--Lieb argument \cite{al} or by Remark \ref{hminusineqality}. Finally, the case $R\neq 1$ can be reduce to the case $R=1$ by scaling as we already observed.
\end{proof}

%%%%%%%%%%%%%%%%%%%%%%%%%%%%%%%%%%%
%%%%%%%%%%%%%%%%%%%%%%%%%%%%%%%%%%%

\section{Proof of Lemma \ref{prop-hs}}

In this section we will give the proof of Lemma \ref{prop-hs}. We use the notation $\Gamma_\alpha(\kappa)$ from \eqref{split} for the resolvent difference. Thus, we are looking for a pointwise bound on the integral kernel of the operator $\Gamma_\alpha(\kappa)$. 

%%%%%%%%%%%%%%%%%%%%%%%%%%%%%%%%%%%%
%%%%%%%%%%%%%%%%%%%%%%%%%%%%%%%%%%%%

Our first goal is to find an explicit formula for this integral kernel. Let
\begin{equation}\label{coef}
	\begin{aligned} 
		A_\alpha(\kappa) & := \kappa I_0(\kappa) K_{\alpha+1}(\kappa) +\kappa I_1(\kappa) K_{\alpha}(\kappa) -2\alpha I_0(\kappa) K_{\alpha}(\kappa) \,, \\[2pt]
		B_\alpha(\kappa) & := \kappa I_0(\kappa) I_{\alpha+1}(\kappa) -\kappa I_1(\kappa) I_{\alpha}(\kappa) +2\alpha I_0(\kappa) I_{\alpha}(\kappa) \,, \\[2pt]
		D_\alpha(\kappa) & := \kappa K_1(\kappa) K_{\alpha}(\kappa) -\kappa K_0(\kappa) K_{\alpha+1}(\kappa) +2\alpha K_0(\kappa) K_{\alpha}(\kappa) \,,
	\end{aligned}
\end{equation}
and put
$$
f_\alpha(\kappa) := \frac{D_\alpha(\kappa) }{A_\alpha(\kappa)} \qquad \text{and}\qquad 
g_\alpha(\kappa) := \frac{B_\alpha(\kappa) }{A_\alpha(\kappa)} \, .
$$
It will turn out that $A_\alpha(\kappa)\neq 0$, so $f_\alpha(\kappa)$ and $g_\alpha(\kappa)$ are well defined.

\begin{lemma}
	\label{intkernel}
	One has $A_\alpha(\kappa)\neq 0$ for all $\kappa>0$. Moreover, for all $\kappa>0$, $\alpha\in(0,1]$ and $0<r\leq r'<\infty$,
	\begin{align} 
		\Gamma_\alpha(r,r';\kappa) = \sqrt{r r'}\, \times
		\begin{cases}
			f_\alpha(\kappa)  I_0(\kappa r) I_0(\kappa r')  & \text{if}\  0 < r\leq r'\leq  1 \,,
			\\[5pt] 
			K_\alpha(\kappa r') I_\alpha(\kappa r)-  I_0(\kappa r) K_0(\kappa r') + g_\alpha(\kappa)  K_\alpha(\kappa r) K_\alpha(\kappa r')  & \text{if}\ 1 < r \leq r' \,, \label{G-alpha} \\[5pt]
			I_0(\kappa r)\, \big( A^{-1}_\alpha(\kappa) \, K_\alpha(\kappa r')  -K_0(\kappa r')\big) & \text{if}  \  0 < r \leq 1 \leq  r' \,.
		\end{cases}
	\end{align}
	The same formula is valid when $r>r'$, provided the variables $r$ and $r'$ are interchanged.
\end{lemma}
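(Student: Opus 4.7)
The strategy is to build both resolvent kernels as Sturm--Liouville Green's functions and then subtract. Integrating by parts in \eqref{T-form}, one identifies $T_\alpha$ as the self-adjoint operator that acts as $-\partial_r^2 - (4r^2)^{-1}$ on $(0,1)$ and as $-\partial_r^2 + (\alpha^2-\tfrac14)r^{-2}$ on $(1,\infty)$, whose operator domain consists of functions that are continuous at $r=1$ and satisfy the jump condition $u'(1^+) - u'(1^-) = -\alpha\, u(1)$, which encodes the boundary term $-\alpha|\eta(1)|^2$ in the form. After the substitution $u(r) = \sqrt{r}\,v(\kappa r)$, the equation $(T_\alpha+\kappa^2)u = 0$ becomes the modified Bessel equation of order $0$ on $(0,1)$ and of order $\alpha$ on $(1,\infty)$, and the Friedrichs condition at $r=0$ together with $\Lp^2$-decay at infinity select the one-sided solutions $\sqrt{r}\, I_0(\kappa r)$ and $\sqrt{r}\, K_\alpha(\kappa r)$ respectively.

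I then construct a global solution $u_1$ that equals $\sqrt{r}\, I_0(\kappa r)$ on $(0,1)$ and $\sqrt{r}[a\, I_\alpha(\kappa r) + b\, K_\alpha(\kappa r)]$ on $(1,\infty)$, with $a,b$ determined by continuity and the jump condition at $r=1$. Using the identities $I_\alpha'(z) = I_{\alpha+1}(z) + \alpha I_\alpha(z)/z$ and $K_\alpha'(z) = -K_{\alpha+1}(z) + \alpha K_\alpha(z)/z$, the matching reduces to a $2\times 2$ linear system whose determinant $-\kappa\bigl[I_\alpha(\kappa)K_{\alpha+1}(\kappa) + I_{\alpha+1}(\kappa)K_\alpha(\kappa)\bigr]$ equals $-1$ by the Bessel Wronskian identity $I_\alpha K_{\alpha+1} + I_{\alpha+1} K_\alpha = 1/z$; Cramer's rule then yields precisely $a = A_\alpha(\kappa)$ and $b = B_\alpha(\kappa)$. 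The analogous construction, starting from $\sqrt{r}\, K_\alpha(\kappa r)$ on $(1,\infty)$ and extending to $(0,1)$ as $\sqrt{r}[c\, I_0(\kappa r) + d\, K_0(\kappa r)]$, produces a second solution $u_2$ with $c = D_\alpha(\kappa)$ and $d = A_\alpha(\kappa)$. Since only the $I_\alpha$-component of $u_1$ contributes to its Wronskian with $u_2$ on $(1,\infty)$, a direct computation gives $W[u_1, u_2] = -A_\alpha(\kappa)$.

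The standard Sturm--Liouville Green's function formula then produces $(T_\alpha+\kappa^2)^{-1}(r,r') = u_1(r_<)\, u_2(r_>)/A_\alpha(\kappa)$, and applying the same scheme at $\alpha = 0$ (where no matching at $r=1$ is required) gives $(T_0+\kappa^2)^{-1}(r,r') = \sqrt{r r'}\, I_0(\kappa r_<) K_0(\kappa r_>)$. Subtracting in the three sub-regions of the lemma yields \eqref{G-alpha}, after which symmetry extends the formula to $r > r'$. To justify $A_\alpha(\kappa) \neq 0$, note that if $A_\alpha(\kappa_0) = 0$ for some $\kappa_0 > 0$, then $u_1$ and $u_2$ would be proportional, producing a nontrivial $\Lp^2$-eigenfunction of $T_\alpha$ at eigenvalue $-\kappa_0^2 < 0$; this is excluded since $T_\alpha$ is unitarily equivalent, via the map $\mathcal U$ from the proof of Proposition \ref{prop-E-1}, to the non-negative operator $h_\alpha$. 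The main difficulty lies precisely in the algebraic step above: the matching at $r=1$ mixes modified Bessel functions of orders $0$ and $\alpha$, and only after repeated use of the derivative identities and the Wronskian relation does the $2 \times 2$ system collapse into the particular combinations $A_\alpha, B_\alpha, D_\alpha$ of \eqref{coef}; everything else is a routine consequence.
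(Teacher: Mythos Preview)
Your proof is correct and follows essentially the same approach as the paper's: both construct the Green's function via two Sturm--Liouville solutions satisfying the appropriate endpoint conditions, match at $r=1$ using the continuity and jump conditions, identify the coefficients with $A_\alpha,B_\alpha,D_\alpha$ via the Bessel derivative identities and the Wronskian relation, and rule out $A_\alpha(\kappa)=0$ by the unitary equivalence of $T_\alpha$ with the nonnegative operator $h_\alpha$. The only cosmetic difference is that the paper first writes the coefficients as abstract $\widetilde A_\alpha,\widetilde B_\alpha,\widetilde C_\alpha,\widetilde D_\alpha$ and verifies $\widetilde C_\alpha=\widetilde A_\alpha$ from the Wronskian before solving the linear system, whereas you solve the matching system directly.
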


\begin{proof}
	We begin by deriving a formula for the integral kernel of $(T_\alpha+\kappa^2)^{-1}$ for $\alpha\in[0,1]$. By Sturm--Liouville theory it can be written in terms of two solutions $v_1$ and $v_2$ of the system
	\begin{align*}
		-v'' - \tfrac14 \, r^{-2} v & = -\kappa^2 v \qquad \text{in}\ (0,1) \,, \\
		-v'' + (\alpha^2-\tfrac14) \, r^{-2} v & = -\kappa^2 v \qquad \text{in}\ (1,\infty) \,, \\
		v(1_-) & = v(1_+) \\
		v'(1_-) & = v'(1_+) + \alpha v(1_+) \,.
	\end{align*}
	(The jump condition at $r=1$ comes from the term $\alpha|\eta(1)|^2$ in the quadratic form \eqref{T-form} of the operator $T_\alpha$.) The solution $v_1$ is supposed to lie in the form domain of $T_\alpha$ near the origin and the solution $v_2$ is supposed to be square-integrable at infinity.
	
	Using standard facts about Bessel's equation \cite[Sec.~9]{as}, we find that these two solutions are given by 
	\begin{align} 
		v_1(r) & = \sqrt{r} \times \begin{cases}
			I_0(\kappa r)  & \text{if }\quad   0 < r\leq 1 \,,
			\\[5pt] 
			\widetilde A_\alpha(\kappa) I_\alpha(\kappa r) + \widetilde B_\alpha(\kappa) K_\alpha(\kappa r) & \text{if }\quad  1 < r < \infty \,,
		\end{cases}
	\end{align}
	and 
	\begin{align} 
		v_2 (r) & = \sqrt{r} \times \begin{cases}
			\widetilde D_\alpha(\kappa) I_0(\kappa r) +  \widetilde C_\alpha(\kappa) K_0(\kappa r)  & \text{if }\quad   0 < r\leq 1 \,,
			\\[5pt] 
			K_\alpha(\kappa r) & \text{if }\quad  1 < r < \infty \,,
		\end{cases}
	\end{align}
	with coefficients $\widetilde A_\alpha(\kappa), \widetilde B_\alpha(\kappa), \widetilde C_\alpha(\kappa)$ and $\widetilde D_\alpha(\kappa)$ that are determined by the continuity and jump conditions at $r=1$. (We will give explicit expressions later in this proof.) Using the Wronski relation \cite[Eq.~9.6.15]{as} for the Bessel functions, viz.
	\begin{equation}\label{wronsk}
		W\big\{ K_\nu(z), I_\nu(z)\big\} = I _\nu(z) K_{\nu+1} (z) +K_\nu(z) I_{\nu+1} (z) = \frac 1z \,,
	\end{equation}
	we obtain
	$$
	W\big\{ v_1, v_2 \big\} = \widetilde C_\alpha(\kappa) = \widetilde A_\alpha(\kappa) \, .
	$$
	
	Let us show that $\widetilde A_\alpha(\kappa)\neq 0$ for all $\kappa>0$. Indeed, if we had $\widetilde A_\alpha(\kappa_0)=0$ for some $\kappa_0>0$, then $v_1$ would be an eigenfunction of $T_\alpha$ with eigenvalue $-\kappa_0^2$, but this contradicts the fact that $T_\alpha$ is a nonnegative operator. The latter fact follows from Step 2 in the proof of Proposition \ref{prop-E-1}, where we showed that $T_\alpha$ is unitarily equivalent to the manifestly nonnegative operator $h_\alpha$.
	
	By Sturm--Liouville theory, it follows from the above facts that for any $\kappa>0$, the integral kernel of $(T_\alpha+\kappa^2)^{-1}$ is given by
	\begin{equation}\label{green}
		\begin{aligned} 
			(T_\alpha+\kappa^2)^{-1}(r,r')= \sqrt{r r'} \times
			\begin{cases}
				K_0(\kappa r') I_0(\kappa r) + \widetilde f_\alpha(\kappa)  I_0(\kappa r) I_0(\kappa r')  & \text{if}\  0 < r\leq r'\leq  1 \,,
				\\[5pt] 
				K_\alpha(\kappa r') I_\alpha(\kappa r) + \widetilde g_\alpha(\kappa)  K_\alpha(\kappa r) K_\alpha(\kappa r')  & \text{if}\    1 < r \leq r' \,, \\[5pt]
				\widetilde A^{-1}_\alpha(\kappa)\, I_0(\kappa r)  K_\alpha(\kappa r')  & \text{if}\    0 < r \leq 1 \leq r' \,,
			\end{cases}
		\end{aligned}
	\end{equation}
	where we have denoted 
	$$
	\widetilde f_\alpha(\kappa) := \frac{\widetilde D_\alpha(\kappa) }{\widetilde A_\alpha(\kappa)} \qquad \text{and}\qquad 
	\widetilde g_\alpha(\kappa) := \frac{\widetilde B_\alpha(\kappa) }{\widetilde A_\alpha(\kappa)} \, .
	$$
	As usual, the formula for $r> r'$ follows by interchanging the variables. 
	
	Note also that $\widetilde A_0(\kappa)=\widetilde C_0(\kappa)=1$ and $\widetilde D_0(\kappa)=\widetilde B_0(\kappa)=0$, so, in particular, 
	$$
	\widetilde f_0(\kappa)=\widetilde g_0(\kappa)=0 \,.
	$$
	Thus, recalling the definition \eqref{split}, we see that \eqref{green} implies the formula in the lemma, except that the untilded quantities appear there rather than the tilded ones. Thus, to complete the proof we need to show that the former coincide with the latter. To do so, we replace the derivatives of Bessel functions appearing in the jump condition at $r=1$ using \cite[Eq.~9.6.26]{as} in terms of Bessel functions without derivatives. Solving the corresponding system of four linear equations with four unknown, we see that $\widetilde A_\alpha(\kappa), \widetilde B_\alpha(\kappa), \widetilde D_\alpha(\kappa)$ are given by the expressions on the right side of \eqref{coef}, as claimed. (Note that we have already shown that $\widetilde C_\alpha(\kappa) = \widetilde A_\alpha(\kappa)$, which is confirmed by the solution of the system of linear equations.)
\end{proof}

Next, we bound the quantities appearing in Lemma \ref{intkernel}.

\begin{lemma}\label{constantbounds}
	Let $\alpha\in(0,1]$. The following bounds hold for all $\kappa>0$ with an implicit constant depending possibly on $\alpha$:
	\begin{align*}
		|A_\alpha(\kappa)^{-1} - 1| & \lesssim \kappa^{-2\alpha} \1_{(0,1)}(\kappa) + \kappa^{-1} \1_{[1,\infty)}(\kappa) \,,\\
		|f_\alpha(\kappa)| & \lesssim \kappa^{-2\alpha} \1_{(0,1)}(\kappa) + \kappa^{-1} e^{-2\kappa} \1_{[1,\infty)}(\kappa) \,,\\
		|g_\alpha(\kappa)| & \lesssim \1_{(0,1)}(\kappa) + \kappa^{-1} e^{2\kappa} \1_{[1,\infty)}(\kappa) \,. 
	\end{align*} 
\end{lemma}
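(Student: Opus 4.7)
\textbf{Proof plan for Lemma \ref{constantbounds}.} The approach is to split $\kappa > 0$ into three regions and analyse them separately. On any compact subinterval $[\kappa_0,\kappa_1]\subset(0,\infty)$, each of $A_\alpha(\kappa)$, $B_\alpha(\kappa)$, $D_\alpha(\kappa)$ is real-analytic in $\kappa$, and by Lemma~\ref{intkernel} we have $A_\alpha(\kappa)\neq 0$ everywhere; hence all three quantities in the statement are bounded on $[\kappa_0,\kappa_1]$ by an $\alpha$-dependent constant. Thus the content of the lemma reduces to establishing the claimed asymptotic behaviour as $\kappa\to 0^+$ and as $\kappa\to\infty$ (for $\alpha\in(0,1)$; the case $\alpha=1$ is handled analogously using the logarithmic expansion of $K_1$).

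For the small-$\kappa$ regime, I would insert the standard power series for $I_\nu$ together with, for $0<\alpha<1$, the representation $K_\alpha=\tfrac{\pi}{2\sin(\alpha\pi)}(I_{-\alpha}-I_\alpha)$ and the well-known logarithmic expansions of $K_0, K_1$. The decisive observation is that in $A_\alpha$ the leading $\kappa^{-\alpha}$ contributions from $\kappa I_0 K_{\alpha+1}$ and $-2\alpha I_0 K_\alpha$ are $\Gamma(\alpha+1)\,2^\alpha\kappa^{-\alpha}$ and $-\alpha\Gamma(\alpha)\,2^\alpha\kappa^{-\alpha}$ respectively, and these cancel exactly by $\Gamma(\alpha+1)=\alpha\Gamma(\alpha)$. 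The surviving leading term is $A_\alpha(\kappa)\sim \Gamma(1-\alpha)\,2^{-\alpha}\kappa^\alpha$, so $A_\alpha^{-1}(\kappa)=O(\kappa^{-\alpha})=O(\kappa^{-2\alpha})$ on $(0,1)$. The same mechanism cancels the $\kappa^{-\alpha}\log\kappa$ contributions of $-\kappa K_0 K_{\alpha+1}$ and $2\alpha K_0 K_\alpha$ in $D_\alpha$, leaving $D_\alpha(\kappa)\sim \Gamma(\alpha)\,2^{\alpha-1}\kappa^{-\alpha}$ from $\kappa K_1 K_\alpha$, whence $f_\alpha=D_\alpha/A_\alpha=O(\kappa^{-2\alpha})$. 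For $B_\alpha$, each of the three summands is $O(\kappa^\alpha)$, the leading one being $2\alpha I_0 I_\alpha \sim \tfrac{2^{1-\alpha}}{\Gamma(\alpha)}\kappa^\alpha$, so $g_\alpha$ has a finite limit at $\kappa=0$.

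For the large-$\kappa$ regime, I would invoke the uniform asymptotic expansions
\[
I_\nu(\kappa)=\tfrac{e^\kappa}{\sqrt{2\pi\kappa}}\!\left[1-\tfrac{4\nu^2-1}{8\kappa}+O(\kappa^{-2})\right],\quad K_\nu(\kappa)=\sqrt{\tfrac{\pi}{2\kappa}}\,e^{-\kappa}\!\left[1+\tfrac{4\nu^2-1}{8\kappa}+O(\kappa^{-2})\right].
\]
Multiplying these out yields $\kappa I_0 K_{\alpha+1}=\tfrac12+O(\kappa^{-1})$, $\kappa I_1 K_\alpha=\tfrac12+O(\kappa^{-1})$, and $2\alpha I_0 K_\alpha=O(\kappa^{-1})$, so $A_\alpha(\kappa)=1+O(\kappa^{-1})$. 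In $B_\alpha$ the leading $e^{2\kappa}/(2\pi)$ contributions of $\kappa I_0 I_{\alpha+1}$ and $\kappa I_1 I_\alpha$ cancel, and the first-order correction produces $\kappa(I_0 I_{\alpha+1}-I_1 I_\alpha)=-\tfrac{\alpha e^{2\kappa}}{2\pi\kappa}+O(\kappa^{-2}e^{2\kappa})$; combining with $2\alpha I_0 I_\alpha=\tfrac{\alpha e^{2\kappa}}{\pi\kappa}+O(\kappa^{-2}e^{2\kappa})$ gives $B_\alpha\sim \tfrac{\alpha e^{2\kappa}}{2\pi\kappa}$, hence $g_\alpha=O(\kappa^{-1}e^{2\kappa})$. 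The parallel computation for $D_\alpha$ produces $D_\alpha\sim \tfrac{\pi\alpha e^{-2\kappa}}{2\kappa}$ and hence $f_\alpha=O(\kappa^{-1}e^{-2\kappa})$.

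The main obstacle is carrying out these cancellations without error. Each of the three products appearing in $A_\alpha, B_\alpha, D_\alpha$ individually either diverges as $\kappa\to 0^+$ or grows exponentially as $\kappa\to\infty$, so the stated bounds become visible only after the leading contributions have been cancelled. At $\kappa=0$ the cancellations reflect the Wronskian identity, which gives $A_0\equiv 1$ and $B_0\equiv D_0\equiv 0$, and rely crucially on $\Gamma(\alpha+1)=\alpha\Gamma(\alpha)$; at $\kappa=\infty$ the cancellations in $B_\alpha, D_\alpha$ are visible only at the second order of the Bessel asymptotics, so both terms of the expansions must be retained throughout. Once the cancellations have been performed and the surviving asymptotics identified, the claimed bounds on each regime follow by direct comparison with the stated powers of $\kappa$.
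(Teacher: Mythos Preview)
Your proposal is correct and follows essentially the same route as the paper: compute the small-$\kappa$ and large-$\kappa$ asymptotics of $A_\alpha$, $B_\alpha$, $D_\alpha$ using the standard Bessel expansions, identify the key cancellations (the $\kappa^{-\alpha}$ terms in $A_\alpha$ and the $\kappa^{-\alpha}\ln\kappa$ terms in $D_\alpha$ at the origin, and the leading exponential terms in $B_\alpha$, $D_\alpha$ at infinity), and cover the intermediate range by continuity together with the nonvanishing of $A_\alpha$ from Lemma~\ref{intkernel}. Your explicit leading constants agree with the paper's, except that for $D_\alpha$ you obtain $2^{\alpha-1}\Gamma(\alpha)\kappa^{-\alpha}$ while the paper records $2^{-1-\alpha}\Gamma(\alpha)\kappa^{-\alpha}$; this discrepancy is immaterial for the lemma since only the order $\kappa^{-\alpha}$ is used.
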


In fact, in the following proof we will establish the asymptotic behavior of the three quantities in the lemma for $\kappa\to 0$ and $\kappa\to\infty$. The above bounds, however, are all that we need.

For the proof we make use of the following asymptotic facts about Bessel functions:
\begin{equation}  \label{KI-zero}
	\begin{aligned}
		K_\nu(z) & = 
		\begin{cases}
			-\ln z + C + \mathcal O(z^2 |\ln z|) & \text{if}\ \nu=0 \,,\\
			\left( \frac z2 \right)^{-\nu} \tfrac12\, \Gamma(\nu) - \left( \frac z2 \right)^\nu \frac1{2\nu}\,\Gamma(1-\nu) + \mathcal O(z^{2-\nu})		& \text{if}\ 0<\nu<1 \,, \\
			z^{-1} + \mathcal O(z|\ln z|) & \text{if}\ \nu=1 \,, \\
			\left( \frac z2 \right)^{-\nu} \tfrac12\, \Gamma(\nu) + \mathcal O(z^{2-\nu})		& \text{if}\ \nu>1 \,,
		\end{cases}
		\qquad\text{as}\ z\to 0 \,, \\[6pt]
		I_\nu(z) & = \left( \frac z2 \right)^\nu \Gamma(\nu+1)^{-1} + \mathcal O(z^{2+\nu}) \qquad\text{as}\ z\to 0 \,,
	\end{aligned}
\end{equation} 
see \cite[(9.6.10) and (9.6.2)]{as}, and
\begin{equation}  \label{KI-infty}
	\begin{aligned}
		K_\nu(z) & = \sqrt{\frac{\pi}{2z}}\ e^{-z}\Big(1 +\frac{4\nu^2-1}{8z} + \mathcal{O}(z^{-2})\Big) \qquad\text{as}\ z\to\infty \,, \\[6pt]
		I_\nu(z) & = \sqrt{\frac{1}{2 \pi z}}\ e^{z}\Big(1 -\frac{4\nu^2-1}{8z} + \mathcal{O}(z^{-2})\Big) \qquad\text{as}\ z\to\infty \,.
	\end{aligned}
\end{equation} 
see \cite[(9.7.1) and (9.7.2)]{as}. The constant in \eqref{KI-zero} for $\nu=0$ is known, but its value is irrelevant for our purposes.

In addition, we use the following global properties of Bessel functions: $K_\nu$ and $I_\nu$ are positive. Moreover, $I_0$ is decreasing. (In fact, $I_\nu$ is decreasing and $K_\nu$ is increasing for any $\nu\geq 0$, but we will not need this.)

\begin{proof}
	\emph{Step 1. Asymptotics at the origin.} From \eqref{KI-zero} we deduce that, as $\kappa\to 0$,
	\begin{align*}
		A_\alpha(\kappa) &= 2^{-\alpha} \Gamma(1-\alpha) \,\kappa^\alpha + \mathcal O(\kappa^{2-\alpha}) \,,\\[4pt]
		B_\alpha(\kappa) &= 2^{1-\alpha} \Gamma(\alpha)^{-1} \, \kappa^\alpha + \mathcal{O}(\kappa^{2+\alpha}) \,,\\[4pt]
		D_\alpha(\kappa) &= 2^{-1-\alpha} \Gamma(\alpha) \, \kappa^{-\alpha} + \mathcal O(\kappa^\alpha \ln\kappa) \,. 
	\end{align*}
	We note in this computation there is a cancellation at order $\kappa^{-\alpha}$ for $A_\alpha(\kappa)$ and at order $\kappa^{-\alpha}\ln\kappa$ for $D_\alpha(\kappa)$.
	
	These asymptotics imply that, as $\kappa\to 0$,
	\begin{equation*} %\label{fg-asymp-small}
		\begin{aligned}
			f_\alpha(\kappa)\ &  = \frac{\Gamma(\alpha)}{2\,\Gamma(1-\alpha)}\, \kappa^{-2\alpha} + \mathcal{O}(|\ln\kappa| + \kappa^{2-4\alpha}) \,,  \\[4pt]
			g_\alpha(\kappa)\ &  = \frac{2}{\Gamma(\alpha)\,\Gamma(1-\alpha)} +  \mathcal{O}(\kappa^{2-2\alpha}) \,.
		\end{aligned}
	\end{equation*}
	
	\medskip
	
	\emph{Step 2. Asymptotics at infinity.} From \eqref{KI-infty} we deduce that, as $\kappa\to\infty$,
	\begin{equation*} %\label{ABD-k-large}
		\begin{aligned}
			A_\alpha(\kappa) &= 1+   \mathcal{O}(\kappa^{-1}) \,,  \\[3pt]
			B_\alpha(\kappa) &= \frac{\alpha}{2\pi \kappa}\, e^{2\kappa}\, \big(1+  \mathcal{O}(\kappa^{-1}) \big) \,,  \\[3pt]
			D_\alpha(\kappa) &=  \frac{\alpha \pi}{2 \kappa} \, e^{-2\kappa} \, \big(1+  \mathcal{O}(\kappa^{-1}) \big) \,.
		\end{aligned}
	\end{equation*}
	These asymptotics imply that, as $\kappa\to\infty$,
	\begin{equation*} %\label{fg-asymp-large}
		\begin{aligned}
			f_\alpha(\kappa)\ &  = \frac{\alpha\pi}{2\kappa}\, e^{-2\kappa}  \big(1+  \mathcal{O}(\kappa^{-1}) \big) \,, \\[5pt]
			g_\alpha(\kappa)\ &  =  \frac{\alpha}{2\pi \kappa}\, e^{2\kappa} \big(1+  \mathcal{O}(\kappa^{-1}) \big) \,.
		\end{aligned}
	\end{equation*}
	
	\medskip
	
	\emph{Step 3. Uniform bounds.}
	The asymptotics in Steps 1 and 2 imply that the claimed bounds in the lemma hold for all sufficiently small and all sufficiently large $\kappa$. Since $A_\alpha(\kappa)$, $B_\alpha(\kappa)$ and $D_\alpha(\kappa)$ are continuous functions of $\kappa$ and since $A_\alpha(\kappa)$ does not vanish according to Lemma \ref{intkernel}, we obtain the claimed bounds for all $\kappa>0$.
\end{proof}

After these preparations we are ready to prove the claimed pointwise bound on the the integral kernel of $\Gamma_\alpha(\kappa)$.

\begin{proof}[Proof of Lemma \ref{prop-hs}]
	By selfadjointness, it suffices to prove the bound for $r\leq r'$, which we will assume throughout the proof. We split the integral kernel $\Gamma_\alpha(r,r';\kappa)$ of $\Gamma_\alpha(\kappa)$ as
	\begin{equation} \label{R-eq}
		\Gamma_\alpha(r,r';\kappa) = \sqrt{r r'} \left( \mathcal{S}(r,r';\kappa)+\rr(r,r';\kappa) \right)
	\end{equation}
	with 
	\begin{align} 
		\s(r,r';\kappa) = 
		\begin{cases}
			f_\alpha(\kappa)  I_0(\kappa r) I_0(\kappa r')  & \text{if}\  0 < r\leq r'\leq  1 \,,
			\\[5pt] 
			g_\alpha(\kappa)  K_\alpha(\kappa r) K_\alpha(\kappa r')  & \text{if}\  1 < r \leq r' \,, \\[5pt]
			I_0(\kappa r) \big(A^{-1}_\alpha(\kappa)  K_\alpha(\kappa r') - K_0(\kappa r') \big) & \text{if}\  0 < r \leq 1 \leq  r' \,,  \label{S-def}
		\end{cases} 
	\end{align}
	and 
	\begin{align}  \label{def-rr}
		\rr(r,r';\kappa) =
		\begin{cases}
			K_\alpha(\kappa r') I_\alpha(\kappa r)-  I_0(\kappa r) K_0(\kappa r')  & \text{if}\  1 < r \leq r' \,, \\[5pt]
			0  & {\rm  elsewhere} \,,
		\end{cases}
	\end{align}
	and show that both pieces satisfy the bound claimed in the lemma. This is the content of the following two respective steps.
	
	To simplify the notation, we shall use the symbol $\lesssim$ to indicate the existence of a constant such that the inequality holds when the right side is multiplied by this constant. The constant may depend on $\alpha\in(0,1]$, but is independent of $0<r\leq r'<\infty$ and $\kappa>0$.
		
	\bigskip
	
	\emph{Step 1.} In this step we show that $|\s(r,r';\kappa)|\lesssim \kappa^{-2\alpha} \, (1+r)^{-\alpha} (1+r')^{-\alpha}$ for all $0<r\leq r'<\infty$.
	
	We distinguish three cases. 
	
	First, let $0<r \leq r'\leq 1$. We claim that
	\begin{equation} \label{sup-I}
		\sup_{\kappa >0,\, \rho \leq1}  \kappa ^{2\alpha} | f_\alpha(\kappa)| I^2_0(\kappa \rho) <\infty \,.
	\end{equation}
	Once we have shown this, we deduce that
	$$
	|\s(r,r';\kappa)| =  | f_\alpha(\kappa) |  I_0(\kappa r) I_0(\kappa r')  \  \lesssim \kappa^{-2\alpha} \lesssim \kappa^{-2\alpha} (1+r)^{-\alpha} (1+r')^{-\alpha} \,,
	$$
	which is the claimed bound.
	
	To prove \eqref{sup-I}, we first assume $\kappa\leq 1$. Then, by \eqref{KI-zero}, $I_0(\kappa\rho)^2\lesssim 1$ for all $\rho\leq 1$ and, by Lemma \ref{constantbounds}, $|f_\alpha(\kappa)|\lesssim \kappa^{-2\alpha}$. This proves \eqref{sup-I} for $\kappa\leq 1$. Now let $\kappa\geq 1$. Then, by \eqref{KI-infty} and the monotonicity of $I_0$, $I_0(\kappa\rho)^2 \leq I_0(\kappa)^2 \lesssim \kappa^{-1} e^{2\kappa} \lesssim \kappa^{-2\alpha+1} e^{2\kappa}$ for all $\rho\leq 1$. Moreover, by Lemma \ref{constantbounds}, $|f_\alpha(\kappa)|\lesssim \kappa^{-1} e^{-2\kappa}$. This proves \eqref{sup-I} for $\kappa\geq 1$.	
	
	Next, let $1 \leq r \leq r'$. We claim that
	\begin{equation} \label{sup-K}
		\sup_{\kappa >0,\, \rho\geq 1}  (\kappa \rho)^{2\alpha} | g_\alpha(\kappa)| K^2_\alpha(\kappa \rho) < \infty \,.
	\end{equation}
	Once we have shown this, we deduce that
	$$ 
		|\s(r,r';\kappa)| = | g_\alpha(\kappa) |  K_\alpha(\kappa r) K_\alpha(\kappa r')  \  \lesssim (\kappa r)^{-2\alpha} (\kappa r')^{-2\alpha} \lesssim 		 \kappa^{-2\alpha} (1+r)^{-\alpha} (1+r')^{-\alpha}\,,
	$$
	which is the claimed bound.
	
	To prove \eqref{sup-K}, we first assume $\kappa\leq 1$. Then, by Lemma \ref{constantbounds}, $|g_\alpha(\kappa)|\lesssim 1$. Meanwhile, it follows from \eqref{KI-zero} and \eqref{KI-infty} that
	\begin{equation*}
		%\label{eq:besselkglobal}
		\sup_{z>0} z^\alpha K_\alpha(z) <\infty \,.
	\end{equation*}
	(We emphasize that $\alpha>0$.) This proves \eqref{sup-K} for $\kappa\leq 1$. Now let $\kappa\geq 1$. Then, by Lemma \ref{constantbounds}, $|g_\alpha(\kappa)|\lesssim \kappa^{-1} e^{2\kappa}$. Moreover, by \eqref{KI-infty}, $K_\alpha(\kappa\rho)^2 \lesssim (\kappa\rho)^{-1} e^{-2\kappa\rho}$ for all $\rho\geq 1$. Thus,
	$$
	(\kappa \rho)^{2\alpha} | g_\alpha(\kappa)| K^2_\alpha(\kappa \rho)
	\lesssim (\kappa \rho)^{2\alpha-1} \kappa^{-1} e^{2\kappa} e^{-2\kappa\rho} \,.
	$$
	The function $z\mapsto z^{2\alpha-1} e^{-z}$ is decreasing on $((2\alpha-1)_\pp,\infty)$. Since $\kappa\rho\geq 1>(2\alpha-1)_\pp$, we deduce that
	$$
	 (\kappa \rho)^{2\alpha-1} \kappa^{-1} e^{2\kappa} e^{-2\kappa\rho} \leq  \kappa^{2\alpha-2} \leq 1 \,.
	$$
	This proves \eqref{sup-K} for $\kappa\geq 1$.
	
	Finally, let $r\leq 1 \leq r'$. It follows from \eqref{KI-zero} and \eqref{KI-infty} that
	\begin{align}
		\sup_{z>0}  \max\{z^\alpha,z^2\}   I_0(z) \big | K_\alpha(z) -K_0(z)\big|   <\infty 
		\label{KI-unif}
	\end{align}
	and
	\begin{equation}
		\label{eq:KImin}
		\sup_{z>0} \max\{ z^\alpha,z\} I_0(z) K_\alpha(z) <\infty \,.
	\end{equation}
	We use the monotonicity of $I_0$, together with \eqref{KI-unif} and \eqref{eq:KImin}, to bound
	\begin{align*}
		|\s(r,r';\kappa)| & = \ I_0(\kappa r)   \big|A^{-1}_\alpha(\kappa)  K_\alpha(\kappa r') - K_0(\kappa r') \big| \\[3pt]
		& \leq \ I_0(\kappa r')   \big|A^{-1}_\alpha(\kappa)  K_\alpha(\kappa r') - K_0(\kappa r') \big| \\[3pt]
		& \leq\   I_0(\kappa r')  |K_\alpha(\kappa r') - K_0(\kappa r') | +   |1-A^{-1}_\alpha(\kappa)| \, I_0(\kappa r')  K_\alpha(\kappa r')  \\[3pt]
		& \lesssim \min\{ (\kappa r')^{-\alpha}, (\kappa r')^{-2} \}
		+ \min\{(\kappa r')^{-\alpha},(\kappa r')^{-1}\} \big[ \kappa^{-\alpha} \1_{(0,1)}(\kappa) +\kappa^{-1} \1_{[1,\infty)}(\kappa) \big] \,. 
	\end{align*}
	We claim that the right side is bounded by $\kappa^{-2\alpha} (r')^{-\alpha}$. Since $(r')^{-\alpha}\lesssim (1+r')^{-\alpha} (1+r)^{-\alpha}$ for $r\leq 1\leq r'$, this implies the claimed bound. 
	
	For the first term on the right side, we bound $\min\{ (\kappa r')^{-\alpha}, (\kappa r')^{-2} \}\leq (\kappa r')^{-2\alpha}\leq \kappa^{-2\alpha} (r')^{-\alpha}$, as desired. For the second term we distinguish according to the size of $\kappa$. For $\kappa<1$ we bound
	$$
	\min\{(\kappa r')^{-\alpha},(\kappa r')^{-1}\} \, \kappa^{-\alpha} \leq \kappa^{-2\alpha} (r')^{-\alpha} \,,
	$$
	and for $\kappa\geq 1$ we bound, using $\kappa r'\geq 1$,
	$$
	\min\{(\kappa r')^{-\alpha},(\kappa r')^{-1}\} \, \kappa^{-1} \leq (\kappa r')^{-1} \kappa^{-1} \leq (\kappa r')^{-\alpha} \kappa^{-\alpha} = \kappa^{-2\alpha} (r')^{-\alpha} \,.
	$$
	This completes the proof of the bound that we claimed at the beginning of this step.	
	
	\bigskip
	
	\emph{Step 2.} In this step we show that $|\rr(r,r';\kappa)|\lesssim (\kappa r')^{-2\alpha}$ for all $1\leq r\leq r'$. Since $(r')^{-2\alpha} \leq (1+r)^{-\alpha} \, (1+r')^{-\alpha}$ for all such $r,r'$, we obtain the claimed bound for $\rr(r,r';\kappa)$.
	
	To prove this claim, we decompose $\rr(r,r';\kappa)$ with $1\leq r\leq r'$ further as 
	\begin{equation*} %\label{R-split-1}
		\rr(r,r';\kappa) = \rr_{\rm a}(r,r';\kappa) - \rr_{\rm b}(r,r';\kappa)
	\end{equation*}
	with 
	\begin{align*}  %\label{Ra-def}
		\rr_{\rm a}(r,r';\kappa) & :=  I_0(\kappa r) \big( K_\alpha(\kappa r') -K_0(\kappa r')\big)   
	\end{align*}
	and
	\begin{align*}  %\label{Rb-def}
		\rr_{\rm b}(r,r';\kappa) & :=
		K_\alpha(\kappa r') \big( I_0(\kappa r) -I_\alpha(\kappa r)\big).
	\end{align*}
	By \eqref{KI-unif} (together with $\max\{z^\alpha,z^2\}\geq z^{2\alpha}$) and monotonicity of $I_0$ it follows that, if $1\leq r\leq r'$, then
	\begin{align*} 
		|\rr_{\rm a}(r,r';\kappa)| & = I_0(\kappa r) \big |K_\alpha(\kappa r') -K_0(\kappa r')\big| \, \leq I_0(\kappa r') \big |K_\alpha(\kappa r') -K_0(\kappa r')\big| \,  \lesssim (\kappa r')^{-2\alpha} \,,
	\end{align*}
	which is the claimed bound.
	
	To estimate $\rr_{\rm b}(r,r';\kappa) $ we will distinguish two cases. 
	
	\smallskip
	
	\noindent If $\kappa r \geq 1$, then we use the upper bound (see \eqref{KI-infty})
	\begin{equation}\label{eq:KI-large}
		K_\alpha(z') \big | I_0(z) -I_\alpha(z)\big | \lesssim\, \frac{e^{z-z'}}{ \sqrt{z'} \, z^{3/2}}  \qquad \forall\, z, z'\geq 1 \,,
	\end{equation}
	which implies that for $z' \geq 1$ we have 
	\begin{equation*}
		\sup_{1\leq z \leq z'}  K_\alpha(z') \big | I_0(z) -I_\alpha(z)\big| \lesssim\, (z')^{-2} \leq (z')^{-2\alpha} \,,
	\end{equation*}
	which is the claimed bound.
	
	\smallskip
	
	\noindent If $\kappa r \leq 1$, then we use the bound
	$$
	\sup_{z'>0} (z')^{2\alpha} K_\alpha(z')<\infty \,,
	$$
	which follows from \eqref{KI-zero} and \eqref{KI-infty}, as well as the bound
	\begin{equation} \label{I-small}
		\sup_{0< z \leq 1} \big| I_0(z) -I_\alpha(z)\big| <\infty \,,
	\end{equation}
	which follows from \eqref{KI-zero}. Combining these two inequalities yields the claimed bound.
\end{proof}

The previous proof relies heavily on the fact that the leading terms in the asymptotic expansion of $K_\nu(z)$ for $z\to \infty$ coincide for different $\nu$; see \eqref{KI-infty}. This is used in \eqref{KI-unif} and \eqref{eq:KI-large}. This shows that subtracting the integral kernel of $(T_0+\kappa^2)^{-1}$ from $(T_\alpha+\kappa^2)^{-1}$ leads to certain cancelations in the integral kernel of $\Gamma_\alpha(\kappa)$ in terms of $\kappa^{-2\alpha} \sqrt{rr'}\, (1+r)^{-\alpha} (1+r')^{-\alpha}$. Notice in particular that the bound
$$
(T_\alpha +\kappa^2)^{-1}(r,r') \lesssim \kappa^{-2\alpha} \sqrt{rr'}\, (1+r)^{-\alpha} (1+r')^{-\alpha}
$$
{\it cannot} hold. Indeed, if this bound held, we could follow the reasoning in the proof of Proposition \ref{prop-E-1} and prove the inequality
$$
\left( \inf\spec \left( \mathfrak h^\m + v \right) \right)_\m^\alpha 
\lesssim \int_0^\infty v(r)_\m \,  (1+r)^{-2\alpha}\, r\, dr \,.
$$ 
This, however, would contradict the results of Appendix \ref{sec-one-term}.

%%%%%%%%%%%%%%%%%%%%%%%%%%%%%%%%%%%%
%%%%%%%%%%%%%%%%%%%%%%%%%%%%%%%%%%%%

\appendix

\section{On Assumption \ref{ass}}

Let us show that the parameter $\alpha$ defined in \eqref{eq:alphaflux} and the parameter $\alpha$ appearing in Assumption \ref{ass} coincide when $\mu$ is absolutely continuous with integrable density $B=\frac{d\mu}{dx}$. This follows from the following more general result.

\begin{lemma}\label{meaningalpha}
	Let $\mu$ be a signed real Borel measure on $\R^2$ of finite total variation and assume that there is an $h\in W^{1,1}_{\loc}(\R^2)$ such that $\Delta h =\mu$ in the sense of distributions. Assume that there is an $\alpha\in\R$ such that both numbers $m^\ppm$ in \eqref{h-bounds} are finite. Then $\alpha = (2\pi)^{-1} \mu(\R^2)$.
\end{lemma}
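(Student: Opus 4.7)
The plan is to compute the leading asymptotic behavior at infinity of the spherical average
\[ \bar h(r) := \frac{1}{2\pi} \int_0^{2\pi} h(r\cos\theta, r\sin\theta)\,d\theta \]
in two ways, once directly from Assumption~\ref{ass} and once from the distributional identity $\Delta h = \mu$ via Green's theorem on disks, and to match the two answers.

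\textbf{First asymptotic.} Taking logarithms in \eqref{h-bounds} yields
\[
\alpha\log(1+|x|/R) - \log m^{\m} \le h(x) \le \alpha \log(1+|x|/R) + \log m^{\pp}
\]
for a.e.\ $x$, so $h(x) = \alpha \log|x| + O(1)$ as $|x|\to\infty$, and the same bound holds for $\bar h(r)$. In particular $\bar h(r) = \alpha \log r + O(1)$.

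\textbf{Second asymptotic via mollification.} I will establish the identity
\[
\bar h(r) - \bar h(r_0) = \int_{r_0}^r \frac{\mu(B(0,s))}{2\pi s}\,ds
\]
for a.e.\ $0<r_0<r$. For $C^\infty$ functions this is just the divergence theorem on $B(0,s)$, since $\int_{B(0,s)} \Delta h\,dx = 2\pi s\, \bar h'(s)$. For general $h$ I mollify by $h_\epsilon := \rho_\epsilon * h$, so that $\Delta h_\epsilon = \rho_\epsilon * \mu =: \mu_\epsilon$ is smooth and the identity holds for $h_\epsilon$. On the left, $h_\epsilon \to h$ in $L^1_{\loc}$ and Fubini yields $\bar h_\epsilon(r) \to \bar h(r)$ for a.e.\ $r$ along a subsequence. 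On the right, dominated convergence (using $|\mu_\epsilon(B(0,s))| \le |\mu|(\R^2)$) together with the pointwise convergence $\mu_\epsilon(B(0,s)) \to \mu(B(0,s))$ for a.e.\ $s$ (namely those with $|\mu|(\partial B(0,s))=0$) yields the identity. The main technical obstacle lies in this step: making the passage to the limit rigorous for a.e.\ choice of $r_0,r$, with only $W^{1,1}_{\loc}$ regularity on $h$.

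\textbf{Conclusion.} Writing $\mu(B(0,s)) = \mu(\R^2) - \mu(\R^2\setminus B(0,s))$ in the identity, the first piece contributes $(2\pi)^{-1}\mu(\R^2)\log(r/r_0)$, while the second is $o(\log r)$ as $r\to\infty$: since $|\mu|(\R^2\setminus B(0,s))\to 0$, for any $\epsilon>0$ one can pick $s_0$ so that the tail integral is at most $(2\pi)^{-1}\epsilon\log(r/s_0)$. Combining with the first asymptotic,
\[
\alpha \log r + O(1) = \bar h(r) = \frac{\mu(\R^2)}{2\pi} \log r + o(\log r),
\]
and dividing by $\log r$ and letting $r \to \infty$ gives $\alpha = \mu(\R^2)/(2\pi)$.
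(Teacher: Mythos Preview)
Your argument is correct and follows essentially the same idea as the paper: both proofs extract the asymptotics of the circular averages $\bar h(r)$ from Assumption~\ref{ass} on one hand and from the equation $\Delta h=\mu$ on the other, and compare. The only difference is in packaging step two: the paper tests $\Delta h=\mu$ against the Lipschitz cutoff $\chi$ that equals $1$ on $B(0,\rho)$, vanishes outside $B(0,\sigma\rho)$, and is logarithmic in between, then integrates by parts in the annulus (where $\chi$ is harmonic) to obtain directly $(\ln\sigma)^{-1}2\pi\big(\bar h(\sigma\rho)-\bar h(\rho)\big)=\int\chi\,d\mu\to\mu(\R^2)$; you instead derive the classical spherical-mean identity $\bar h(r)-\bar h(r_0)=\int_{r_0}^r(2\pi s)^{-1}\mu(B(0,s))\,ds$ via mollification and then read off the logarithmic coefficient. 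Both routes are short and the mollification/limit issues you flag are handled correctly.
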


\begin{proof}
	For two parameters $\rho>0$ and $\sigma>1$ to be specified later, we introduce the function $\chi$ on $\R^2$ by
	$$
	\chi(x) := \begin{cases}
		1 & \text{if}\ |x|\leq \rho \,,\\
		1- (\ln\sigma)^{-1} \ln(|x|/\rho) & \text{if}\ \rho<|x|\leq \sigma\rho \,,\\
		0 & \text{if}\ |x|>\sigma\rho \,.
	\end{cases}
	$$
	This function is Lipschitz and has compact support. Since $h\in W^{1,1}_{\rm loc}(\R^2)$, we can test the equation $\Delta h =\mu$ against $\chi$ and obtain
	$$
	\int_{\R^2} \chi(x) d\mu(x) = - \int_{\R^2} \nabla\chi(x)\cdot\nabla h(x) \,dx \,.
	$$
	(Strictly speaking, we test the equation against the convolution of $\chi$ with a $C^\infty_c$ mollifier and pass to the limit in the mollification parameter.) Since $\mu$ has finite total variation, dominated convergence implies that the left side tends to $\mu(\R^2)$ as $\rho\to\infty$ for any fixed $\sigma$. Since $\chi$ is harmonic in $\{ \rho<|x|<\sigma\rho\}$, we see that the right side is equal to
	\begin{align*}
		- \int_{\R^2} \nabla\chi\cdot\nabla h\,dx & = - \int_{\rho<|x|<\sigma\rho} \nabla\chi\cdot\nabla h\,dx = - \int_{|x|=\sigma\rho} \frac{x}{|x|}\cdot(\nabla\chi) h\,ds(x) + \int_{|x|=\rho} \frac{x}{|x|}\cdot(\nabla\chi) h\,ds(x) \\
		& = (\ln\sigma)^{-1} (\rho\sigma)^{-1} \int_{|x|=\sigma\rho} h\,ds(x) - (\ln\sigma)^{-1} \rho^{-1} \int_{|x|=\rho} h\,ds(x) \,.
	\end{align*}
	Here $ds$ denotes integration with respect to the surface measure. Assumption \ref{ass} means that for all $x\in\R^2$,
	$$
	- \ln m^\m \leq h(x) - \alpha \ln(1+|x|/R) \leq \ln m^\pp \,.
	$$
	This implies that for $r\in\{\rho,\sigma\rho\}$, we have
	$$
	- 2\pi r \ln m^\m \leq \int_{|x|=r} h\,ds(x) - 2\pi\alpha r \ln(1+r/R) \leq 2\pi r \ln m^\pp
	$$
	and therefore
	$$
	\left| - \int_{\R^2} \nabla\chi\cdot\nabla h\,dx - (\ln\sigma)^{-1} 2\pi\alpha \ln\frac{1+\sigma\rho/R}{1+\rho/R} \right| \leq (\ln\sigma)^{-1} 2\pi \ln(m^\pp m^\m) \,. 
	$$ 
	Letting first $\rho\to\infty$ and then $\sigma\to\infty$ we easily find that
	$$
	- \int_{\R^2} \nabla\chi\cdot\nabla h\,dx \to 2\pi\alpha \,.
	$$
	This proves the claimed identity $\mu(\R^2)=2\pi\alpha$.	
\end{proof}

Next, we show that Assumption \ref{ass} is satisfied for a large class of absolutely continuous measures.

\begin{lemma}\label{hlp}
	Let $B\in\textup{\Lp}^1(\R^2,(1+\ln_\pp|x|)dx)$ such that, for some $p>1$,
	$$
	\sup_{x\in\R^2} \int_{|y-x|<1} |B(y)|^p\,dy <\infty \,.
	$$
	Then
	$$
	h(x) := \frac1{2\pi}\ \int_{\R^2} B(y) \ln |x-y|\,dy
	$$
	belongs to $W^{1,r}_{\rm loc}(\R^2)$ for $r=\frac{2p}{2-p}$ if $p<2$, any $r<\infty$ if $p=2$ and $r=\infty$ if $p>2$. It solves $\Delta h=B$ and satisfies \eqref{h-bounds} with $\alpha$ given by \eqref{eq:alphaflux}.
\end{lemma}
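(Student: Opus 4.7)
For local regularity, I start by showing the integral defining $h$ converges absolutely for every $x\in\R^2$ and gives a locally bounded function. Splitting the integral over $\{|y-x|<1\}$ and its complement, H\"older's inequality, the uniform local $\Lp^p$ hypothesis on $B$, and $\ln|\cdot|\in\Lp^{p'}_\loc$ bound the small-ball contribution uniformly in $x$, while on the complement the pointwise estimate $|\ln|x-y||\leq \ln(1+|x|)+\ln(1+|y|)$ combined with the weighted $\Lp^1$ hypothesis handles the tail. Differentiation under the integral gives $\nabla h(x)=\tfrac{1}{2\pi}\int\frac{x-y}{|x-y|^2}B(y)\,dy$; the same kind of split together with Young's inequality (using $|\cdot|^{-1}\in \Lp^q$ on bounded sets for every $q<2$) places $\nabla h$ in $\Lp^r_\loc$ with exponent $r=2p/(2-p)$, any $r<\infty$, or $r=\infty$, respectively, in the three cases $1<p<2$, $p=2$, $p>2$. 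The identity $\Delta h=B$ then follows from the fundamental solution relation $\Delta(\tfrac{1}{2\pi}\ln|\cdot|)=\delta_0$.

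Set $\alpha=(2\pi)^{-1}\int_{\R^2}B(y)\,dy$. Since $h$ is locally bounded, $\alpha\ln(1+|x|/R)$ is bounded on compact sets, and $\ln(1+|x|/R)-\ln|x|$ is bounded on $\{|x|\geq\max(3R,1)\}$, proving \eqref{h-bounds} reduces to showing
\[
h(x)-\alpha\ln|x|=\frac{1}{2\pi}\int_{\R^2}B(y)\,\ln\frac{|x-y|}{|x|}\,dy
\]
is uniformly bounded for large $|x|$. I split the $y$-integral into three zones. In the inner zone $|y|\leq|x|/2$ the ratio $|x-y|/|x|$ lies in $[1/2,3/2]$, so the integrand is bounded and the contribution is controlled by $\|B\|_1$. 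In the outer zone $|y|\geq 2|x|$ one has $|x-y|\in[|y|/2,3|y|/2]$, giving $\ln(|x-y|/|x|)=\ln(|y|/|x|)+O(1)$ with $|y|/|x|\geq 2$, so the contribution is bounded by $\int|B(y)|\ln_+|y|\,dy+O(\|B\|_1)$ via the weighted $\Lp^1$ hypothesis.

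The main obstacle is the intermediate zone $|x|/2<|y|<2|x|$: on its subset $|y-x|\geq 1$ one only has $\ln(|x-y|/|x|)\in[-\ln|x|,\ln 3]$, which is not pointwise bounded. I would tame the $-\ln|x|$ factor by the tail estimate
\[
\int_{|y|>|x|/2}|B(y)|\,dy\leq \frac{1}{1+\ln_+(|x|/2)}\int_{\R^2}|B(y)|(1+\ln_+|y|)\,dy,
\]
whose product with $\ln|x|$ remains bounded as $|x|\to\infty$. On the remaining piece $|y-x|<1$ inside this zone, the singular $\ln|x-y|$ is absorbed by the H\"older argument from the first paragraph, while the accompanying factor $\ln|x|\cdot\int_{|y-x|<1,\,|y|>|x|/2}|B|\,dy$ is tamed by the same tail estimate. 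Combining the three zone estimates yields $|h(x)-\alpha\ln|x||\leq C$ for large $|x|$, which together with local boundedness establishes \eqref{h-bounds} with $\alpha$ given by \eqref{eq:alphaflux}; as a consistency check, Lemma \ref{meaningalpha} would automatically identify this $\alpha$ once any such bound is known.
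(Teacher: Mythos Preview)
Your approach is essentially the same as the paper's. Both split off the near-singularity $|y-x|<1$ via H\"older, use the ratio argument on $|y|<|x|/2$, and invoke the tail bound $\int_{|y|>|x|/2}|B|\lesssim (1+\ln_\pp(|x|/2))^{-1}$ to tame stray $\ln|x|$ factors. The paper organizes the remaining region as $\{|y-x|\geq 1,\ |y|\geq|x|/2\}$ and bounds $\ln|x-y|\leq\ln\tfrac32+\ln_\pp|y|$ there directly, which is slightly more economical than your further subdivision into an outer zone $|y|\geq 2|x|$ and an intermediate annulus, but the underlying ideas coincide.

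There is one genuine gap in the regularity step. For $1<p<2$ you claim $\nabla h\in\Lp^r_\loc$ at the \emph{endpoint} $r=2p/(2-p)$ via ordinary Young's inequality, using $|\cdot|^{-1}\in\Lp^q(B(0,1))$ for $q<2$. But Young then only yields exponents $r$ with $\tfrac1r=\tfrac1p+\tfrac1q-1>\tfrac1p-\tfrac12$, hence strictly $r<2p/(2-p)$; the endpoint would require $q=2$, and $|z|^{-1}\notin\Lp^2(B(0,1))$ in $\R^2$. The paper reaches the endpoint via the weak Young (Hardy--Littlewood--Sobolev) inequality, using that $|z|^{-1}$ lies in weak-$\Lp^2(\R^2)$. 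Your argument is fine as written for $p=2$ (any $r<\infty$) and for $p>2$ (H\"older gives $r=\infty$).
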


\begin{proof}
	The facts that $h\in W^{1,1}_{\rm loc}(\R^2)$ with
	$$
	\nabla h(x) = \frac1{2\pi} \int_{\R^2} B(y) \,\frac{x-y}{|x-y|^2}\,dy
	$$
	and that $\Delta h=B$ in the sense of distributions are in \cite[Theorem 6.21]{LiLo}. For a given $a\in\R^2$ we decompose the integral giving $\nabla h$ into the outside of $B(a,1)$ and its inside. The part from the outside is bounded in $B(a,\frac12)$ since $B\in\Lp^1(\R^2)$. The part from the inside defines a function in $\Lp^r(\R^2)$ with $r=\frac{2p}{2-p}$ if $p<2$ by the weak Young inequality \cite[Theorem 4.3]{LiLo} and the fact that $B\in\Lp^p(B(a,1))$. This also implies the assertion for $p=2$. The fact that the part from the inside is bounded for $p>2$ follows from H\"older's inequality.
	
	It remains to show that $x\mapsto h(x) - \alpha \ln(1+|x|)$ is bounded. We begin by noting that, with $\frac1p+\frac1{p'}=1$,
	$$
	\left| \int_{|y-x|<1} B(y) \ln|x-y|\,dy \right| \leq \left( \int_{|y-x|<1} |B(y)|^p\,dy \right)^\frac 1p \left( \int_{|z|<1} |\ln|z||^{p'}\,dz \right)^\frac1{p'}. 
	$$
	By assumption the right side is bounded with respect to $x$. Next, we bound
	$$
	\left| \int_{|y|<\frac12 |x|} B(y)\ln|x-y|\,dy - \ln |x| \int_{|y|<\frac12|x|} B(y)\,dy \right| \leq \int_{|y|<\frac12|x|} |B(y)| \ln \frac{|x-y|}{|x|} \,dy \,.
	$$
	For $|y|<\frac12|x|$ we have $\frac12<\frac{|x-y|}{|x|}<\frac32$, so the integral on the right side is bounded with respect to $x$. We also note that if $|x|>2$, then
	$$
	\left|  \int_{|y|<\frac12|x|} B(y)\,dy - 2\pi\alpha \right| \leq \int_{|y|\geq\frac12|x|} |B(y)|\,dy \leq \frac{1}{\ln(|x|/2)} \int_{\R^2} |B(y)|\ln_\pp|y|\,dy \,.
	$$
	This, when multiplied by $\ln|x|$ is bounded for $|x|\geq e$, say. Finally, we note that when $|x-y|\geq 1$ and $|y|\geq\frac12|x|$, then $0\leq \ln|x-y| \leq \ln(|x|+|y|)\leq\ln(\frac32|y|) = \ln\frac32 + \ln_\pp|y|$. Therefore
	$$
	\left| \int_{|y-x|\geq 1 \,,\ |y|\geq\frac12|x|} B(y)\ln|x-y|\,dy \right| \leq \int_{\R^2} (\ln\tfrac32 + \ln_\pp|y|) |B(y)|\,dy \,.
	$$
	To summarize, we have shown that $h(x) - \alpha\ln|x|$ is bounded for $|x|\geq e$. Since $\alpha\ln|x| - \alpha\ln(1+|x|)$ is bounded for $|x|\geq e$ as well, we have proved the assertion for $|x|\geq e$. The boundedness for $|x|<e$ proceeds along the same lines and we omit the details.	
\end{proof}

%%%%%%%%%%%%%%%%%%%%%%%%%%%%%%%%%%%%%%%%%%%%%%%%%%%%%%%%%%%%%%%%%%%%%%%%%%

\section{Failure of a one-term bound on $\inf\spec(\mathfrak h^-+v)$}
\label{sec-one-term}

In this section we discuss the optimality of the bound in Proposition \ref{prop-E-1}. More precisely, we shall show that for any given $0<\alpha<1$, neither
\begin{equation}
	\label{eq:apponeterm1}
	\left( \inf\spec \left( \mathfrak h^\m + v \right) \right)_\m^\alpha 
	\leq C_\alpha \, \int_0^\infty v(r)_\m^{1+\alpha} \, r \,dr
\end{equation}
nor
\begin{equation}
	\label{eq:apponeterm2}
	\left( \inf\spec \left( \mathfrak h^\m + v \right) \right)_\m^\alpha 
	\leq C_\alpha \, \int_0^\infty v(r)_\m \,  (1+r)^{-2\alpha}\, r\, dr
\end{equation}
can hold with some constant $C_\alpha$ for all real $v\in\Lp^1_{\rm loc}(\R_\pp)$. We recall that the operator $\mathfrak h^\m$ is defined in Subsection \ref{sec:redux2}.

This failure of \eqref{eq:apponeterm1} and \eqref{eq:apponeterm2} is in contrast to what happens in other related situations, for instance in \cite[Lemma 1]{KoVuWe}, where the `weak coupling term' is able to control the lowest eigenvalue uniformly. In view of this failure it is perhaps less surprising that the proof of Theorem \ref{thm-main-3} in the case $\gamma=\alpha$ is rather involved.

To show the claimed failure we apply the duality argument of Lieb and Thirring \cite{lt2} (see also \cite[Proposition 5.3]{flw-book}). This shows that \eqref{eq:apponeterm1} and \eqref{eq:apponeterm2} are equivalent to the Sobolev interpolation inequalities
\begin{align}
	\label{eq:apponeterms1}
	& \left( \int_0^\infty (1+r)^{-2\alpha} |\phi'(r)|^2\, r\,dr \right)^\frac1{1+\alpha} \left( \int_0^\infty (1+r)^{-2\alpha} |\phi(r)|^2\, r\,dr \right)^\frac\alpha{1+\alpha} \notag \\
	& \geq S_\alpha \left( \int_0^\infty (1+r)^{-2(1+\alpha)} |\phi(r)|^\frac{2(1+\alpha)}{\alpha} \,r\,dr \right)^\frac{\alpha}{1+\alpha}
\end{align}
and
\begin{equation}
	\label{eq:apponeterms2}
	\left( \int_0^\infty (1+r)^{-2\alpha} |\phi'(r)|^2\, r\,dr \right)^{1-\alpha} \left( \int_0^\infty (1+r)^{-2\alpha} |\phi(r)|^2\, r\,dr \right)^\alpha \geq S_\alpha \, \esssup_{\R_\pp} |\phi|^2 \,,
\end{equation}
respectively, with some $S_\alpha>0$ and for all locally absolutely continuous functions $\phi$ on $\R_\pp$ for which the two integrals on the left sides are finite. We shall show that neither \eqref{eq:apponeterms1} nor \eqref{eq:apponeterms2} holds.

To see that \eqref{eq:apponeterms1} fails, consider $\phi(r) = v(\epsilon r)$ with a fixed nice function $v$ and a parameter $\epsilon\ll 1$. As $\epsilon\to 0$, we compute
\begin{align*}
	\int_0^\infty |\phi'(r)|^2 \frac{r\,dr}{(1+r)^{2\alpha}} & = \epsilon^{2\alpha} \int_0^\infty |v'(s)|^2 \frac{s\,ds}{(\epsilon+s)^{2\alpha}} \sim \epsilon^{2\alpha} \int_0^\infty |v'(s)|^2 s^{1-2\alpha}\,ds \,,\\
	\int_0^\infty |\phi(r)|^2 \frac{r\,dr}{(1+r)^{2\alpha}} & = \epsilon^{-2(1-\alpha)} \int_0^\infty |v(s)|^2 \frac{s\,ds}{(\epsilon+s)^{2\alpha}} \sim \epsilon^{-2(1-\alpha)} \int_0^\infty |v(s)|^2 s^{1-2\alpha}\,ds \,,\\
	\int_0^\infty |\phi(r)|^\frac{2(1+\alpha)}{\alpha} \frac{r\,dr}{(1+r)^{2(1+\alpha)}} & = \epsilon^{2\alpha} \int_0^\infty |v(s)|^\frac{2(1+\alpha)}{\alpha} \frac{s\,ds}{(\epsilon+s)^{2(1+\alpha)}} \to \const |v(0)|^\frac{2(1+\alpha)}{\alpha} \,,
\end{align*}
where the constant on the right side of the final relation is positive. (In fact, it is equal to $\int_0^\infty (1+t)^{-2(1+\alpha)}\, t\,dt$.) Thus, the left side of \eqref{eq:apponeterms1} behaves like $\epsilon^{2\alpha^2/(1+\alpha)}$, while the right side remains positive (if $v(0)\neq 0$). Since $\alpha>0$, we arrive at a contradiction.

To see that \eqref{eq:apponeterms2} is fails, consider $\phi(r) = v(Mr)$ with a fixed, nice function $v$ and a parameter $M\gg 1$. As $M\to\infty$, we compute
\begin{align*}
	\int_0^\infty |\phi'(r)|^2 \frac{r\,dr}{(1+r)^{2\alpha}} & = \int_0^\infty |v'(s)|^2 \frac{s\,ds}{(1+s/M)^{2\alpha}} \to \int_0^\infty |v'(s)|^2 s\,ds \,, \\
	\int_0^\infty |\phi(r)|^2 \frac{r\,dr}{(1+r)^{2\alpha}} & = M^{-2} \int_0^\infty |v(s)|^2 \frac{s\,ds}{(1+s/M)^{2\alpha}} \sim M^{-2} \int_0^\infty |v(s)|^2 s\,ds \,, \\
	\esssup_{\R_\pp} |\phi|^2 & = \esssup_{\R_\pp} |v|^2 \,.
\end{align*}
Thus, the left side of \eqref{eq:apponeterms2} behaves like $M^{-2\alpha}$, while the right side is independent of $M$. Since $\alpha>0$, we arrive at a contradiction.

%%%%%%%%%%%%%%%%%%%%%%%%%%%%%%%%%%%%%%%
%%%%%%%%%%%%%%%%%%%%%%%%%%%%%%%%%%%%%%%

\bibliographystyle{amsalpha}

\begin{thebibliography}{40}

\bibitem{as}  M. Abramowitz, I. Stegun, {\it Handbook of mathematical
functions}. National Bureau of Standards, 1964.

\bibitem{ac} Y. Aharonov, A. Casher: Ground state of a spin$-\frac 12$ charged particle in a two-dimensional magnetic field. {\em Phys. Rev. A} {\bf 19(6)}  (1979), 2461--2462.

\bibitem{al} M.~Aizenman, E.~H.~Lieb: On semiclassical bounds for eigenvalues of
Schr\"odinger operators. {\em Phys. Lett. A} {\bf 66} (1978), 427--429.

%\bibitem{bcez} F. Bentosela, R.M. Cavalcanti, P. Exner, V.A. Zagrebnov: Anomalous electron trapping by localized magnetic fields.  {\em J. Phys. A} {\bf 32} (1999) 3029--3041.

\bibitem{BlFo} G. A. Bley, S. Fournais: Hardy--Lieb--Thirring inequalities for fractional Pauli operators. \textit{Comm. Math. Phys.} \textbf{365} (2019), no. 2, 651--683. 
 
\bibitem{BrKo} J. Breuer, H. Kova\v r\'{\i}k: Resonances at the threshold for Pauli operators in dimension two. \textit{Ann. Henri Poincar\'e} (2024), to appear.

 \bibitem{cfks} H. L. Cycon, F. G. Froese, W. Kirsch, B. Simon: {\em Schr\"odinger operators with application to quantum mechanics and global geometry}. Springer Verlag, 1987. 
 
\bibitem{ek} Y.~Egorov, V.~Kondratiev: {\it On spectral theory of elliptic operators}. Birkh\"auser Verlag 1996.
 
\bibitem{EkFr0} T.~Ekholm, R.~L.~Frank: On Lieb--Thirring inequalities for Schr\"odinger operators with virtual level. \textit{Comm. Math. Phys.} \textbf{264} (2006), no. 3, 725--740. 
 
 \bibitem{ef} T.~Ekholm, R.~L.~Frank: Lieb-Thirring inequalities on the half-line with critical exponent. {\em J. Eur. Math. Soc.} {\bf 10} (2008), 739--755. 
 
\bibitem{EkFrKo} T. Ekholm, R. L. Frank, H. Kova\v r\'{\i}k: Eigenvalue estimates for Schr\"odinger operators on metric trees. \textit{Adv. Math.} \textbf{226} (2011), no. 6, 5165--5197.
  
\bibitem{Er} L. Erd\H{o}s: Magnetic Lieb--Thirring inequalities. {\em Comm. Math. Phys.} {\bf 170} (1995), no. 3, 629--668.

\bibitem{Er1} L. Erd\H{o}s: Recent developments in quantum mechanics with magnetic fields. \textit{Proc. Sympos. Pure Math.}, 76, Part 1, Amer. Math. Soc., Providence, RI, 2007, 401--428.
 
 \bibitem{es} L. Erd\H{o}s, J. P. Solovej: Semiclassical eigenvalue estimates for the Pauli operator with strong non-homogeneous magnetic fields. {\em Comm.~Math.~Phys.} {\bf 188} (1997), 599--656.

\bibitem{ErVo} L. Erd\H{o}s, V. Vougalter: Pauli operator and Aharonov--Casher theorem for measure valued magnetic fields. {\em Comm. Math. Phys.} {\bf 225} (2002), no. 2, 399--421.

\bibitem{FaKo} L. Fanelli, H. Kova\v r\'{\i}k: Quantitative Hardy inequality for magnetic Hamiltonians. Preprint (2024),	arXiv:2401.09963.

\bibitem{Fr} R.~L.~Frank: Remarks on eigenvalue estimates and semigroup domination.
{\em Contemp. Math.}, 500, Amer. Math. Soc., Providence, RI, 2009, 63--86.

\bibitem{Fr1} R.~L.~Frank: A simple proof of Hardy--Lieb--Thirring inequalities. \textit{Comm. Math. Phys.} \textbf{290} (2009), no. 2, 789--800.

\bibitem{Fr2} R.~L.~Frank: The Lieb--Thirring inequalities: recent results and open problems. {\em Proc. Sympos. Pure Math.}, 104. Amer. Math. Soc., Providence, RI, 2021, 45--86.

\bibitem{FrKo} R. L. Frank, H. Kova\v r\'{\i}k: Heat kernels of metric trees and applications. \textit{SIAM J. Math. Anal.} \textbf{45} (2013), no. 3, 1027--1046.

\bibitem{flw-book} R.~L.~Frank, A.~Laptev, T.~Weidl:  {\it Schr\"odinger operators: eigenvalues and Lieb--Thirring inequalities}.  Cambridge Studies in Advanced Mathematics 200, Cambridge University Press, Cambridge, 2023.

\bibitem{flw} R.~L.~Frank, A.~Laptev, T.~Weidl:  An improved one-dimensional Hardy inequality. {\em J. Math. Sci.} {\bf 268} (2022), 323--342.

\bibitem{FrLiSe} R. L. Frank, E. H. Lieb, R. Seiringer: Hardy--Lieb--Thirring inequalities for fractional Schr\"odinger operators. \textit{J. Amer. Math. Soc.} \textbf{21} (2008), no. 4, 925--950.

\bibitem{fmv} R.~L.~Frank, S.~Morozov, S.~Vugalter:
Weakly coupled bound states of Pauli operators, {\em Calc. Var. Partial Differential Equations} {\bf 40} (2011), 253--271.

\bibitem{FrSiWe} R.~L.~Frank, B. Simon, T. Weidl: Eigenvalue bounds for perturbations of Schrödinger operators and Jacobi matrices with regular ground states, {\em Comm. Math. Phys.} {\bf 282} (2008), no. 1, 199--208.

\bibitem{Gr} A. Grigor'yan: Heat kernels on weighted manifolds and applications. In: The Ubiquitous Heat Kernel, \textit{Contemp. Math.}, 398. Amer. Math. Soc., Providence, RI, 2006, 93--191.

\bibitem{gsc} A.~Grigor'yan, L. Saloff-Coste: Stability results for Harnack inequalities. {\em Ann. Inst. Fourier} {\bf 55} (2005), 825--890.

\bibitem{Ko} H.~Kova\v r\'{\i}k: Weakly coupled Schr\"odinger operators on regular metric trees. \textit{SIAM J. Math. Anal.} \textbf{39} (2007), 1135--1149.

\bibitem{kov}  H.~Kova\v r\'{\i}k: Spectral properties and time decay of the wave functions of Pauli and Dirac operators in dimension two.
    {\em Adv. Math.} {\bf 398} (2022), 108244.

\bibitem{KoVuWe} H.~Kova\v r\'{\i}k, S. Vugalter, T. Weidl: Spectral estimates for two-dimensional Schr\"odinger operators with application to quantum layers. \textit{Comm. Math. Phys.} \textbf{275} (2007), no. 3, 827--838.

\bibitem{lieb} E.~Lieb: Bound states of the Laplace and Schr\"odinger operators.
{\em Bull.~Amer.~Math.~Soc.} {\bf 82} (1976), 751--753.

\bibitem{LiLo} E. H. Lieb, M. Loss: {\it Analysis}. Second edition. Graduate Studies in Mathematics 14, Amer. Math. Soc., Providence, RI, 2010.

\bibitem{LiSoYn} E. H. Lieb, J. P. Solovej, J. Yngvason: Ground states of large quantum dots in magnetic fields. {\em Phys. Rev. B} {\bf 51} (1995), 10646--10665.

%\bibitem{lt1} E.H.~Lieb, W.~E.~Thirring: Bound on kinetic energy of fermions which proves stability of matter. {\em Phys. Rev. Lett.} {\bf 35} (1975) 687--689. 

\bibitem{lt2} E.~H.~Lieb, W.~E.~Thirring: Inequalities for the moments of the eigenvalues of the Schr\"odinger Hamiltonian
and their relation to Sobolev inequalities. Studies in Mathematical Physics, Princeton University Press, 1976, 269--303.

\bibitem{muck} B.~Muckenhoupt: Hardy’s inequality with weights. {\em Stud. Math.} {\bf 44} (1972), 31--38.

%\bibitem{ouh} E.M.~Ouhabaz: {\em Analysis of Heat Equations on Domains}, Princeton University Press, Princeton 2005.

\bibitem{Pi} Y. Pinchover: Topics in the theory of positive solutions of second-order elliptic and parabolic partial differential equations. \textit{Proc. Sympos. Pure Math.}, 76, Part 1, Amer. Math. Soc., Providence, RI, 2007, 329--355.

\bibitem{SC0} L. Saloff-Coste: \textit{Aspects of Sobolev-type inequalities}. London Math. Soc. Lecture Note Ser. 289, Cambridge University Press, Cambridge, 2002. 

\bibitem{sc}
L. Saloff-Coste: The heat kernel and its estimates, \textit{Advanced Studies in Mathematics} (2009).

\bibitem{sob} A.~Sobolev: On the Lieb--Thirring estimates for the Pauli operator. {\em Duke Math.~J.} {\bf 82} (1996), 607--635.

\bibitem{tom} G. Tomaselli: A class of inequalities. {\em Boll. Unione Mat. Ital., IV} {\bf 2} (1969), 622--663.

\bibitem{weidl} T. Weidl:  Remarks on virtual bound states for semi-bounded
operators. {\em Comm.~PDE.} {\bf 124} (1999), 25--60.

%\bibitem{weidl2} T. Weidl: A remark on Hardy type inequalities for critical Schr\"odinger operators with magnetic fields. {\em Op. Theory: Adv. and Appl.} {\bf 110} (1999) 247--254.

\end{thebibliography}

\end{document}